%% file: main.tex
\PassOptionsToPackage{dvipsnames,table}{xcolor}
\documentclass[11pt,letterpaper]{article}
\input{mancro}
\graphicspath{{Draft/}{./}}

\newtheorem{theorem}{Theorem}[section]
\newtheorem{lemma}[theorem]{Lemma}
\newtheorem{corollary}[theorem]{Corollary}
\newtheorem{proposition}[theorem]{Proposition}
\theoremstyle{remark}
\newtheorem{remark}[theorem]{Remark}
\numberwithin{equation}{section}
\DeclareMathOperator{\Tr}{Tr}
\DeclareMathOperator{\rank}{rank}
\DeclareMathOperator{\dist}{dist}
\newcommand{\DD}{\mathcal D}
\newcommand{\ZZ}{\mathcal Z}
\newcommand{\joint}{\Psi}
\newcommand{\jointstar}{\Psi^*}
\newcommand{\gradient}{F}
\newcommand{\Nzero}{\mathbb N_0}
\newcommand{\norm}[1]{\lVert #1\rVert}
\renewcommand{\O}{\mathcal O}

\title{Average-and Last-Iterate Lower Bounds for Optimistic Matrix Mirror-Prox in Quantum Zero-Sum Games}
\author{Yiheng Su\,\thanks{Department of Computer Sciences, University of Wisconsin-Madison, email: \texttt{su228@wisc.edu}} \and Emmanouil-Vasileios Vlatakis-Gkaragkounis\,\thanks{Department of Computer Sciences, University of Wisconsin-Madison, email: \texttt{vlatakis@wisc.edu}}\and  Pucheng Xiong\,\thanks{Department of Computer Sciences, University of Wisconsin-Madison, email: \texttt{pxiong79@wisc.edu}}}
\date{}

\begin{document}
\maketitle
\begin{abstract}
Optimistic matrix mirror-prox (OMMP) computes $\varepsilon$-approximate Nash equilibria in
quantum zero-sum games with an $\mathcal{O}(1/\varepsilon)$ average-iterate
guarantee \cite{VasconcelosEtAl2025}. However, it is natural to ask whether this dependence on accuracy is tight, and whether geometric last-iterate convergence can be guaranteed. We study these questions through explicit games with
one qubit per player. First, we prove an $\Omega(1/\varepsilon)$ lower bound
for the uniform-average output that includes the maximally mixed initial
state, independently of the regularizer and step size. Second, we construct
a fixed game on which optimistic gradient descent--ascent (OGDA), initialized at the maximally mixed state, has last-iterate Frobenius distance to equilibrium \(\Theta(1/t)\) and duality gap \(\Theta(1/t^{3})\) for every sufficiently small fixed step size. A separate fixed game exhibits arbitrarily long delays in reducing the initial error by a constant factor across a family of initial states.
Finally,
we give a fixed game with a unique, strictly complementary equilibrium on
which optimistic matrix multiplicative weights updates (OMMWU) converge only polynomially (at the last iterate) from the maximally mixed state for every
fixed positive step size. 
The Frobenius distance and quantum relative entropy from the equilibrium to the iterates decay as \(\Theta(1/t)\), while the duality gap decays as \(\Theta(1/t^{2})\).
% The distance and quantum relative entropy decay as
% $\Theta(1/t)$, while the duality gap decays as $\Theta(1/t^{2})$.
\end{abstract}

{\small \tableofcontents}
\section{Introduction}
\label{sec:introduction}

Quantum zero-sum games extend classical matrix games by allowing each
player to choose a density matrix. The players send their states to a
referee, whose joint measurement determines opposite payoffs. The resulting
expected payoff is bilinear in the two states, and a Nash equilibrium is
a saddle point over their strategy spaces. These spaces are
\emph{spectraplexes}: sets of positive semidefinite matrices with unit trace.
Although this formulation retains the convexity of classical zero-sum
games, it changes the geometry of learning. A quantum state can change both
its eigenvalues and its eigenspaces, and the boundary contains a continuum
of pure states. Understanding how these features affect equilibrium
computation is the focus of this paper.

\subsection{Prior works and a motivating question}

Jain and Watrous~\cite{JainWatrous2009} developed a matrix
multiplicative weights method for approximating equilibria in
non-interactive quantum zero-sum games. For normalized payoffs and
$d$ qubits per player, its iteration complexity is
$\mathcal{O}(d/\varepsilon^2)$. Classical saddle-point optimization
suggests a route to improving this dependence. Nemirovski's
mirror-prox method~\cite{Nemirovski2004} achieves an
$\mathcal{O}(1/T)$ averaged gap for Lipschitz monotone problems,
while optimistic mirror descent uses past gradients to predict
future feedback~\cite{RakhlinSridharan2013}.
Vasconcelos et al.~\cite{VasconcelosEtAl2025} brought this approach
to quantum games through the optimistic matrix mirror-prox (OMMP)
framework. Its instantiation with negative von Neumann entropy,
optimistic matrix multiplicative weights updates (OMMWU), computes
an averaged $\varepsilon$-Nash equilibrium in
$\mathcal{O}(d/\varepsilon)$ iterations, using one new gradient
evaluation per round after initialization. They conjectured that
the $1/\varepsilon$ dependence is tight for OMMP.

The sharpness of an averaged guarantee is separate from the
convergence rate of the strategies actually played. This distinction
is particularly important because optimism can yield geometric
last-iterate convergence in classical matrix games.
Daskalakis and Panageas~\cite{DaskalakisPanageas2019} established
last-iterate convergence of optimistic multiplicative weights
updates (OMWU) under a unique equilibrium assumption.
Wei et al.~\cite{WeiEtAl2021} subsequently proved geometric
convergence under the same assumption with a constant step size,
giving an $\mathcal{O}(\log(1/\varepsilon))$ iteration bound.
They also proved geometric convergence of optimistic gradient
descent--ascent (OGDA) for bilinear games over polytopes, without
requiring uniqueness. Their analysis uses saddle-point metric
subregularity, an error-bound condition relating a measure of
first-order optimality to distance from the equilibrium set.
However, the same work gives a curved-domain example where OGDA
converges only polynomially. Thus, extending the classical
last-iterate guarantees requires understanding the geometry
of the strategy space.

To formulate such an extension precisely, one must also distinguish
rates over every game from rates for each fixed game.
Cai et al.~\cite{CaiEtAl2024} constructed families of classical
two-action games on which OMWU and a broader class of optimistic
follow-the-regularized-leader methods have a constant duality gap
at arbitrarily late iterations. These lower bounds remain consistent
with geometric convergence whose constants depend on the fixed game.
Su et al.~\cite{SuEtAl2026} transferred these examples to diagonal
quantum games. In the same work, they pursued geometric convergence
through semidefinite error bounds and conjectured geometric
last-iterate convergence of unregularized OMMWU whenever the Nash
equilibrium is unique. The diagonal lower bounds do not settle this conjecture.

Beyond diagonal embeddings, matrix strategy spaces introduce
additional geometric features. Ickstadt et al.~
\cite[Examples~3.5--3.6]{ITTV25} give $2\times 2$ semidefinite
games illustrating the roles of off-diagonal entries, rank-one
strategies, and strict complementarity. In particular, their
Example~3.6 exhibits a continuum of strictly complementary pure
equilibria. Strict complementarity therefore does not by itself
ensure equilibrium isolation, and assumptions on equilibrium
uniqueness must be distinguished from assumptions on the
associated slack matrices. These examples clarify the equilibrium
geometry without determining how quickly learning dynamics
approach equilibrium.

A further issue is that convergence of matrix strategies involves
both eigenvalues and eigenvectors. For continuous-time matrix
exponential dynamics, Lotidis et al.~\cite{LMB23,LMB23cdc}
decompose the induced state dynamics into a commutative component
governing the eigenvalues and a non-commutative component governing
the eigenvectors. The former evolves like mixed strategies under
classical FTRL, whereas the latter has no classical counterpart.
This decomposition identifies an additional aspect of convergence
that is absent from diagonal games: concentration of the eigenvalues
does not by itself control alignment with the equilibrium support.
Since these results concern non-optimistic continuous-time dynamics,
they leave open how optimism affects this eigenvector motion in
discrete time.

The literature therefore points to two potential limitations.
The uniform-average output may retain an accuracy barrier even
when the individual iterates converge rapidly. Separately, boundary
curvature and eigenvector motion may obstruct the extension of
classical last-iterate guarantees to matrix strategies. This leads
to the following questions:
\begin{quote}
\emph{Is the $\mathcal{O}(1/\varepsilon)$ dependence unavoidable
for the uniform-average output of OMMP that includes the initial
iterate? For each fixed quantum zero-sum game, do OGDA and OMMWU
inherit the geometric last-iterate convergence of their classical
counterparts, with uniqueness assumed for OMMWU?}
\end{quote}
We study these questions separately, distinguishing the output rule
from the underlying dynamics. For last-iterate convergence, we also
distinguish distance to equilibrium from the duality gap, which
measures the sum of the players' unilateral improvement opportunities.
Near a boundary equilibrium, these errors can decay at different
polynomial rates.

\subsection{Our contributions}

We give explicit, normalized games with one qubit per player. Every game
is fixed independently of the target accuracy. We study the two-projection
optimistic mirror-descent formulation of OGDA and the corresponding
entropy-based OMMWU updates. For OMMWU, entropy defines the mirror step,
while the target remains a Nash equilibrium of the original payoff.
All last-iterate statements concern the played strategy pair. The following
three results describe the limitations
of averaging and of these two choices of regularizer.

\paragraph{Tightness for uniform averaging.}
We construct a diagonal payoff observable with a unique pure equilibrium
such that every feasible sequence starting at the maximally mixed state
satisfies $G(\overline\Psi_T)\geq 1/T$, where $G$ is the duality gap and
$\overline\Psi_T= \tfrac{1}{T} \sum_{t=0}^{T-1}\Psi_t$. Consequently, this output needs
$\Omega(1/\varepsilon)$ iterations to be an $\varepsilon$-Nash equilibrium.
The bound applies independently of how the iterates are generated, and
hence independently of the OMMP regularizer and step size. It establishes
the conjectured accuracy dependence for the uniform-average return rule
in Algorithm~5 of Vasconcelos et al.~\cite{VasconcelosEtAl2025}, which
retains the initial state. The construction embeds a classical two-action
game; its obstruction is the averaging rule itself.

\paragraph{Polynomial last-iterate convergence of OGDA.}
We give a fixed game with a unique pure equilibrium $\Psi^*$ such that,
for every fixed step size $0<\eta\leq 1/8$, OGDA from
$\Psi_0=(I/2,I/2)$ satisfies
\begin{equation}
\operatorname{dist}_F(\Psi_t,\Psi^*)
\sim \frac{4}{\eta t},
\qquad
G(\Psi_t)\sim \frac{4}{\eta^3t^3}.
\label{eq:intro-ogda-rates}
\end{equation}
Thus the first iteration reaching distance $\varepsilon$ has order
$\varepsilon^{-1}$, whereas reaching gap $\varepsilon$ takes order
$\varepsilon^{-1/3}$. This extends the curved-domain construction of
Wei et al.~\cite[Theorem~9]{WeiEtAl2021} to full qubit strategy spaces and
establishes exact asymptotic rates from maximally mixed initialization.
A separate fixed game exhibits arbitrarily long delays across a family
of initial states, yielding worst-case iteration lower bounds over initializations of \(\Omega(1 / \sqrt{\varepsilon})\) for the gap and \(\Omega(1 / \varepsilon)\) for the distance.
Each fixed trajectory in that family still admits a
geometric bound, and the same game reaches equilibrium in finitely many
steps from the maximally mixed state. These additional results distinguish
a failure of uniformity from polynomial decay along one trajectory.

\paragraph{Polynomial OMMWU convergence despite strict complementarity.}
We construct a fixed game with a unique pure equilibrium $\Psi^*$
that is strictly complementary: each player's equilibrium slack is
positive definite on the orthogonal complement of the equilibrium
support. For every fixed $\eta>0$, unregularized OMMWU from
$\Psi_0=(I/2,I/2)$ satisfies
\begin{equation}
\operatorname{dist}_F(\Psi_t,\Psi^*)=\Theta(1/t),
\qquad
G(\Psi_t)=\Theta(1/t^{2}),
\qquad
S(\Psi^*\|\Psi_t)=\Theta(1/t),
\label{eq:intro-ommwu-rates}
\end{equation}
where $S$ denotes the sum of the two quantum relative entropies, directed
from the equilibrium states to the iterates. We determine the leading
constants and prove positive lower bounds at every iteration. The
corresponding first hitting times have orders $\varepsilon^{-1}$,
$\varepsilon^{-1/2}$, and $\varepsilon^{-1}$, respectively. Since the same game works for every fixed positive step size, choosing a smaller constant step does not restore geometric convergence.

These last-iterate results concern the unregularized game and the specified
updates. The $\Omega(1/\varepsilon)$ Nash-approximation bound comes from
uniform averaging; the fixed-initialization last-iterate examples have the
different gap complexities stated above. Thus the results identify
limitations of particular outputs and dynamics, rather than a universal
lower bound for all equilibrium algorithms. Numerical experiments in the
paper confirm the predicted powers and leading constants.

\subsection{Our techniques}

\paragraph{An affine gap preserves the initial error.}
The averaging construction has fixed, unique best responses for both
players. As a result, its duality gap is affine in the strategy pair.
The gap of a uniform average therefore equals the average of the gaps.
The initial gap is one, and all subsequent gaps are nonnegative, giving
$G(\overline\Psi_T)\geq 1/T$. Even immediate convergence of every later iterate
would leave exactly this contribution. This identity makes the proof
independent of the update rule.

\paragraph{Boundary curvature yields a quadratic recurrence.}
For OGDA, we use the Bloch representation, which identifies qubit density
matrices with the three-dimensional unit ball and turns Frobenius
projection into Euclidean projection of Bloch vectors. The payoff in our
fixed-initialization example creates off-diagonal entries at the first
step and keeps both players on the same two-dimensional trajectory.
After finitely many steps, the played and auxiliary states lie on the
boundary. Near equilibrium, the component of the update tangent to this
boundary is quadratic in the current error. Tracking both projections
gives a scalar recurrence
$u_{t+1}=u_t-(\eta/4)u_t^2+\mathcal{O}(u_t^3)$ for a positive auxiliary
coordinate $u_t$. Its reciprocal grows asymptotically by $\eta/4$ per
round, yielding the $1/t$ distance law. An exact best-response calculation
shows that the gap is cubic in this distance along the trajectory. The technique described above comes from quantizing the example in \cite[Theorem 9]{WeiEtAl2021}.

The initialization-dependent example isolates another consequence of
the same geometry. A pure state can have population $r$ outside the
equilibrium support and an off-diagonal entry of order $\sqrt r$.
The opponent initially supplies no correcting gradient and moves only
slowly. Projection then reduces the off-diagonal entry by a factor close
to one, producing a delay of at least order $1/(\eta\sqrt r)$. Controlling this
cumulative contraction proves these worst-case iteration lower bounds.

\paragraph{Eigenvalue concentration does not ensure eigenvector alignment.}
For OMMWU, we analyze the Hermitian \emph{score matrix} inside the
normalized matrix exponential. The identity
$\log(e^H/\operatorname{Tr}[e^H])=H-\log\operatorname{Tr}[e^H]I$
allows us to unroll the matrix updates exactly, including the optimistic
correction. Bob approaches his equilibrium state exponentially quickly.
His earlier deviations nevertheless leave a nonzero accumulated
off-diagonal term in Alice's score. Writing $X$ and $Z$ for the Pauli
matrices with off-diagonal and diagonal entries, respectively, Alice's
state has the form
\begin{equation}
\alpha_t=\frac{e^{H_t}}{\operatorname{Tr}[e^{H_t}]},
\qquad
H_t=\kappa tZ-p_tX,
\qquad
\kappa=\eta/3,
\qquad
p_t\longrightarrow p_\infty>0.
\label{eq:intro-score}
\end{equation}
The diagonal coefficient grows linearly, so the smaller eigenvalue of
$\alpha_t$ becomes exponentially small. However, exponentiation preserves
eigenvectors. The leading eigenspace of $H_t$ remains tilted away from
the equilibrium support by an angle of order $p_t/(\kappa t)$, and hence
aligns only at rate $1/t$. Exact formulas for the two-dimensional matrix
exponential then yield the three rates
in~\eqref{eq:intro-ommwu-rates}. The relevant conditional payoff operators
do not commute, so this evolution cannot be represented in a single fixed
diagonal basis. The example identifies eigenvector motion as an additional
obstruction that persists despite uniqueness and strict complementarity.

\subsection{Acknowledgment and AI disclosure}
The authors thank one of the anonymous reviewers of QTML 2026, whose comment motivates the lower bound constructions in this paper. The author P . X wants to thank Xiao Wang (Shanghai University of Finance and Economics) for related discussions.  The generative AI models (Fable 5.1 and ChatGPT6-Astra) were used to polish the writing and audit the proofs and arguments. Most of the ideas and mathematical content in this paper are due to the authors, except that the arguments for Theorem \ref{thm:ogda-mm-rates} are developed with the assistance of ChatGPT6-Astra. 

\section{Preliminaries and Notations}
\label{sec:preliminaries}

We study optimistic gradient descent--ascent (OGDA) and optimistic matrix
multiplicative weights updates (OMMWU) on several fixed games with one qubit
per player. These are optimistic matrix mirror-prox (OMMP) methods with the squared
Frobenius norm and negative von Neumann entropy as regularizers, respectively. Throughout, the term “last iterate” refers to the played strategy pair \(\joint_t\), as defined below.

% The OGDA construction
% gives lower bounds uniform over a family of initial states; the OMMWU
% construction gives polynomial convergence along a single trajectory
% from the maximally mixed state. Throughout, a last iterate means the played pair $\joint_t$.

\subsection{Density matrices and Bloch vectors}

Let $\DD_2=\{\rho\in\mathbb C^{2\times2}:\rho=\rho^\dagger,
\ \rho\succeq0,\ \Tr[\rho]=1\}$ and $\ZZ=\DD_2\times\DD_2$.
A density matrix is \emph{pure} if it has rank one, equivalently
$\Tr[\rho^2]=1$; the maximally mixed state is $I/2$.
Here $I$ is the $2\times2$ identity, and $\dagger$ denotes conjugate
transpose. On Hermitian matrices we use
$\langle H,K\rangle=\Tr[HK]$, the Frobenius norm
$\norm{H}_F=\sqrt{\Tr[H^2]}$, and the operator norm
$\norm{H}_{\rm op}=\max_j|\lambda_j(H)|$.
For pairs, $\norm{(H,K)}_F^2=\norm{H}_F^2+\norm{K}_F^2$.

Fix the Pauli matrices and basis projectors
\begin{equation}
X=\begin{pmatrix}0&1\\1&0\end{pmatrix},\quad
Y=\begin{pmatrix}0&-i\\i&0\end{pmatrix},\quad
Z=\begin{pmatrix}1&0\\0&-1\end{pmatrix},\quad
P=\frac{I+Z}{2},\quad Q=\frac{I-Z}{2}.
\label{eq:pauli}
\end{equation}
Thus $P=\operatorname{diag}(1,0)$ and $Q=\operatorname{diag}(0,1)$. Every qubit state has a unique Bloch representation
\begin{equation}
\rho(u,v,w)=\frac12(I+uX+vY+wZ)
=\frac12\begin{pmatrix}1+w&u-iv\\u+iv&1-w\end{pmatrix},
\qquad u^2+v^2+w^2\le1.
\label{eq:bloch}
\end{equation}
The vector $(u,v,w)$ is its Bloch vector; the state is pure exactly
when this vector has norm one. We write $D(b)=\operatorname{diag}(1-b,b)$
for $b\in[0,1]$. The population of a state in $Q$ is $\Tr[Q\rho]$,
and its off-diagonal entry is $\rho_{01}$, with indices $0,1$.
A nonzero off-diagonal entry will also be called coherence in this
fixed basis. The Pauli matrices satisfy $X^2=Z^2=I$ and $XZ+ZX=0$.
For matrices $H,K$, their commutator is $[H,K]=HK-KH$.

\subsection{Games, equilibria, and error measures}

A Hermitian observable $U$ with $\norm{U}_{\rm op}\le1$ defines
$f(\alpha,\beta)=\Tr[U(\alpha\otimes\beta)]$.
Alice minimizes $f$ over $\DD_2$, and Bob maximizes it over $\DD_2$.
The pair $\joint=(\alpha,\beta)\in\ZZ$ is distinct from the
tensor-product state $\alpha\otimes\beta$ used in the payoff.
Let $A(\beta)=\nabla_\alpha f(\alpha,\beta)$ and
$B(\alpha)=\nabla_\beta f(\alpha,\beta)$ be the gradients of this
bilinear function on the ambient real spaces of Hermitian matrices.
The saddle-point operator is $\gradient(\joint)=(A(\beta),-B(\alpha))$.
It is $L$-Lipschitz if
$\norm{\gradient(\joint)-\gradient(\joint')} _F
\le L\norm{\joint-\joint'}_F$ for all $\joint,\joint'\in\ZZ$.

A Nash equilibrium $\jointstar=(\alpha^*,\beta^*)$ satisfies
$f(\alpha^*,\beta)\le f(\alpha^*,\beta^*)\le f(\alpha,\beta^*)$
for every $\alpha,\beta\in\DD_2$. We denote the equilibrium set by
$\ZZ^*$ and its common payoff by $v_*$. At an equilibrium the slack
matrices are $W_A=A(\beta^*)-v_*I\succeq0$ and
$W_B=v_*I-B(\alpha^*)\succeq0$. The equilibrium is
\emph{strictly complementary} if
$\rank(\alpha^*)+\rank(W_A)=2$ and
$\rank(\beta^*)+\rank(W_B)=2$.

The duality gap and distance to equilibrium are
\begin{equation}
G(\alpha,\beta)=\max_{\beta'\in\DD_2}f(\alpha,\beta')
-\min_{\alpha'\in\DD_2}f(\alpha',\beta),
\qquad
\dist_F(\joint,\ZZ^*)=\inf_{\joint'\in\ZZ^*}\norm{\joint-\joint'}_F.
\label{eq:gap-definition}
\end{equation}
We use the convention that an $\varepsilon$-Nash equilibrium has
$G(\joint)\le\varepsilon$. This implies that neither player can
improve by more than $\varepsilon$; conversely, two individual
improvement bounds of $\varepsilon$ imply $G(\joint)\le2\varepsilon$.
When the equilibrium is unique, we abbreviate
$\dist_F(\joint,\{\jointstar\})$ as $\dist_F(\joint,\jointstar)$.

For $\rho\in\DD_2$ and positive definite $\sigma\in\DD_2$, the
quantum relative entropy is
$S(\rho\|\sigma)=\Tr[\rho\log\rho]-\Tr[\rho\log\sigma]$,
where $0\log0=0$ and all logarithms are natural.
For pairs with \(\alpha,\beta\succ0\), $S(\jointstar\|\joint)=S(\alpha^*\|\alpha)+S(\beta^*\|\beta)$.
The von Neumann entropy of $\rho$ is $-\Tr[\rho\log\rho]$.

For a run with initial state $\joint_0$, define the first hitting times
\begin{equation}
\begin{aligned}
\tau_{\rm gap}(\varepsilon;\joint_0)
&=\inf\{t\in\Nzero:G(\joint_t)\le\varepsilon\},\\
\tau_{\rm dist}(\varepsilon;\joint_0)
&=\inf\{t\in\Nzero:\dist_F(\joint_t,\ZZ^*)\le\varepsilon\},\\
\tau_S(\varepsilon;\joint_0)
&=\inf\{t\in\Nzero:S(\jointstar\|\joint_t)\le\varepsilon\},
\end{aligned}
\label{eq:hitting-times}
\end{equation}
with $\inf\varnothing=+\infty$; the last definition is used only for
the unique-equilibrium OMMWU construction. We omit $\joint_0$ when
it is fixed. 
% Asymptotic notation $\O$, $\Omega$, $\Theta$, $o$ has its usual meaning. 
We use standard asymptotic notation. For positive sequences \(a_t\) and \(b_t\), we write \(a_t\sim b_t\) if \(a_t/b_t\to1\) as \(t\to\infty\).
% We also use the asymptotic relation $\sim$ to describe the convergence rate. Specifically, we say $f(t) \sim g(t)$ if and only if $\lim_{t \rightarrow \infty} f(t)/g(t) = 1$. 
% The constants may depend on a fixed step size, but not on $t$ or the target accuracy $\varepsilon$.
Unless stated otherwise, implicit constants may depend on the fixed game, step size, and initialization, but not on the iteration index or target accuracy. Any uniformity over initial states is stated explicitly.
A geometric error estimate has the form $E_t\le C\lambda^t$ with
$C>0$ and $0<\lambda<1$.

\subsection{Optimistic updates}

Fix a step size $\eta>0$. Write the played and auxiliary pairs as
$\joint_t=(\alpha_t,\beta_t)$ and
$\widehat\joint_t=(\widehat\alpha_t,\widehat\beta_t)$, respectively,
and always initialize $\widehat\joint_0=\joint_0$.
All runs below continue without a stopping test. 

We first state the optimistic matrix mirror-prox (OMMP) updates
\cite[Algorithm~5]{VasconcelosEtAl2025}. Let $\varphi$ be a continuous, strongly convex regularizer on $\ZZ$,
differentiable at each auxiliary iterate. Its Bregman divergence is
\begin{equation}
D_\varphi(\joint',\joint)
=\varphi(\joint')-\varphi(\joint)
-\langle\nabla\varphi(\joint),\joint'-\joint\rangle,
\label{eq:ommp-bregman}
\end{equation}
whenever $\nabla\varphi(\joint)$ exists. The inner product on pairs
is the sum of the two Hilbert--Schmidt inner products.
In our minimization/maximization convention, OMMP performs
\begin{equation}
\begin{aligned}
\joint_{t+1}
&=\operatorname*{arg\,min}_{\joint'\in\ZZ}
\left\{\eta\langle\gradient(\joint_t),\joint'\rangle
       +D_\varphi(\joint',\widehat\joint_t)\right\},\\
\widehat\joint_{t+1}
&=\operatorname*{arg\,min}_{\joint'\in\ZZ}
\left\{\eta\langle\gradient(\joint_{t+1}),\joint'\rangle
       +D_\varphi(\joint',\widehat\joint_t)\right\}.
\end{aligned}
\label{eq:ommp-updates}
\end{equation}
The first step uses the previously computed operator value
$\gradient(\joint_t)$ to form the next played pair. Evaluating
$\gradient(\joint_{t+1})$ then provides the direction for updating
the auxiliary pair. Both minimizations use the same reference
point $\widehat\joint_t$. The played sequence $(\joint_t)_{t\ge0}$
is the sequence used in our last-iterate and average-iterate results.

The regularizer determines the geometry of these updates.
For $\varphi(\joint)=\norm{\joint}_F^2/2$, the divergence is
$D_\varphi(\joint',\joint)=\norm{\joint'-\joint}_F^2/2$, and
\eqref{eq:ommp-updates} provides the projected OGDA updates below.
For the negative von Neumann entropy
$\varphi(\alpha,\beta)=\Tr[\alpha\log\alpha]
+\Tr[\beta\log\beta]$, the divergence is the sum of the players'
quantum relative entropies, and \eqref{eq:ommp-updates} gives
OMMWU. In the latter case, we initialize both states as positive
definite, which keeps the auxiliary iterates in the differentiability
domain of $\varphi$.

Let $\Pi_{\ZZ}$ be Frobenius projection onto $\ZZ$, which separates
into projections $\Pi_{\DD_2}$ on its two factors. We use the
Euclidean optimistic mirror-descent formulation of OGDA:
\begin{equation}
\joint_{t+1}=\Pi_{\ZZ}(\widehat\joint_t-\eta\gradient(\joint_t)),
\qquad
\widehat\joint_{t+1}
=\Pi_{\ZZ}(\widehat\joint_t-\eta\gradient(\joint_{t+1})).
\label{eq:ogda-updates}
\end{equation}
Both projections are part of the algorithm; all OGDA results below
refer to this formulation \cite{WeiEtAl2021,SuEtAl2026}.

For Hermitian $H$, let $\Lambda(H)=e^H/\Tr[e^H]$. We use
unregularized OMMWU with updates
\begin{align}
\alpha_{t+1}&=\Lambda(\log\widehat\alpha_t-\eta A(\beta_t)),
&\beta_{t+1}&=\Lambda(\log\widehat\beta_t+\eta B(\alpha_t)),
\label{eq:played-update}\\
\widehat\alpha_{t+1}
&=\Lambda(\log\widehat\alpha_t-\eta A(\beta_{t+1})),
&\widehat\beta_{t+1}
&=\Lambda(\log\widehat\beta_t+\eta B(\alpha_{t+1})).
\label{eq:aux-update}
\end{align}
These are the entropy-based optimistic updates in
\cite[Algorithm~1]{VasconcelosEtAl2025} and
\cite[Algorithm~3]{SuEtAl2026}, expressed in our
minimization/maximization convention and indexed from $t=0$.
The payoff itself has no added entropy term. Positive definite
initial states give positive definite iterates, so the logarithms exist.
We use
$\Lambda(H+cI)=\Lambda(H)$ and
$\log\Lambda(H)=H-\log\Tr[e^H]I$ for real $c$.
After evaluating $\gradient(\joint_0)$, either method
requires one new operator evaluation per round.
% We will use
% $\Lambda(H+cI)=\Lambda(H)$ and
% $\log\Lambda(H)=H-\log\Tr[e^H]I$ for real $c$.
% After the initial evaluation of $\gradient(\joint_0)$, either method
% requires one new operator evaluation per round.

We reuse $f,A,B,\gradient,G,\ZZ^*$ and the iterate notation across constructions; each refers to the game and run currently under consideration. Each payoff observable is introduced separately.

% \subsection{Background on convergence rates}

% For classical zero-sum matrix games with a unique equilibrium,
% optimistic multiplicative weights has geometric last-iterate
% convergence under a suitable step-size condition
% \cite{WeiLeeZhangLuo2021}. This motivates asking whether the same
% conclusion extends to matrix-valued strategies.

% The regularizer can affect the rate near a boundary solution.
% For the scalar objective $s^2/2$ on $[0,\infty)$ and a positive
% initial value, entropic mirror descent gives
% $s_{t+1}=s_t e^{-\eta s_t}$ and $s_t\sim1/(\eta t)$, whereas
% Euclidean gradient descent is geometric for $0<\eta<1$.
% A nonzero boundary gradient can restore geometric entropic convergence;
% see Examples~2 and~7 of \cite{AzizianEtAl2024}. Their Theorem~2 gives
% faster convergence along sharp directions, meaning directions with
% a nonzero first-order increase away from the solution, under
% additional polyhedral and separability assumptions. These observations
% motivate examining the local matrix geometry; neither lower bound
% below relies on transferring that theorem to density matrices.

\section{An Average-Iterate Lower Bound for OMMP}
\label{sec:ommp-average-lower-bound}

% We construct a fixed game with one qubit per player for which the
% uniform-average output of optimistic matrix mirror-prox (OMMP) requires
% $\Omega(1/\varepsilon)$ iterations to be an $\varepsilon$-Nash equilibrium.
% The argument applies to every choice of regularizer and step size that
% produces feasible iterates. The average includes the maximally mixed
% initial iterate, as in the return statement of
We construct a fixed game with one qubit per player whose duality gap is affine in the strategy pair. For any feasible sequence starting from the maximally mixed state, the initial gap contributes \(1/T\) to the gap of the uniform average. This yields an \(\Omega(1/\varepsilon)\) iteration lower bound for the uniform-average output of OMMP, independently of the regularizer and step size. The average includes the maximally mixed
initial iterate, as in the return statement of
\cite[Algorithm~5]{VasconcelosEtAl2025}. 

\subsection{Game construction}
\label{subsec:ommp-average-game}

Using the projectors $P$ and $Q$ from~\eqref{eq:pauli}, define the
payoff observable
\begin{equation}
U_{\rm avg}
=-P\otimes I+I\otimes P-\frac12P\otimes P
=\operatorname{diag}\left(-\frac12,-1,1,0\right),
\label{eq:ommp-average-observable}
\end{equation}
where the diagonal is written in the computational basis
$(|00\rangle,|01\rangle,|10\rangle,|11\rangle)$.
The payoff is $f(\alpha,\beta)=\Tr[U_{\rm avg}(\alpha\otimes\beta)]$.
% This defines a finite-valued quantum zero-sum game: the referee measures
% with the four orthogonal projectors 
The referee implements this game by measuring with the four orthogonal projectors
$P\otimes P$, $P\otimes Q$,
$Q\otimes P$, and $Q\otimes Q$, whose sum is $I\otimes I$.
% The corresponding diagonal entry of $U_{\rm avg}$ is Alice's loss and
% Bob's payoff. 
The corresponding diagonal entry of \(U_{\rm avg}\) specifies Alice's loss and Bob's payoff.
All payoff values lie in $[-1,1]$, and
$\norm{U_{\rm avg}}_{\rm op}=1$.

For $\joint=(\alpha,\beta)\in\ZZ$, the populations
$\Tr[Q\alpha]$ and $\Tr[Q\beta]$ lie in $[0,1]$.
Using $\Tr[P\alpha]=1-\Tr[Q\alpha]$ and
$\Tr[P\beta]=1-\Tr[Q\beta]$, we obtain
\begin{equation}
f(\alpha,\beta)
=-\frac12+\frac32\Tr[Q\alpha]-\frac12\Tr[Q\beta]
-\frac12\Tr[Q\alpha]\Tr[Q\beta].
\label{eq:ommp-average-payoff}
\end{equation}
Each population can take any value in $[0,1]$, realized by a diagonal state.
Since the payoff is independent of off-diagonal entries, optimizing
over density matrices is equivalent to optimizing over these populations.

\begin{lemma}
\label{lem:ommp-average-gap}
The game in~\eqref{eq:ommp-average-observable} has the unique Nash
equilibrium $\jointstar=(P,P)$ and value $v_*=-1/2$. For every strategy
pair $\joint=(\alpha,\beta)\in\ZZ$, its duality gap is
\begin{equation}
G(\joint)=\frac32\Tr[Q\alpha]+\frac12\Tr[Q\beta].
\label{eq:ommp-average-affine-gap}
\end{equation}
In particular, $G$ is affine in the strategy pair $\joint$.
\end{lemma}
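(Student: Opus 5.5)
Write $a=\Tr[Q\alpha]\in[0,1]$ and $b=\Tr[Q\beta]\in[0,1]$. By \eqref{eq:ommp-average-payoff}, $f(\alpha,\beta)=-\tfrac12+\tfrac32a-\tfrac12b-\tfrac12ab$ depends on the pair only through $(a,b)$. As noted after that display, optimizing over $\DD_2$ is the same as optimizing over the populations in $[0,1]$. The plan is to compute the two best-response values in closed form, subtract them to get the gap, and then read off both the equilibrium and its uniqueness from the zero set of the gap.

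First I compute Bob's best-response value for fixed $\alpha$. The payoff is affine in $b$ with coefficient $-\tfrac12-\tfrac12a<0$ for every $a\in[0,1]$. So the maximum is attained at $b=0$, uniquely, and equals $\max_{\beta'}f(\alpha,\beta')=-\tfrac12+\tfrac32a$. Next I compute Alice's best-response value for fixed $\beta$. The payoff is affine in $a$ with coefficient $\tfrac32-\tfrac12b\ge1>0$. So the minimum is attained at $a=0$, uniquely, and equals $\min_{\alpha'}f(\alpha',\beta)=-\tfrac12-\tfrac12b$. Subtracting gives $G(\alpha,\beta)=\tfrac32a+\tfrac12b$, which is \eqref{eq:ommp-average-affine-gap}. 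This is affine in $\joint$ because $a$ and $b$ are linear functionals of $\alpha$ and $\beta$.

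For the equilibrium claim, I use that a pair is a Nash equilibrium iff $G(\joint)=0$. This holds because $G$ is always nonnegative, since $\max_{\beta'}f(\alpha,\beta')\ge f(\alpha,\beta)\ge\min_{\alpha'}f(\alpha',\beta)$, and $G=0$ forces both saddle inequalities. From the formula, $G=0$ iff $a=b=0$, i.e.\ $\Tr[Q\alpha]=\Tr[Q\beta]=0$. A positive semidefinite unit-trace matrix with $\Tr[Q\rho]=\rho_{11}=0$ must have $\rho_{01}=0$ by positivity of the $2\times2$ minors, so $\rho=P$. Hence $\jointstar=(P,P)$ is the unique equilibrium, and $v_*=f(P,P)=-\tfrac12$.

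There is no real obstacle here. The only points needing care are the signs of the two slopes, which must hold uniformly over $[0,1]$ so that the best responses are fixed, and the short positivity argument that zero $Q$-population forces the state to equal $P$ rather than merely be diagonal.
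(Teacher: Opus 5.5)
Your proof is correct and follows essentially the same route as the paper: the same slope computations give the fixed best responses $\Tr[Q\alpha]=0$ and $\Tr[Q\beta]=0$, the same positivity argument shows the state is $P$, and subtracting the two best-response values gives the gap. The only difference is cosmetic: you obtain the unique equilibrium from the zero set of $G$, whereas the paper obtains it directly from the unique best responses before computing the gap.
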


\begin{proof}
For fixed $\beta$, the coefficient of $\Tr[Q\alpha]$ in
\eqref{eq:ommp-average-payoff} is $(3-\Tr[Q\beta])/2\ge1$.
The payoff is therefore strictly increasing in $\Tr[Q\alpha]$,
so Alice's best responses satisfy
$\Tr[Q\alpha]=0$. For fixed $\alpha$, the payoff is strictly
decreasing in $\Tr[Q\beta]$, with coefficient
$-(1+\Tr[Q\alpha])/2\le-1/2$, so Bob's best responses satisfy
$\Tr[Q\beta]=0$.
A density matrix with zero population in $Q$ has diagonal entries
$1$ and $0$; positive semidefiniteness forces its off-diagonal entries
to vanish. It must therefore equal $P$.
Thus the unique equilibrium is $\jointstar=(P,P)$, and substitution
gives $v_*=-1/2$.

The best-response payoffs are
\begin{align*}
\min_{\alpha'\in\DD_2}f(\alpha',\beta)
&=f(P,\beta)=-\frac12-\frac12\Tr[Q\beta],\\
\max_{\beta'\in\DD_2}f(\alpha,\beta')
&=f(\alpha,P)=-\frac12+\frac32\Tr[Q\alpha].
\end{align*}
Subtracting these best-response payoffs gives
$G(\joint)=\frac32\Tr[Q\alpha]+\frac12\Tr[Q\beta]$,
which is affine in $\joint$.
\end{proof}

\subsection{Lower bound for the uniform average}
\label{subsec:ommp-average-rate}

For $T\ge1$, consider the output
$\overline\joint_T=(\bar\alpha_T,\bar\beta_T)$, where
\begin{equation}
\bar\alpha_T=\frac1T\sum_{t=0}^{T-1}\alpha_t,
\qquad
\bar\beta_T=\frac1T\sum_{t=0}^{T-1}\beta_t,
\qquad
\joint_0=(I/2,I/2).
\label{eq:ommp-average-output}
\end{equation}
Thus $\overline\joint_T=\tfrac{1}{T} \sum_{t=0}^{T-1}\joint_t\in\ZZ$.
Here $T$ counts the averaged iterates, so producing them requires
$T-1$ updates after initialization.

\begin{theorem}[Average-iterate lower bound for OMMP]
\label{thm:ommp-average-lower-bound}
For the game in~\eqref{eq:ommp-average-observable}, every feasible
sequence $(\joint_t)_{t\ge0}$ with the initialization
in~\eqref{eq:ommp-average-output} satisfies
\begin{equation}
G(\overline\joint_T)\ge\frac1T
\qquad\text{for every }T\ge1.
\label{eq:ommp-average-gap-lower-bound}
\end{equation}
Consequently, if $\overline\joint_T$ is an $\varepsilon$-Nash equilibrium,
then $T\ge1/\varepsilon$.
This yields an $\Omega(1/\varepsilon)$ update lower bound for the
uniform-average output of OMMP, independently of its regularizer
and step size.
\end{theorem}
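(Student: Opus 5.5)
The plan is to exploit the affine form of the duality gap established in Lemma~\ref{lem:ommp-average-gap}. Since $G(\joint)=\frac32\Tr[Q\alpha]+\frac12\Tr[Q\beta]$ is a linear function of the pair $(\alpha,\beta)$ (the trace functional is linear, and there is no constant term), and since $\overline\joint_T=\frac1T\sum_{t=0}^{T-1}\joint_t$ lies in $\ZZ$ by convexity of $\DD_2$, I will first write
\begin{equation*}
G(\overline\joint_T)=\frac32\Tr[Q\bar\alpha_T]+\frac12\Tr[Q\bar\beta_T]
=\frac1T\sum_{t=0}^{T-1}G(\joint_t).
\end{equation*}
This identity uses only feasibility of each $\joint_t$, so that the lemma applies termwise, together with linearity of the trace.

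Next I will evaluate the $t=0$ term directly. For $\joint_0=(I/2,I/2)$ we have $\Tr[Q\,I/2]=1/2$ for both players, giving $G(\joint_0)=\frac32\cdot\frac12+\frac12\cdot\frac12=1$. For $t\ge1$, each term satisfies $G(\joint_t)\ge0$. This follows either from the formula, since populations are nonnegative, or from the general fact that the gap is nonnegative: $\max_{\beta'}f(\alpha,\beta')\ge f(\alpha,\beta)\ge\min_{\alpha'}f(\alpha',\beta)$. Dropping these terms yields $G(\overline\joint_T)\ge 1/T$.

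For the consequence, if $\overline\joint_T$ is an $\varepsilon$-Nash equilibrium, then $1/T\le G(\overline\joint_T)\le\varepsilon$, so $T\ge1/\varepsilon$. Since producing $\overline\joint_T$ requires $T-1$ updates, the number of updates is at least $1/\varepsilon-1=\Omega(1/\varepsilon)$ for small $\varepsilon$. Finally, I will remark that any OMMP run, for any regularizer and step size, produces iterates in $\ZZ$ starting from the prescribed $\joint_0$, so the bound applies to it.

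There is no real obstacle here. The only point needing care is that the average of the gaps equals the gap of the average. This holds exactly because the best responses are fixed at $(P,P)$ independently of the opponent, which is precisely what the lemma encodes. In general the gap is only convex, and that would give the inequality in the wrong direction.
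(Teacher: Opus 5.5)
Your proposal is correct and follows the paper's proof essentially step for step. You use the affine gap formula of Lemma~\ref{lem:ommp-average-gap} to rewrite the gap of the average as the average of the gaps, compute $G(\joint_0)=1$, drop the remaining nonnegative terms, and invert to obtain $T\ge1/\varepsilon$ and therefore at least $\lceil1/\varepsilon\rceil-1$ updates. Your closing remark is also accurate: in general the gap is only convex, so it is the affineness of this particular game that makes the inequality go in the useful direction.
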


\begin{proof}
At initialization, $\Tr[Q\alpha_0]=\Tr[Q\beta_0]=1/2$, so
\eqref{eq:ommp-average-affine-gap} gives $G(\joint_0)=1$.
By the affine dependence of the gap on the strategy pair and its
nonnegativity at every feasible pair,
\begin{equation}
G(\overline\joint_T)
=\frac1T\sum_{t=0}^{T-1}G(\joint_t)
\ge\frac1T G(\joint_0)
=\frac1T.
\label{eq:ommp-average-gap-calculation}
\end{equation}
Under the convention in Section~\ref{sec:preliminaries}, an
$\varepsilon$-Nash equilibrium satisfies $G(\overline\joint_T)\le\varepsilon$.
Hence $T\ge1/\varepsilon$, and the number of updates is at least
$\lceil1/\varepsilon\rceil-1=\Omega(1/\varepsilon)$ as
$\varepsilon\to0$.
\end{proof}

\begin{remark}[Scope of the lower bound]
The game, its dimension, and its initialization are fixed independently
of $T$ and $\varepsilon$. The obstruction comes from retaining the initial
state with weight $1/T$ in the output: even if every subsequent iterate
were exactly $\jointstar$, the averaged output would still have gap $1/T$.
Together with the known OMMP upper bound under its prescribed regularizer
and step-size conditions~\cite{VasconcelosEtAl2025}, this establishes the
optimal $1/\varepsilon$ dependence for the specified averaging rule.
The result concerns this output rule. It does not give a last-iterate
lower bound or a lower bound for a modified averaging rule. The observable
is diagonal and embeds a classical two-action zero-sum game, so the
obstruction does not rely on noncommutativity.
\end{remark}

\section{Last-Iterate Lower Bounds for OGDA}
\label{sec:ogda}
We give two fixed games that exhibit different limitations of OGDA. In the first game, the time needed to reduce the initial error by a constant factor becomes arbitrarily large as the initialization approaches equilibrium, although each fixed trajectory in this family admits a geometric convergence bound. In the second game, OGDA converges polynomially from the maximally mixed state, with equilibrium distance of order \(1/t\) and duality gap of order \(1/t^{3}\).
% In Section \ref{subsec:ogda-construction}--\ref{subsec:ogda-lowerbound-2}, we construct a family of (fixed) games and prove that the convergence delay of OGDA on such games becomes arbitrarily long as the initialization approaches a particular equilibrium. In Section \ref{subsec:ogda-mm}, we construct a fixed game that delays the convergence of OGDA when initializing at the maximally mixed state.
\subsection{Game construction}
\label{subsec:ogda-construction}

Consider the observable and payoff
\begin{equation}
U_{\rm G}=Q\otimes Q,\qquad
f(\alpha,\beta)=\Tr[Q\alpha]\Tr[Q\beta].
\label{eq:ogda-game}
\end{equation}
Its saddle-point operator is
$\gradient(\alpha,\beta)=(\Tr[Q\beta]Q,-\Tr[Q\alpha]Q)$.
For $0<r\le1/4$, the pure initial state and time horizon used below are
\begin{equation}
\rho_r=\begin{pmatrix}1-r&\sqrt{r(1-r)}\\
\sqrt{r(1-r)}&r\end{pmatrix},\qquad
T_r=\left\lfloor\frac{1}{2\eta\sqrt r}\right\rfloor.
\label{eq:ogda-initial-family}
\end{equation}
The initial gap is $r$. We will show that it remains above $r/2$ for
all $t\le T_r$, even though $T_r\to\infty$ as $r$ goes to 0.
The main OGDA run starts at $(\rho_r,P)$. We also examine the separate
run starting at $(I/2,I/2)$. In this construction $0<\eta\le1/8$. 

Similar semidefinite constructions appear in
\cite[Examples~3.5(b) and~3.6]{ITTV25}, which illustrate the roles
of rank-one strategies and off-diagonal entries in equilibrium
geometry. Moreover, $\rho_r$ is the rank-one matrix displayed in their
Example~3.6 with parameter $t=1-r$. In our construction, we use these geometric features to study the convergence of OGDA in a zero-sum game. In particular, the
initializations $(\rho_r,P)$ combine a small population outside
the equilibrium support with a substantially larger coherence.
Together with Bob's slow initial motion, this population--coherence
relation produces arbitrarily long convergence delays across
the family of initializations.

\begin{lemma}[Geometry of the OGDA game]
\label{lem:ogda-game}
The observable $U_{\rm G}$ satisfies $0\preceq U_{\rm G}\preceq I$
and $\norm{U_{\rm G}}_{\rm op}=1$. Its saddle-point operator is
$1$-Lipschitz. The game has value zero, and
\begin{equation}
\ZZ^*=\{P\}\times\DD_2,\qquad
G(\alpha,\beta)=\Tr[Q\alpha],\qquad
\dist_F((\alpha,\beta),\ZZ^*)=\norm{\alpha-P}_F.
\label{eq:ogda-gap-nash}
\end{equation}
If $\alpha$ is pure, then
\begin{equation}
\dist_F((\alpha,\beta),\ZZ^*)^2=2\Tr[Q\alpha].
\label{eq:ogda-pure-distance}
\end{equation}
\end{lemma}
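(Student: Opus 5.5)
Since $f(\alpha,\beta)=\Tr[Q\alpha]\Tr[Q\beta]$ depends on each state only through its population in $Q$, every claim reduces to elementary facts about these populations, together with one Frobenius computation for the distance. I will proceed in four steps.

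\textbf{Step 1: the observable and Lipschitz constant.} Since $Q$ is a rank-one projector, $U_{\rm G}=Q\otimes Q$ is a projector on $\mathbb C^2\otimes\mathbb C^2$. Its eigenvalues therefore lie in $\{0,1\}$, with $1$ attained. This gives $0\preceq U_{\rm G}\preceq I$ and $\norm{U_{\rm G}}_{\rm op}=1$. For the Lipschitz bound I will use the displayed operator
\[
\gradient(\alpha,\beta)=(\Tr[Q\beta]Q,-\Tr[Q\alpha]Q)
\]
and $\norm{Q}_F=1$. These give
\[
\norm{\gradient(\joint)-\gradient(\joint')}_F^2=(\Tr[Q(\beta-\beta')])^2+(\Tr[Q(\alpha-\alpha')])^2 .
\]
By Cauchy--Schwarz, $|\Tr[QH]|\le\norm{Q}_F\norm{H}_F=\norm{H}_F$, so the right-hand side is at most $\norm{\joint-\joint'}_F^2$.

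\textbf{Step 2: value, equilibria and gap.} Write $a=\Tr[Q\alpha]$ and $b=\Tr[Q\beta]$, both in $[0,1]$ and freely realizable. The payoff is $f=ab\ge0$. Alice's best-response value against $\beta$ is $\min_{a'}a'b=0$, attained at $a'=0$. Bob's best-response value against $\alpha$ is $\max_{b'}ab'=a$, attained at $b'=1$. Hence
\[
G(\alpha,\beta)=a-0=\Tr[Q\alpha].
\]
A pair is an equilibrium exactly when its gap is zero, so $\ZZ^*=\{a=0\}\times\DD_2$. As in the proof of Lemma~\ref{lem:ommp-average-gap}, $\Tr[Q\alpha]=0$ together with positive semidefiniteness forces $\alpha=P$. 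Thus $\ZZ^*=\{P\}\times\DD_2$, and evaluating at $(P,\beta)$ gives value $0$.

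\textbf{Step 3: distance formula.} Since $\ZZ^*$ is a product set and the pair norm is a sum of squares, the infimum splits into two parts. The $\beta$-part is zero, by choosing $\beta'=\beta$. The $\alpha$-part is $\norm{\alpha-P}_F$, because $P$ is the only feasible first coordinate.

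\textbf{Step 4: the pure case.} Expand
\[
\norm{\alpha-P}_F^2=\Tr[\alpha^2]-2\Tr[P\alpha]+\Tr[P^2].
\]
For pure $\alpha$ we have $\Tr[\alpha^2]=1$, and $\Tr[P^2]=1$. Using $\Tr[P\alpha]=1-\Tr[Q\alpha]$, the right-hand side becomes
\[
2-2(1-\Tr[Q\alpha])=2\Tr[Q\alpha],
\]
which is \eqref{eq:ogda-pure-distance}.

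None of these steps is a real obstacle. The only points needing care are identifying the equilibrium set precisely (the PSD argument forcing $\alpha=P$) and noting that the distance splits because $\ZZ^*$ is a product set.
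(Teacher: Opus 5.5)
Your proof is correct and follows essentially the same route as the paper: the projector property of $Q\otimes Q$ for the normalization, and the bound $|\Tr[QH]|\le\norm{H}_F$ for the Lipschitz constant. It also matches the paper in reducing to populations to get $G(\alpha,\beta)=\Tr[Q\alpha]$ with positive semidefiniteness forcing $\alpha=P$, splitting the distance over the product set $\ZZ^*$, and expanding the Frobenius norm in the pure case. The only cosmetic difference is that you identify $\ZZ^*$ as the zero set of the gap, whereas the paper argues directly that Alice's minimax strategies are exactly those with $\Tr[Q\alpha]=0$ and that every Bob strategy is optimal.
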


\begin{proof}
The normalization follows because $Q\otimes Q$ is an orthogonal
projector. Since $\Tr[Q\alpha],\Tr[Q\beta]\in[0,1]$, the payoff is
nonnegative, and Alice guarantees zero by playing $P$. Her worst-case
payoff is $\Tr[Q\alpha]$, attained at $\beta=Q$, so she is optimal
precisely when $\Tr[Q\alpha]=0$. Positive
semidefiniteness then forces both off-diagonal entries of $\alpha$ to
vanish, giving $\alpha=P$. Every Bob strategy is optimal, since
Alice can attain zero against it. These observations give the value,
equilibrium set, and gap in \eqref{eq:ogda-gap-nash}. The distance
formula follows by keeping Bob's state unchanged in the closest
equilibrium pair. If $\alpha$ is pure, it yields
$$\norm{\alpha-P}_F^2=\Tr[\alpha^2]+\Tr[P^2]-2\Tr[\alpha P]
=2-2\Tr[P\alpha]=2\Tr[Q\alpha].$$

Finally, the linear extension of $\gradient$ obeys
$$\norm{\gradient(H,K)}_F^2
=|\Tr[QK]|^2+|\Tr[QH]|^2\le\norm{K}_F^2+\norm{H}_F^2.$$
Applying this inequality to differences proves the Lipschitz claim.
\end{proof}

\subsection{Why the construction delays convergence}
\label{subsec:ogda-lowerbound-1}
For the iterates, define the populations
\[
a_t=\Tr[Q\alpha_t],\qquad b_t=\Tr[Q\beta_t],\qquad
\widehat a_t=\Tr[Q\widehat\alpha_t],\qquad
\widehat b_t=\Tr[Q\widehat\beta_t],
\]
and the off-diagonal entries $c_t=(\alpha_t)_{01}$ and
$\widehat c_t=(\widehat\alpha_t)_{01}$.
Then \eqref{eq:ogda-updates} specializes to
\begin{align}
\alpha_{t+1}&=\Pi_{\DD_2}(\widehat\alpha_t-\eta b_tQ),
&\beta_{t+1}&=\Pi_{\DD_2}(\widehat\beta_t+\eta a_tQ),
\label{eq:ogda-played-components}\\
\widehat\alpha_{t+1}
&=\Pi_{\DD_2}(\widehat\alpha_t-\eta b_{t+1}Q),
&\widehat\beta_{t+1}
&=\Pi_{\DD_2}(\widehat\beta_t+\eta a_{t+1}Q).
\label{eq:ogda-aux-components}
\end{align}

The delay comes from two features that reinforce each other: Alice's
gradient is initially zero, and the curved boundary of the state space
makes her coherence decrease multiplicatively under projection.
The projection lemma and trajectory theorem below give the precise
bounds; we first explain their meaning.

\paragraph{A small population can coexist with a much larger coherence.}
For Alice's played state, write
\[
\alpha_t=
\begin{pmatrix}
1-a_t&c_t\\
c_t&a_t
\end{pmatrix}.
\]
Thus $a_t$ is the population in the $Q$ state, while $c_t$ is the
off-diagonal entry, or coherence, in the fixed $P,Q$ basis. Along this
trajectory $c_t$ is real and positive. At the initialization
$(\rho_r,P)$, we have $a_0=r$, $c_0=\sqrt{r(1-r)}$, and Bob's
population is zero. Since Alice's gradient is $b_0Q=0$, her first
played update leaves her at $\alpha_1=\rho_r$, although her gap is
$r>0$. Alice remains pure throughout the trajectory, so
$\det(\alpha_t)=0$ gives $c_t^2=a_t(1-a_t)$. In particular, since
$a_t\le r\le1/4$,
\[
\frac34a_t\le c_t^2\le a_t.
\]
The population and squared coherence are therefore comparable: a
constant-factor reduction in $a_t$ entails a constant-factor reduction
in $c_t^2$.

\paragraph{Bob's slow motion gives Alice only a weak contraction.}
Bob's played and auxiliary states remain diagonal. Write
$\beta_t=D(b_t)$ and $\widehat\beta_t=D(\widehat b_t)$.
For a diagonal state, Frobenius projection gives
\[
\Pi_{\DD_2}\bigl(D(b)+hQ\bigr)
=D\left(\min\left\{1,b+\frac h2\right\}\right)
\qquad(h\ge0).
\]
The factor $1/2$ comes from restoring trace one: before clipping at
the boundary, the projection subtracts $(h/2)I$ from $D(b)+hQ$.
Consequently, Bob's updates satisfy
\[
b_{t+1}
=\min\left\{1,\widehat b_t+\frac{\eta a_t}{2}\right\},
\qquad
\widehat b_{t+1}
=\min\left\{1,\widehat b_t+\frac{\eta a_{t+1}}{2}\right\}.
\]
Since every Alice population is at most $r$ and
$b_0=\widehat b_0=0$, each auxiliary update increases
$\widehat b_t$ by at most $\eta r/2$. Induction gives
$\widehat b_t\le\eta rt/2$ and, using the played recurrence,
$b_t\le\eta rt/2$.

To see how this affects Alice, consider one of her projections.
If the pure input has Bloch coordinates $(u,0,w)$ with $w\ge0$
and coherence $c=u/2$, then a shift of size $h\ge0$ gives
\begin{equation}
\Pi_{\DD_2}(\rho(u,0,w)-hQ)
=\rho\left(\frac{u}{N},0,\frac{w+h}{N}\right),
\qquad
N=\sqrt{1+2hw+h^2},\qquad c'=\frac{c}{N}.
\label{eq:ogda-mechanism-projection}
\end{equation}
Here $1\le N\le1+h$, so the coherence decreases by a factor no
smaller than $(1+h)^{-1}$. In the updates, $h=\eta b_t$ or
$h=\eta b_{t+1}$. Since $\log(1+h)\le h$, the cumulative decrease
is controlled by the sum of these shifts. Counting the auxiliary
updates and the additional played update gives
\begin{equation}
\log\frac{c_0}{c_t}
\le\eta\sum_{j=1}^{t-1}b_j+\eta b_{t-1}
\le\frac{\eta^2r}{4}(t-1)(t+2)
\qquad(t\ge1).
\label{eq:ogda-mechanism-budget}
\end{equation}
Now suppose that $c_t\le q c_0$ for some fixed $q\in(0,1)$.
By \eqref{eq:ogda-mechanism-budget} and
$(t-1)(t+2)\le2t^2$,
\[
\log\frac1q
\le\log\frac{c_0}{c_t}
\le\frac{\eta^2rt^2}{2}.
\]
Hence $t\ge\sqrt{2\log(1/q)}/(\eta\sqrt r)$: a fixed reduction
of coherence requires $\Omega(1/(\eta\sqrt r))$ iterations. The
relations $G(\joint_t)=a_t\ge c_t^2$ and
$\dist_F(\joint_t,\ZZ^*)^2=2a_t$ turn this delay into the stated
gap and distance lower bounds. Choosing $r$ proportional to
$\varepsilon$ gives an $\Omega(\varepsilon^{-1/2})$ gap bound;
choosing $r$ proportional to $\varepsilon^2$ gives an
$\Omega(\varepsilon^{-1})$ distance bound, for fixed $\eta$.

\paragraph{A simple illustration.}
The first round already exhibits the effect:
$\alpha_1=\rho_r$ and $b_1=\widehat b_1=\eta r/2$.
The first auxiliary Alice update consequently uses only the shift
$h=\eta^2r/2$, and its coherence satisfies
\begin{equation}
\frac{\widehat c_1}{c_0}
=\left(1+\eta^2r(1-2r)+\frac{\eta^4r^2}{4}\right)^{-1/2}
=1-\frac{\eta^2r}{2}+\O(r^2)
\quad\text{as }r\downarrow0
\label{eq:ogda-first-round-illustration}
\end{equation}
with $\eta$ fixed. Hence this ratio tends to one as the initial state
approaches $P$. Figure~\ref{fig:ogda-projection} shows the geometry
of one such projection.

\begin{figure}[htbp]
\centering
\begin{tikzpicture}[x=3.5cm,y=3.5cm,>=Stealth,font=\small]
  \coordinate (O) at (0,0);
  \coordinate (V) at (0.6,0.8);
  \coordinate (W) at (0.6,1.08);
  \coordinate (Vp) at ({0.6/sqrt(0.6*0.6+1.08*1.08)},
                        {1.08/sqrt(0.6*0.6+1.08*1.08)});
  % Project the two points vertically onto the u-axis.
  \coordinate (U) at (V |- O);
  \coordinate (Up) at (Vp |- O);

  \draw[densely dashed,constructionorange!65] (V) -- (U);
  \draw[densely dashed,constructionblue!65] (Vp) -- (Up);

  % Axis ticks.
  \draw[constructionorange,thick]
    ([yshift=2pt]U) -- ([yshift=-2pt]U);
  \draw[constructionblue,thick]
    ([yshift=2pt]Up) -- ([yshift=-2pt]Up);

  % Place the labels on opposite sides to avoid overlap.
  \node[below right=3pt,text=constructionorange] at (U)
    {$u=2c$};
  \node[below left=3pt,text=constructionblue] at (Up)
    {$u'=2c'$};

  \draw[->,gray!70] (-0.04,0) -- (1.14,0)
    node[right,text=black] {$u$};
  \draw[->,gray!70] (0,-0.03) -- (0,1.20)
    node[above,text=black] {$w$};

  \draw[gray!60,thick]
    (1,0) arc[start angle=0,end angle=90,radius=1];
  \draw[densely dashed,gray!70] (O) -- (W);
  \draw[->,constructionorange,thick] (V) -- (W)
    node[midway,right,text=black]{};
  \draw[->,constructionblue,very thick] (W) -- (Vp);

  \fill (0,1) circle[radius=0.012]
    node[left] {$P$};
  \fill[constructionorange] (V) circle[radius=0.013]
    node[right=8pt,text=brown] {$(u,w)$};
  \fill[constructionblue] (Vp) circle[radius=0.013]
    node[below left,xshift=-3pt,yshift=-1pt,text=constructionblue]
      {$(u',w')$};
  \fill[gray!80] (W) circle[radius=0.013]
    node[above right,text=brown] {$(u,w+h)$};
\end{tikzpicture}
\caption{One Alice projection for OGDA in the $(u,w)$ cross-section
of the Bloch sphere. The pure input satisfies $u^2+w^2=1$.
Subtracting $hQ$, where $h\ge0$, shifts the Bloch vector from $(u,w)$
to $(u,w+h)$ along the orange arrow. The blue arrow is the radial
projection onto the unit circle, giving
$(u',w')=(u/N,(w+h)/N)$ with $N=\sqrt{1+2hw+h^2}$.
Writing $c=u/2$ and $c'=u'/2$ for the input and output coherences,
we have $c'=c/N\ge c/(1+h)$. Thus a small shift changes the
coherence only slightly.}
\label{fig:ogda-projection}
\end{figure}
\FloatBarrier

\subsubsection{Projection on $\mathcal{D}_2$}
\label{subsec:ogda-projection}

The next lemma proves the formula used above, including the diagonal
case needed for Bob's updates.

\begin{lemma}[Qubit projection]
\label{lem:ogda-qubit-projection}
Let $\Pi_{\DD_2}$ denote Frobenius projection onto $\DD_2$.
For every qubit state $\rho(u,v,w)$ and $h\in\mathbb R$, we have
\begin{equation}
\Pi_{\DD_2}(\rho(u,v,w)-hQ)
=\rho\left(\frac{u}{L},\frac{v}{L},\frac{w+h}{L}\right),
\qquad
L=\max\left\{1,\sqrt{u^2+v^2+(w+h)^2}\right\}.
\label{eq:ogda-general-projection}
\end{equation}
If $h\ge0$, $v=0$, $u>0$, $w\ge0$, and $u^2+w^2=1$,
the projected state $\rho'$ remains real and pure, with
$\Tr[Q\rho']\le(1-w)/2$.
Let $c=u/2$. The off-diagonal entry $c'$ of $\rho'$ satisfies that
\begin{equation}
c'=\frac{c}{N},
\quad
1\le N:=\sqrt{1+2hw+h^2}\le1+h,
\quad \mathrm{and} \quad
\frac{c}{1+h}\le c'\le c.
\label{eq:ogda-offdiagonal-projection}
\end{equation}
For a diagonal state $D(b)$ with $b\in[0,1]$ and $h\ge0$,
the ascent projection is
\begin{equation}
\Pi_{\DD_2}(D(b)+hQ)
=D\left(\min\left\{1,b+\frac h2\right\}\right).
\label{eq:ogda-diagonal-projection}
\end{equation}
\end{lemma}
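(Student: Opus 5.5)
The plan is to reduce everything to the Bloch-ball picture. The map $(u,v,w)\mapsto\rho(u,v,w)$ is affine, and for traceless Hermitian perturbations it scales Frobenius norms by $1/\sqrt2$, since $\Tr[X^2]=\Tr[Y^2]=\Tr[Z^2]=2$ and the Paulis are orthogonal.

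\textbf{Step 1: reduce to a ball projection.} Write $Q=(I-Z)/2$, so that
\[
\rho(u,v,w)-hQ=\tfrac12\bigl((1-h)I+uX+vY+(w+h)Z\bigr).
\]
For a Hermitian $M=\tfrac12(sI+\vec x\cdot\vec\sigma)$ and any $\sigma=\tfrac12(I+\vec y\cdot\vec\sigma)\in\DD_2$, we have
\[
\norm{M-\sigma}_F^2=\tfrac12(s-1)^2+\tfrac12\norm{\vec x-\vec y}^2.
\]
The first term does not depend on $\sigma$, so minimizing over $\DD_2$ amounts to minimizing $\norm{\vec x-\vec y}$ over the unit ball. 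That minimizer is the radial projection $\vec x/\max\{1,\norm{\vec x}\}$. With $\vec x=(u,v,w+h)$ this gives \eqref{eq:ogda-general-projection} immediately.

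\textbf{Step 2: the pure case.} Now take $v=0$, $u>0$, $w\ge0$, $u^2+w^2=1$ and $h\ge0$. Then
\[
u^2+(w+h)^2=1+2hw+h^2\ge1,
\]
so $L=N$ and the image lies on the sphere. The output is therefore pure, and it stays real because $v=0$.

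The off-diagonal entry is half the $X$-coordinate, so $c'=u/(2N)=c/N$. The bounds $1\le N\le1+h$ follow from $0\le w\le1$, because then $1+2hw+h^2\le(1+h)^2$. This gives $c/(1+h)\le c'\le c$.

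For the population, $\Tr[Q\rho']=(1-w')/2$ with $w'=(w+h)/N$. The claim $w'\ge w$ is equivalent to
\[
(w+h)^2\ge w^2(1+2hw+h^2),
\]
that is, $2hw(1-w^2)+h^2(1-w^2)\ge0$, which holds since $w\le1$. Hence $\Tr[Q\rho']\le(1-w)/2$.

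\textbf{Step 3: the diagonal case.} For the ascent case, $D(b)+hQ$ has Bloch vector $(0,0,1-2b-h)$; I use $+hQ$, which is the formula with $h\to-h$. If $|1-2b-h|\le1$, the state is unchanged and equals $D(b+h/2)$. Otherwise, since $h\ge0$ and $b\le1$, the only possible violation is $1-2b-h<-1$, that is, $b+h/2>1$. The radial projection then gives $w=-1$, which is $D(1)$. Together these give \eqref{eq:ogda-diagonal-projection}.

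The only step needing care is the Frobenius-to-Bloch reduction in Step 1, specifically that the trace mismatch decouples. This holds because $I$ is orthogonal to the Paulis under $\Tr[\cdot\,\cdot]$. The rest is routine algebra.
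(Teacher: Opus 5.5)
Your proof is correct and follows the paper's argument essentially step for step. It uses the same Pauli-orthogonality reduction to Euclidean projection onto the unit Bloch ball, the same computation $L=N=\sqrt{1+2hw+h^2}$ in the pure case, and the same $h\mapsto-h$ treatment of the diagonal ascent step. The differences are cosmetic: you verify $w'\ge w$ by squaring, whereas the paper uses $(w+h)/N\ge(w+h)/(1+h)=w+h(1-w)/(1+h)$, and you take the radial form of the ball projection as known, whereas the paper justifies it by a one-line triangle-inequality argument. One small slip: in Step 3, the inequality $1-2b-h\le1$ comes from $b\ge0$ and $h\ge0$, not from $b\le1$.
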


\begin{proof}
Every density matrix in $\DD_2$ has the form $\rho(u',v',w')$
with $u'^2+v'^2+w'^2\le1$. Since $Q=(I-Z)/2$, we have
\begin{equation*}
\rho(u',v',w')-(\rho(u,v,w)-hQ)
=\frac12\bigl(hI+(u'-u)X+(v'-v)Y+(w'-w-h)Z\bigr).
\end{equation*}
The matrices $I,X,Y,Z$ are mutually orthogonal in the Frobenius
inner product, and each has squared Frobenius norm $2$. Hence
\begin{equation*}
\begin{aligned}
\|\rho(u',v',w')-(\rho(u,v,w)-hQ)\|_F^2
=\frac12\left(h^2+(u'-u)^2+(v'-v)^2+(w'-w-h)^2\right).
\end{aligned}
\end{equation*}
The term $h^2/2$ does not depend on $(u',v',w')$. Thus, minimizing
this expression amounts to projecting $r=(u,v,w+h)$ onto the
Euclidean unit ball. If $\|r\|_2\le1$, the minimizing vector
is $r$. Otherwise, every feasible vector $s$ satisfies
$\|s-r\|_2\ge\|r\|_2-\|s\|_2\ge\|r\|_2-1$, with equality
at $s=r/\|r\|_2$. The projection is therefore
$r/\max\{1,\|r\|_2\}$, proving
\eqref{eq:ogda-general-projection}.

Under the assumptions for a real pure input, $v=0$ and
$u^2+w^2=1$, so the shifted vector has norm
$$N=\sqrt{u^2+(w+h)^2}=\sqrt{1+2hw+h^2}.$$
Since $0\le w\le1$ and $h\ge0$,
$1\le N^2\le(1+h)^2$, and hence $1\le N\le1+h$.
Thus $L=N$, and the projected Bloch vector
$(u/N,0,(w+h)/N)$ has unit norm and zero $Y$-coordinate.
The projected state $\rho'$ is therefore again real and pure.
Moreover,
\begin{equation*}
\frac{w+h}{N}
\ge\frac{w+h}{1+h}
=w+\frac{h(1-w)}{1+h}
\ge w.
\end{equation*}
Its population in $Q$ consequently satisfies
$\Tr[Q\rho']=\tfrac{1}{2}(1-(w+h)/N)/2\le(1-w)$.
The output off-diagonal entry is
$c'=u/(2N)=c/N$, where $c=u/2>0$.
Combining this identity with $1\le N\le1+h$ gives
$c/(1+h)\le c'\le c$, proving
\eqref{eq:ogda-offdiagonal-projection}.

For the last assertion, write $D(b)=\rho(0,0,1-2b)$ and
replace $h$ by $-h$ in \eqref{eq:ogda-general-projection}.
The relevant shifted Bloch vector is $(0,0,1-2b-h)$.
Since $b,h\ge0$, its third coordinate is at most $1$.
Projection onto the unit ball leaves this coordinate unchanged
when it is at least $-1$, and replaces it by $-1$ otherwise.
Thus the projected state remains diagonal, with third coordinate of
$$w'=\max\{-1,1-2b-h\}$$ and population
\begin{equation*}
b'=\frac{1-w'}{2}
=\frac{1-\max\{-1,1-2b-h\}}{2}
=\min\left\{1,b+\frac h2\right\}.
\end{equation*}
The projected state is $D(b')$, which proves
\eqref{eq:ogda-diagonal-projection}.
\end{proof}

\subsection{Proof of lower bounds}
\label{subsec:ogda-lowerbound-2}
In this section, we prove the lower bound on iteration complexity of OGDA on the game instance \eqref{eq:ogda-game}.
\subsubsection{Delay along trajectories}
\begin{theorem}[Lower bound along the OGDA trajectory]
\label{thm:ogda-trajectory-lower-bound}
Fix $0<\eta\le1/8$ and $0<r\le1/4$. Initialize
\begin{equation}
\alpha_0=\widehat\alpha_0=\rho_r,
\qquad
\beta_0=\widehat\beta_0=P.
\label{eq:ogda-hard-initialization}
\end{equation}
For the updates \eqref{eq:ogda-updates}, every $\alpha_t$ is pure,
and for all integers $t\ge1$,
\begin{equation}
G(\joint_t)
\ge r(1-r)
\exp\left(-\frac{\eta^2r}{2}(t-1)(t+2)\right),
\qquad
\dist_F(\joint_t,\ZZ^*)^2=2G(\joint_t).
\label{eq:ogda-trajectory-lower-bound}
\end{equation}
In particular, throughout the horizon $T_r$ in
\eqref{eq:ogda-initial-family},
\begin{equation}
G(\joint_t)>\frac r2,
\qquad
\dist_F(\joint_t,\ZZ^*)>\sqrt r,
\qquad
0\le t\le T_r.
\label{eq:ogda-delay-bound}
\end{equation}
\end{theorem}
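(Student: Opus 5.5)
The argument is an induction along the trajectory, using Lemma~\ref{lem:ogda-qubit-projection} to control Alice's states and Lemma~\ref{lem:ogda-game} to convert population bounds into gap and distance bounds.

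\textbf{Invariants.} I will prove by induction on $t$ the following joint invariants:
\begin{itemize}
\item $\alpha_t,\widehat\alpha_t$ are real pure states with Bloch vectors $(u,0,w)$, $u>0$, $w\ge0$;
\item their populations satisfy $a_t,\widehat a_t\le r$;
\item $\beta_t=D(b_t)$ and $\widehat\beta_t=D(\widehat b_t)$ are diagonal, with $b_t,\widehat b_t\ge0$.
\end{itemize}

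\textbf{Base case.} It holds since $\rho_r$ has $w=1-2r\ge1/2$ and $u=2\sqrt{r(1-r)}>0$, and $P=D(0)$.

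\textbf{Alice's step.} Her updates in \eqref{eq:ogda-played-components}--\eqref{eq:ogda-aux-components} have the form $\Pi_{\DD_2}(\widehat\alpha_t-hQ)$ with $h=\eta b_t$ or $h=\eta b_{t+1}$, and $h\ge0$. The real-pure case of Lemma~\ref{lem:ogda-qubit-projection} then gives three facts:
\begin{itemize}
\item the output is real and pure, with $u>0$;
\item its third coordinate does not decrease, so $w$ stays $\ge0$ and the $Q$-population is nonincreasing relative to its input $\widehat\alpha_t$, keeping everything $\le r$;
\item its coherence satisfies $c'\ge c/(1+h)$.
\end{itemize}

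\textbf{Bob's step.} The diagonal formula \eqref{eq:ogda-diagonal-projection} gives $b_{t+1}=\min\{1,\widehat b_t+\eta a_t/2\}$ and $\widehat b_{t+1}=\min\{1,\widehat b_t+\eta a_{t+1}/2\}$. Combined with $a_j\le r$, induction yields $\widehat b_t\le \eta r t/2$ and $b_t\le \widehat b_{t-1}+\eta r/2\le \eta r t/2$.

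\textbf{Coherence budget.} Write $c_t=(\alpha_t)_{01}$. The chain from $c_0$ to $c_t$ runs through the auxiliary updates $\widehat\alpha_0\to\widehat\alpha_1\to\cdots\to\widehat\alpha_{t-1}$, which use shifts $\eta b_1,\dots,\eta b_{t-1}$, followed by the played update $\alpha_t=\Pi(\widehat\alpha_{t-1}-\eta b_{t-1}Q)$. Using $\log(1+h)\le h$,
\[
\log\frac{c_0}{c_t}\le\eta\sum_{j=1}^{t-1}b_j+\eta b_{t-1}\le \frac{\eta^2 r}{2}\Bigl(\frac{(t-1)t}{2}+(t-1)\Bigr)=\frac{\eta^2r}{4}(t-1)(t+2).
\]
This is \eqref{eq:ogda-mechanism-budget}.

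\textbf{Gap bound.} Since $\alpha_t$ is pure, $c_t^2=a_t(1-a_t)\le a_t$, and Lemma~\ref{lem:ogda-game} gives $G(\joint_t)=a_t$. Hence
\[
G(\joint_t)\ge c_t^2\ge c_0^2\exp\Bigl(-\tfrac{\eta^2r}{2}(t-1)(t+2)\Bigr),
\]
with $c_0^2=r(1-r)$. The identity $\dist_F^2=2G$ is exactly \eqref{eq:ogda-pure-distance}.

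\textbf{Horizon bound.} For $1\le t\le T_r$ we have $t\le 1/(2\eta\sqrt r)$, so $(t-1)(t+2)\le 2t^2$ gives an exponent of at most $\eta^2 r t^2\le 1/4$. Therefore
\[
G(\joint_t)\ge r(1-r)e^{-1/4}\ge \tfrac34 e^{-1/4} r\approx0.584\,r>\tfrac r2.
\]
At $t=0$ the gap is exactly $r$. The distance bound follows from $\dist_F^2=2G>r$.

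\textbf{Main obstacle.} The delicate step is bookkeeping the coherence chain: the played iterate is projected from $\widehat\alpha_{t-1}$, not from $\alpha_{t-1}$, so the shift indices must be tracked carefully. A second point is that the invariant $w\ge0$ is needed for the lemma's hypothesis. Monotonicity of $w$ under projection, already shown in the lemma's proof, supplies it.
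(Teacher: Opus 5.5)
Your proposal is correct and follows the paper's proof essentially step for step. It uses the same structural induction (real pure Alice states with nonincreasing populations, diagonal Bob states), the same linear bound $b_t,\widehat b_t\le \eta r t/2$, the same multiplicative coherence chain through $\widehat\alpha_0,\dots,\widehat\alpha_{t-1}$ plus one played projection, and the same conversion via $c_t^2\le a_t=G(\joint_t)$ and $(t-1)(t+2)\le 2t^2$; the only difference is your numerical constant $\tfrac34e^{-1/4}$ in place of the paper's $9/16$.
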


\begin{proof}
We keep the played and auxiliary sequences separate throughout the
argument.

\paragraph{Step 1: Preserve the state structure.}
The initial Alice state is pure and has Bloch vector
$$(2\sqrt{r(1-r)},0,1-2r),$$ whose first coordinate is positive and
whose third coordinate is at least $1/2$. Both initial Bob states are diagonal.
Every iterate is a density matrix, so all populations are nonnegative.
Consequently, each Alice update in
\eqref{eq:ogda-played-components}--\eqref{eq:ogda-aux-components}
has the form in \eqref{eq:ogda-offdiagonal-projection} with $h\ge0$.
Inductively, $\alpha_t$ and $\widehat\alpha_t$ remain real pure states
with positive off-diagonal entries and third coordinates at least
$1-2r$. Each Alice projection weakly decreases population relative
to its auxiliary input. Therefore
\begin{equation}
0\le\widehat a_{t+1}\le\widehat a_t\le r,
\qquad
0\le a_{t+1}\le\widehat a_t\le r.
\label{eq:ogda-alice-population-bound}
\end{equation}
Likewise, \eqref{eq:ogda-diagonal-projection} shows inductively that
$\beta_t$ and $\widehat\beta_t$ remain diagonal. These conclusions
apply in the order of the updates: first construct the two played
states, then the two auxiliary states.

\paragraph{Step 2: Bound the growth of Bob's population.}
Equation \eqref{eq:ogda-diagonal-projection} gives the exact recurrences
\begin{equation}
b_{t+1}=\min\left\{1,\widehat b_t+\frac{\eta a_t}{2}\right\},
\qquad
\widehat b_{t+1}
=\min\left\{1,\widehat b_t+\frac{\eta a_{t+1}}{2}\right\}.
\label{eq:ogda-bob-recurrences}
\end{equation}
By \eqref{eq:ogda-alice-population-bound},
$\widehat b_{t+1}\le\widehat b_t+\eta r/2$.
Starting from $\widehat b_0=0$, induction yields
$\widehat b_t\le\eta rt/2$. The first recurrence then gives
$b_{t+1}\le\eta r(t+1)/2$. Including $b_0=0$, we have
\begin{equation}
0\le\widehat b_t\le\frac{\eta rt}{2},
\qquad
0\le b_t\le\frac{\eta rt}{2}
\qquad(t\ge0).
\label{eq:ogda-bob-growth-bound}
\end{equation}
These upper bounds remain valid when the clipping in
\eqref{eq:ogda-bob-recurrences} is active.

\paragraph{Step 3: Bound the decrease of Alice's off-diagonal entry.}
The positive entries $c_t$ and $\widehat c_t$ satisfy useful
multiplicative bounds. The two Alice updates and \eqref{eq:ogda-offdiagonal-projection}
imply
\begin{equation}
c_{t+1}\ge\frac{\widehat c_t}{1+\eta b_t},
\qquad
\widehat c_{t+1}
\ge\frac{\widehat c_t}{1+\eta b_{t+1}}.
\label{eq:ogda-coherence-recurrences}
\end{equation}
Iterating the second inequality gives
$\widehat c_{t-1}\ge
c_0\prod_{j=1}^{t-1}(1+\eta b_j)^{-1}$ for $t\ge1$.
Applying the first inequality for the played state at time $t$,
and using $1+u\le e^u$ for $u\ge0$, we obtain
\begin{align}
c_t
\ge\frac{c_0}
{(1+\eta b_{t-1})\prod_{j=1}^{t-1}(1+\eta b_j)}\ge c_0\exp\left(-\eta b_{t-1}
                      -\eta\sum_{j=1}^{t-1}b_j\right).
\label{eq:ogda-coherence-product}
\end{align}
When $t=1$, the product is empty and $b_0=0$, so this formula also
covers the first played update. The extra factor involving $b_{t-1}$
comes from this played update in addition to the preceding auxiliary
updates.

By \eqref{eq:ogda-bob-growth-bound}, the exponent satisfies
\begin{align*}
\eta b_{t-1}+\eta\sum_{j=1}^{t-1}b_j
\le\frac{\eta^2r}{2}
      \left((t-1)+\sum_{j=1}^{t-1}j\right)
=\frac{\eta^2r}{2}
      \left((t-1)+\frac{t(t-1)}2\right)
 =\frac{\eta^2r}{4}(t-1)(t+2).
\end{align*}
Since $c_0=\sqrt{r(1-r)}$, it follows that
\begin{equation}
c_t\ge\sqrt{r(1-r)}
\exp\left(-\frac{\eta^2r}{4}(t-1)(t+2)\right).
\label{eq:ogda-coherence-lower-bound}
\end{equation}
The determinant condition for the density matrix
$\alpha_t=\left(\begin{smallmatrix}1-a_t&c_t\\c_t&a_t\end{smallmatrix}\right)$
gives $c_t^2\le a_t(1-a_t)\le a_t$. Squaring
\eqref{eq:ogda-coherence-lower-bound} and using
$G(\joint_t)=a_t$ proves the gap lower bound in
\eqref{eq:ogda-trajectory-lower-bound}. Purity of $\alpha_t$ and
\eqref{eq:ogda-pure-distance} prove the distance identity.

\paragraph{Step 4: Obtain a constant-factor delay.}
For integers $t\ge1$, we have $(t-1)(t+2)\le2t^2$. If
$t\le1/(2\eta\sqrt r)$, then
\begin{equation*}
\frac{\eta^2r}{2}(t-1)(t+2)
\le\eta^2rt^2\le\frac14.
\end{equation*}
Thus, using $r\le1/4$ and $e^{-1/4}\ge1-1/4=3/4$,
\begin{equation*}
G(\joint_t)\ge r(1-r)e^{-1/4}
\ge\frac{9r}{16}>\frac r2.
\end{equation*}
The same strict inequality holds at $t=0$, since $G(\joint_0)=r$.
Finally, $\dist_F(\joint_t,\ZZ^*)^2=2G(\joint_t)>r$ gives the
distance assertion.
\end{proof}

\subsubsection{Iteration complexity and dependence on initialization}

The hitting times in \eqref{eq:hitting-times} turn the delay into
accuracy-dependent lower bounds.

\begin{corollary}[Iteration lower bounds over initial states]
\label{cor:ogda-iteration-lower-bounds}
For the fixed game \eqref{eq:ogda-game} and
$0<\eta\le1/8$, the following statements hold.
\begin{enumerate}
\item For every $0<\varepsilon\le1/16$, the initial state
$\joint_0=(\rho_{4\varepsilon},P)$ satisfies
\begin{equation}
\tau_{\rm gap}(\varepsilon;\joint_0)
\ge\left\lfloor\frac{1}{4\eta\sqrt\varepsilon}\right\rfloor+1.
\label{eq:ogda-gap-complexity}
\end{equation}
\item For every $0<\varepsilon\le1/4$, the initial state
$\joint_0=(\rho_{4\varepsilon^2},P)$ satisfies
\begin{equation}
\tau_{\rm dist}(\varepsilon;\joint_0)
\ge\left\lfloor\frac{1}{4\eta\varepsilon}\right\rfloor+1.
\label{eq:ogda-distance-complexity}
\end{equation}
\end{enumerate}
The game, dimension, payoff normalization, and step size are fixed;
only the initialization depends on $\varepsilon$.
\end{corollary}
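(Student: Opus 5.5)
The plan is to apply Theorem~\ref{thm:ogda-trajectory-lower-bound}, specifically the delay bound \eqref{eq:ogda-delay-bound}, with $r$ chosen as a function of $\varepsilon$. We then check that the horizon $T_r$ becomes the claimed floor expression and that the delay thresholds sit at or above $\varepsilon$. Both parts have the same structure. If $G(\joint_t)>\varepsilon$, respectively $\dist_F(\joint_t,\ZZ^*)>\varepsilon$, for every $0\le t\le T_r$, then no index in $\{0,\dots,T_r\}$ lies in the hitting set of \eqref{eq:hitting-times}, so the hitting time is at least $T_r+1$.

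\textbf{Gap bound.} Set $r=4\varepsilon$. The hypothesis $\varepsilon\le1/16$ gives $0<r\le1/4$, so Theorem~\ref{thm:ogda-trajectory-lower-bound} applies with $0<\eta\le1/8$. By \eqref{eq:ogda-delay-bound}, $G(\joint_t)>r/2=2\varepsilon>\varepsilon$ for all $0\le t\le T_r$. Since $\sqrt r=2\sqrt\varepsilon$, \eqref{eq:ogda-initial-family} gives
\[
T_r=\left\lfloor\frac{1}{2\eta\sqrt r}\right\rfloor=\left\lfloor\frac{1}{4\eta\sqrt\varepsilon}\right\rfloor .
\]
Therefore $\tau_{\rm gap}(\varepsilon;\joint_0)\ge T_r+1$, which is \eqref{eq:ogda-gap-complexity}.

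\textbf{Distance bound.} Set $r=4\varepsilon^2$. The hypothesis $\varepsilon\le1/4$ gives $r\le1/4$. By \eqref{eq:ogda-delay-bound}, $\dist_F(\joint_t,\ZZ^*)>\sqrt r=2\varepsilon>\varepsilon$ for all $0\le t\le T_r$. Since $\sqrt r=2\varepsilon$, we get $T_r=\lfloor 1/(4\eta\varepsilon)\rfloor$, and hence \eqref{eq:ogda-distance-complexity}.

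The only delicate point is bookkeeping. The strict inequalities in \eqref{eq:ogda-delay-bound} must cover the endpoint $t=T_r$ as well as $t=0$; the theorem states both explicitly. The parameter constraints $r\le1/4$ and $\eta\le1/8$ must also hold, and the ranges of $\varepsilon$ in the corollary are chosen exactly for this. In both parts the thresholds $r/2$ and $\sqrt r$ exceed $\varepsilon$ by a factor of $2$, so the inequalities are strict and there is no boundary case at equality. The game and $\eta$ stay fixed throughout, and only the initialization $(\rho_r,P)$ depends on $\varepsilon$, as the corollary states.
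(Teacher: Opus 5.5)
Your proposal is correct and is essentially the paper's own proof. Both set $r=4\varepsilon$ (respectively $r=4\varepsilon^2$), identify $T_r$ with the stated floor expressions, and apply the strict delay bounds of Theorem~\ref{thm:ogda-trajectory-lower-bound} on $0\le t\le T_r$, where the thresholds equal $2\varepsilon>\varepsilon$, to conclude that the first hitting time is at least $T_r+1$.
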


\begin{proof}
For the gap claim, set $r=4\varepsilon\le1/4$. For every integer
$0\le t\le\lfloor1/(4\eta\sqrt\varepsilon)\rfloor$,
Theorem~\ref{thm:ogda-trajectory-lower-bound} gives
$G(\joint_t)>r/2=2\varepsilon$. Therefore the first iterate with
gap at most $\varepsilon$ must have an index at least one larger.

For the distance claim, set $r=4\varepsilon^2\le1/4$.
For every integer $0\le t\le\lfloor1/(4\eta\varepsilon)\rfloor$,
the same theorem gives
$\dist_F(\joint_t,\ZZ^*)>\sqrt r=2\varepsilon$.
This proves the second claim. In both cases the starting point itself
is outside the target tolerance, so including $t=0$ in the definition
of the first hitting time causes no ambiguity.
\end{proof}

The preceding quantifiers distinguish a uniform convergence guarantee
from the behavior of a single fixed trajectory. We make both points
explicit.

\begin{corollary}[No uniform geometric distance estimate]
\label{cor:ogda-no-uniform-geometric}
For the fixed game \eqref{eq:ogda-game} and any fixed
$0<\eta\le1/8$, there are no constants $M>0$ and $\lambda\in(0,1)$,
independent of the initial state, such that
\begin{equation}
\dist_F(\joint_t,\ZZ^*)^2
\le M\lambda^t\dist_F(\joint_0,\ZZ^*)^2
\qquad\text{for every initialization and every }t\ge0.
\label{eq:ogda-impossible-uniform-rate}
\end{equation}
\end{corollary}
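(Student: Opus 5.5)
We argue by contradiction, using the family of initializations $(\rho_r,P)$ from Theorem~\ref{thm:ogda-trajectory-lower-bound}. Suppose constants $M>0$ and $\lambda\in(0,1)$ satisfy \eqref{eq:ogda-impossible-uniform-rate} for every initial state and every $t\ge0$. Fix $\eta\in(0,1/8]$. Choose a time $t_0$, depending only on $M$ and $\lambda$, with $M\lambda^{t_0}<1/2$. Then take $r\in(0,1/4]$ so small that the horizon $T_r=\lfloor 1/(2\eta\sqrt r)\rfloor$ satisfies $T_r\ge t_0$. This is possible because $T_r\to\infty$ as $r\downarrow0$, while $M$ and $\lambda$ do not depend on $r$.

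Next we compare the initial and later distances along this trajectory. At $t=0$ Alice's state $\rho_r$ is pure with $\Tr[Q\rho_r]=r$. By \eqref{eq:ogda-pure-distance} in Lemma~\ref{lem:ogda-game}, this gives $\dist_F(\joint_0,\ZZ^*)^2=2r$. At time $t_0\le T_r$, the delay bound \eqref{eq:ogda-delay-bound} gives $\dist_F(\joint_{t_0},\ZZ^*)^2>r$. The assumed estimate would force
\[
r<\dist_F(\joint_{t_0},\ZZ^*)^2\le M\lambda^{t_0}\cdot 2r<r,
\]
which is a contradiction.

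The only real content is the uniformity of \eqref{eq:ogda-delay-bound}. The threshold $r/2$ on the gap, and hence $r$ on the squared distance, holds for all $t\le T_r$ with constants independent of $r$. This is what makes the ratio $\dist_F(\joint_t,\ZZ^*)^2/\dist_F(\joint_0,\ZZ^*)^2$ stay above $1/2$ over arbitrarily long horizons. There is no serious obstacle beyond checking the quantifier order: $M$ and $\lambda$ are fixed first, and $r$ is then chosen depending on them.

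It may also be worth remarking that the same argument rules out uniform geometric estimates for the gap. Indeed, $G(\joint_0)=r$ while $G(\joint_t)>r/2$ on the same horizon.
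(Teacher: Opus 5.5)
Your proof is correct and follows the paper's argument: both compare the initial squared distance $2r$ with the lower bound $r$ from \eqref{eq:ogda-delay-bound} over the horizon $T_r$, and use $T_r\to\infty$ as $r\downarrow0$ to contradict $M\lambda^{t}<1/2$. Your version fixes $t_0$ with $M\lambda^{t_0}<1/2$ first and then picks $r$ with $T_r\ge t_0$, whereas the paper evaluates at $t=T_r$ and lets $r\downarrow0$; these differ only in how the limit is phrased.
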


\begin{proof}
For the initialization \eqref{eq:ogda-hard-initialization}, the initial
squared distance is $2r$. At
$T_r$, Theorem~\ref{thm:ogda-trajectory-lower-bound}
gives squared distance strictly greater than $r$. Thus
\eqref{eq:ogda-impossible-uniform-rate} would imply
$1/2<M\lambda^{T_r}$. But $T_r\to\infty$ as $r\downarrow0$, so
the right-hand side tends to zero, a contradiction.
\end{proof}

For each fixed $r$, however, the same trajectory does admit a geometric
upper bound. The next proposition makes clear that the lower bound
concerns uniformity over starting states.

\begin{proposition}[Geometric bound for a fixed starting state]
\label{prop:ogda-fixed-start-geometric}
Under the assumptions and initialization of
Theorem~\ref{thm:ogda-trajectory-lower-bound},
\begin{equation}
G(\joint_t)
\le r\left(1+\frac{\eta^2r}{2}\right)^{-(t-1)}
\qquad(t\ge1).
\label{eq:ogda-fixed-start-upper-bound}
\end{equation}
\end{proposition}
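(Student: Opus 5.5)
The plan is to turn the multiplicative coherence contraction of Lemma~\ref{lem:ogda-qubit-projection} into a multiplicative contraction of Alice's population. The needed ingredient, absent from the lower-bound argument, is a \emph{lower} bound on Bob's population. By Lemma~\ref{lem:ogda-game}, $G(\joint_t)=a_t$. Step~1 of the proof of Theorem~\ref{thm:ogda-trajectory-lower-bound} gives the facts used below: every Alice state (played and auxiliary) is real and pure, with third Bloch coordinate $w\ge1-2r\ge1/2$; every Bob state is diagonal; and the populations satisfy \eqref{eq:ogda-alice-population-bound}. In particular $a_t\le\widehat a_{t-1}$, so it suffices to show $\widehat a_{t-1}\le r(1+\eta^2r/2)^{-(t-1)}$.

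\emph{Step 1: Bob's population is bounded below from the first round on.} Since $b_0=0$, Alice's first played update has zero shift, so $\alpha_1=\rho_r$ and $a_1=a_0=r$. The recurrences \eqref{eq:ogda-bob-recurrences} then give $b_1=\eta r/2$ and $\widehat b_1=\eta r/2$. Both recurrences only add nonnegative increments, so $\widehat b_t$ is nondecreasing, and $b_{t+1}\ge\widehat b_t$. Hence $\widehat b_j\ge\eta r/2$ and $b_j\ge\eta r/2$ for all $j\ge1$. The clipping at $1$ is harmless here because $\eta r/2<1$. Consequently every auxiliary Alice update $\widehat\alpha_j=\Pi_{\DD_2}(\widehat\alpha_{j-1}-\eta b_jQ)$ with $j\ge1$ uses a shift $h\ge\eta^2r/2$.

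\emph{Step 2: one projection contracts the population.} Take a real pure input with population $a\le1/4$, coherence $c$ and $c^2=a(1-a)$, and let the output have population $a'$ and coherence $c'$. By \eqref{eq:ogda-offdiagonal-projection}, $c'=c/N$ with $N^2=1+2hw+h^2\ge1+2hw\ge1+h$, using $w\ge1/2$. Purity of the output gives
\[
a'(1-a')=c'^2=\frac{a(1-a)}{N^2}.
\]
Lemma~\ref{lem:ogda-qubit-projection} gives $a'\le a$, so $1-a'\ge1-a$, and therefore $a'\le a/N^2\le a/(1+h)$.

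\emph{Step 3: iterate.} Apply Step~2 to the auxiliary updates $j=1,\dots,t-1$, each with $h\ge\eta^2r/2$, starting from $\widehat a_0=r$. This gives
\[
\widehat a_{t-1}\le r\left(1+\frac{\eta^2r}{2}\right)^{-(t-1)}.
\]
Combining with $a_t\le\widehat a_{t-1}$ yields \eqref{eq:ogda-fixed-start-upper-bound}. The case $t=1$ is just $a_1=r$.

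The only delicate point is the indexing. The first played update has zero shift, and one must check that the lower bound on Bob's population is available exactly for the updates that are counted, which gives the exponent $t-1$. Converting the coherence contraction into a population contraction is routine once purity and $a'\le a$ are in place.
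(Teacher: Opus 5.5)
Your proof is correct and follows essentially the same route as the paper. Both use the lower bound $b_j\ge\eta r/2$ for $j\ge1$ (from $\alpha_1=\rho_r$), the estimate $N^2\ge1+\eta b_j\ge1+\eta^2r/2$ for each auxiliary Alice projection (using $w\ge1/2$), iteration over the $t-1$ auxiliary updates, and a final comparison of $\alpha_t$ with $\widehat\alpha_{t-1}$. The only difference is bookkeeping: you convert the coherence contraction into the population contraction $a'\le a/N^2$ at every step, whereas the paper iterates on $\widehat c_t^{\,2}\le r(1-r)(1+\eta^2r/2)^{-t}$ and converts once at the end via $a_t=c_t^2/(1-a_t)\le\widehat c_{t-1}^{\,2}/(1-r)$.
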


\begin{proof}
Because $b_0=0$, the first played Alice state is $\alpha_1=\alpha_0$,
so $a_1=r$. Equation \eqref{eq:ogda-bob-recurrences} then gives
$b_1=\widehat b_1=\eta r/2$; no clipping occurs because
$\eta r/2\le1/64$. The auxiliary Bob populations are nondecreasing,
and $b_{t+1}\ge\widehat b_t$. It follows that
$b_t\ge\eta r/2$ for every $t\ge1$.

In the auxiliary Alice update at round $t$, the shift has size
$\eta b_{t+1}\ge\eta^2r/2$, and the pure input has third coordinate
$1-2\widehat a_t\ge1/2$. By
\eqref{eq:ogda-offdiagonal-projection}, the squared normalization
factor is at least
\begin{equation*}
1+2\eta b_{t+1}(1-2\widehat a_t)+\eta^2b_{t+1}^2
\ge1+\eta b_{t+1}\ge1+\frac{\eta^2r}{2}.
\end{equation*}
Consequently,
$\widehat c_t^{\,2}\le r(1-r)(1+\eta^2r/2)^{-t}$.
The played projection only decreases the off-diagonal entry, so
$c_t\le\widehat c_{t-1}$ for $t\ge1$.
The state $\alpha_t$ is a real, rank-one density matrix, so
\begin{equation*}
0=\det
\begin{pmatrix}
1-a_t&c_t\\
c_t&a_t
\end{pmatrix}
=a_t(1-a_t)-c_t^2.
\end{equation*}
Thus $a_t=c_t^2/(1-a_t)$. Since $a_t\le r<1$,
the denominator satisfies $1-a_t\ge1-r>0$.
Combining this with $0<c_t\le\widehat c_{t-1}$ and the
preceding bound on $\widehat {c}_{t-1}^{\,2}$ gives
\begin{equation*}
a_t
=\frac{c_t^2}{1-a_t}
\le\frac{\widehat {c}_{t-1}^{\,2}}{1-r}
\le r\left(1+\frac{\eta^2r}{2}\right)^{-(t-1)}.
\end{equation*}
Since the duality gap equals $a_t$ by Lemma \ref{lem:ogda-game}, this proves the claim.
\end{proof}

For the maximally mixed initialization, the same game reaches an
exact equilibrium in finitely many iterations.

\begin{proposition}[Finite termination from maximally mixed states]
\label{prop:ogda-mixed-initialization}
For \eqref{eq:ogda-game}, initialize
$\alpha_0=\widehat\alpha_0=\beta_0=\widehat\beta_0=I/2$.
Then the updates \eqref{eq:ogda-updates} satisfy
\begin{equation}
G(\joint_t)=a_t
\le\max\left\{0,\frac12-\frac{\eta t}{4}\right\}
\qquad(t\ge0).
\label{eq:ogda-mixed-finite-bound}
\end{equation}
In particular, an exact equilibrium is reached by iteration
$\lceil2/\eta\rceil$.
\end{proposition}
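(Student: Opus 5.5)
Starting from $(I/2,I/2)$, every iterate stays diagonal, so the dynamics reduce to the scalar populations $a_t,\widehat a_t,b_t,\widehat b_t$, and the claimed bound follows from a linear decrease of $\widehat a_t$.

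\textbf{Step 1: everything stays diagonal.} The gradients $b_tQ$ and $a_tQ$ are diagonal. By \eqref{eq:ogda-general-projection}, the projection of a diagonal state shifted by a multiple of $Q$ is again diagonal, since its Bloch vector has $u=v=0$. For Alice, \eqref{eq:ogda-general-projection} with $h\ge0$ gives the descent analogue of \eqref{eq:ogda-diagonal-projection}:
\[
\Pi_{\DD_2}(D(a)-hQ)=D\bigl(\max\{0,a-h/2\}\bigr).
\]
The Bloch coordinate is $w=1-2a+h$, clipped at $1$. Bob's updates are given exactly by \eqref{eq:ogda-bob-recurrences}. Alice's become
\[
a_{t+1}=\max\{0,\widehat a_t-\eta b_t/2\},\qquad
\widehat a_{t+1}=\max\{0,\widehat a_t-\eta b_{t+1}/2\}.
\]
Since $G(\joint_t)=a_t$ by Lemma~\ref{lem:ogda-game}, it suffices to bound $a_t$.

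\textbf{Step 2: Bob's population stays at least $1/2$.} The auxiliary sequence $\widehat b_t$ is nondecreasing, starting from $\widehat b_0=1/2$. Moreover $b_{t+1}\ge\widehat b_t\ge1/2$ for every $t\ge0$, and $b_0=1/2$. Hence $b_t\ge1/2$ for all $t\ge0$.

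\textbf{Step 3: Alice's population decreases linearly.} By Step 2, each auxiliary update reduces $\widehat a_t$ by at least $\eta/4$ until it reaches $0$. By induction from $\widehat a_0=1/2$,
\[
\widehat a_t\le\max\{0,\tfrac12-\tfrac{\eta t}{4}\}.
\]
The played update gives $a_{t+1}\le\max\{0,\widehat a_t-\eta/4\}\le\max\{0,\tfrac12-\tfrac{\eta(t+1)}{4}\}$. At $t=0$ the bound holds because $a_0=1/2$. This proves \eqref{eq:ogda-mixed-finite-bound}.

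\textbf{Step 4: finite termination.} For $t\ge 2/\eta$ the right-hand side of \eqref{eq:ogda-mixed-finite-bound} is zero, so $a_t=0$. Then $\alpha_t$ is diagonal with $Q$-population zero, so $\alpha_t=P$. By Lemma~\ref{lem:ogda-game}, $(P,\beta_t)$ lies in $\ZZ^*$ for any $\beta_t$, so an exact equilibrium is reached by iteration $\lceil2/\eta\rceil$.

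The only delicate point is verifying the descent projection formula and keeping the update order straight: played states are computed from $b_t$, auxiliary states from $b_{t+1}$. Both are routine consequences of Lemma~\ref{lem:ogda-qubit-projection}.
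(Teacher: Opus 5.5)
Your proof is correct and follows the paper's argument essentially step for step. Both arguments use diagonality via the Bloch projection formula, the bound $b_t,\widehat b_t\ge 1/2$ from nonnegativity of Alice's populations, a decrease of $\widehat a_t$ by at least $\eta/4$ per auxiliary update, and the same one-step bound on the played population; your explicit check of $\Pi_{\DD_2}(D(a)-hQ)=D(\max\{0,a-h/2\})$ fills in a detail the paper leaves implicit.
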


\begin{proof}
All four state sequences remain diagonal by
\eqref{eq:ogda-general-projection}. Bob's populations obey
\eqref{eq:ogda-bob-recurrences} with initial value $1/2$.
Since all Alice populations are nonnegative, induction gives
$b_t,\widehat b_t\ge1/2$ for all $t$.

For a diagonal Alice state, the same projection formula gives
\begin{equation*}
a_{t+1}=\max\left\{0,\widehat a_t-\frac{\eta b_t}{2}\right\},
\qquad
\widehat a_{t+1}
=\max\left\{0,\widehat a_t-\frac{\eta b_{t+1}}{2}\right\}.
\end{equation*}
Both right-hand sides are at most
$\max\{0,\widehat a_t-\eta/4\}$.
Starting from $\widehat a_0=1/2$, induction shows
$\widehat a_t\le\max\{0,1/2-\eta t/4\}$ and the same bound for
$a_t$. Equation \eqref{eq:ogda-gap-nash} then proves
\eqref{eq:ogda-mixed-finite-bound}. When the right-hand side is zero,
$\alpha_t=P$, and every pair $(P,\beta_t)$ is an equilibrium.
\end{proof}

\begin{remark}[Scope of the result]
The lower bound applies to \eqref{eq:ogda-updates} with the
coherent initial states in \eqref{eq:ogda-hard-initialization}. It
does not disprove a logarithmic bound for an algorithm restricted to
the maximally mixed starting state, and it does not disprove a
geometric rate whose constants may depend on the starting state.
Propositions~\ref{prop:ogda-fixed-start-geometric}
and~\ref{prop:ogda-mixed-initialization} establish these distinctions
directly in the same game. 
\end{remark}

% Insert at the end of the OGDA section, before the OMMWU section.
% Uses the draft's existing notation, theorem environments, and bibliography.
% No preamble additions are required. All new labels use the ogda-mm prefix.

\subsection{A lower bound from maximally mixed initialization}
\label{subsec:ogda-mm}

Proposition~\ref{prop:ogda-mixed-initialization} shows that the game
\eqref{eq:ogda-game} reaches equilibrium in finitely many iterations
from $(I/2,I/2)$. We now construct a second game for which this same
initialization produces polynomial convergence along a single fixed
trajectory. The construction adapts the skew-symmetric example with
a curved boundary in
\cite[Theorem~9]{WeiEtAl2021} to the full
qubit state spaces. In this subsection, $f,A,B,\gradient,G,\ZZ^*$
and the iterate symbols refer to the new game and its run.

\paragraph{Game construction.}
Consider the observable
\begin{equation}
U_{\rm G}^{\rm mix}
=\frac14\big((I-Z)\otimes X-X\otimes(I-Z)\big).
\label{eq:ogda-mm-game}
\end{equation}
It is Hermitian and satisfies $\norm{U_{\rm G}^{\rm mix}}_{\rm op}\le1$,
since $\norm{I-Z}_{\rm op}=2$ and $\norm{X}_{\rm op}=1$.
For $x=\Tr[X\alpha]$, $z=\Tr[Z\alpha]$,
$y=\Tr[X\beta]$, and $w=\Tr[Z\beta]$, its payoff and gradients are
\begin{align}
f(\alpha,\beta)&=\frac14\big((1-z)y-x(1-w)\big),
\label{eq:ogda-mm-payoff}\\
A(\beta)&=\frac14\big(y(I-Z)-(1-w)X\big),\\
B(\alpha)&=\frac14\big((1-z)X-x(I-Z)\big).
\label{eq:ogda-mm-gradients}
\end{align}
Fix $0<\eta\le1/8$, use the updates~\eqref{eq:ogda-updates}, and
initialize $\widehat\joint_0=\joint_0=(I/2,I/2)$. Set $h=\eta/2$,
so $0<h\le1/16$. Neither the game nor the initialization depends
on the target accuracy.

\paragraph{Why the construction delays convergence.}
The payoff creates coherence at the first update:
$A(I/2)=-X/4$ and $B(I/2)=X/4$, so
$\alpha_1=\beta_1=\tfrac12(I+(\eta/2)X)$.
Both off-diagonal entries are therefore $\eta/4>0$.
The identity $B(\rho)=-A(\rho)$ keeps the two players' states
identical throughout the run. After a finite initial phase, all
played and auxiliary states become pure, as proved below.

To illustrate the subsequent motion, write a pure state on the
trajectory as $\rho(\sin\theta,0,\cos\theta)$, with
$0<\theta<\pi/2$. In the $(x,z)$ cross-section, its Bloch vector
is $r=(\sin\theta,\cos\theta)$ and the equilibrium is $e=(0,1)$.
The unprojected increment has direction
$g(r)=(1-\cos\theta,\sin\theta)$. Since the boundary of Bloch sphere is parameterized by $r(\theta)=(\sin\theta,\cos\theta)$,
differentiating with respect to $\theta$ gives
$r'(\theta)=(\cos\theta,-\sin\theta)$.
Since $\|r'(\theta)\|_2=1$, this derivative is the unit tangent
pointing in the direction of increasing $\theta$.
The signed tangential component of the increment
$g(r)=(1-\cos\theta,\sin\theta)$ is therefore
\begin{equation}
\begin{aligned}
g(r)\cdot r'(\theta)
&=(1-\cos\theta)\cos\theta-\sin^2\theta\\
&=\cos\theta-1
=-\frac{\theta^2}{2}+\O(\theta^4),
\end{aligned}
\label{eq:ogda-mm-tangent}
\end{equation}
where we used $\cos^2\theta+\sin^2\theta=1$ and the Taylor
expansion of $\cos\theta$ at zero.
The negative sign means that this component points toward decreasing
$\theta$, hence toward the equilibrium $r(0)=e=(0,1)$.
Its magnitude vanishes quadratically in $\theta$, explaining why
the drift toward equilibrium along the boundary becomes small.
Thus the motion toward equilibrium along the boundary is quadratic
in the current angular error. For the actual optimistic updates,
the auxiliary horizontal coordinate
$\widehat x_t=\Tr[X\widehat\alpha_t]$ satisfies
$\widehat x_{t+1}=\widehat x_t-(h/2)\widehat x_t^2
+\O(\widehat x_t^3)$. Its reciprocal consequently grows
asymptotically by $h/2$ per round, giving a $1/t$ rate of convergence in distance.
The proof derives this recurrence from both updates and also
computes the distinct rate of the duality gap.

\begin{theorem}[OGDA from maximally mixed states]
\label{thm:ogda-mm-rates}
The game~\eqref{eq:ogda-mm-game} has value zero and the unique
Nash equilibrium $\jointstar=(P,P)$. For every fixed
$0<\eta\le1/8$, the run~\eqref{eq:ogda-updates} initialized at
$\widehat\joint_0=\joint_0=(I/2,I/2)$ satisfies
\begin{equation}
\dist_F(\joint_t,\jointstar)\sim\frac{4}{\eta t} \qquad \text{and}
\qquad
G(\joint_t)\sim\frac{4}{\eta^3t^3}.
\label{eq:ogda-mm-rates}
\end{equation}
Every finite played iterate has positive distance and positive gap.
In particular, some constant $c_\eta>0$, independent of $t$ and
the target accuracy, satisfies
$\dist_F(\joint_t,\jointstar)\ge c_\eta/(t+1)$ for all $t\ge0$.
\end{theorem}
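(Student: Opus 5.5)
The plan is to reduce everything to a scalar recurrence for one Bloch coordinate, using the symmetry of the game and the projection formula of Lemma~\ref{lem:ogda-qubit-projection}.

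\textbf{Equilibrium and gap formula.} For fixed $\alpha$ with Bloch coordinates $(x,z)$, Bob maximizes over the unit ball a function that is affine in $(y,w)$:
\[
-\tfrac{x}{4}+\tfrac14\bigl((1-z)y+xw\bigr),
\]
and the maximum is $\tfrac14\bigl(\sqrt{(1-z)^2+x^2}-x\bigr)$. Alice's best response is symmetric. This gives the closed form
\[
G(\alpha,\beta)=\tfrac14\Bigl(\sqrt{(1-z)^2+x^2}-x\Bigr)+\tfrac14\Bigl(\sqrt{(1-w)^2+y^2}-y\Bigr).
\]
Each term is nonnegative, with equality only when the $Y$-coordinate vanishes, $x\ge0$ and $1-z=0$. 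The first term is the $\alpha$-part, so that case forces $\alpha=P$; the second term likewise forces $\beta=P$.

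The value is therefore zero and $(P,P)$ is the unique equilibrium. For real states with $x\ge0$, the same formula yields the two relations
\[
\dist_F(\joint,\jointstar)^2=\norm{\alpha-P}_F^2+\norm{\beta-P}_F^2,
\qquad
\norm{\alpha-P}_F^2=\tfrac12\bigl(x^2+(1-z)^2\bigr).
\]

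\textbf{Symmetry and reduction.} Since $B(\rho)=-A(\rho)$, Bob's update $\widehat\beta_t+\eta B(\alpha_t)$ coincides with Alice's update $\widehat\alpha_t-\eta A(\beta_t)$ whenever $\alpha_t=\beta_t$ and $\widehat\alpha_t=\widehat\beta_t$. Induction then gives $\alpha_t=\beta_t$ and $\widehat\alpha_t=\widehat\beta_t$ for all $t$. All states stay real, with $Y$-coordinate zero.

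Writing $-\eta A(\rho)=\tfrac{h}{2}\bigl((1-z)X-x(I-Z)\bigr)$ with $h=\eta/2$, the trace part is removed by projection, as in the proof of Lemma~\ref{lem:ogda-qubit-projection}. The Bloch vector $(\widehat x,\widehat z)$ is thus mapped to
\[
(\widehat x+h(1-z),\;\widehat z+hx),
\]
where $(x,z)$ is the opponent state used at that step, followed by radial normalization onto the unit ball.

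\textbf{Finite initial phase.} Starting from $(0,0)$, the increments $h(1-z)\ge0$ and $hx\ge0$ keep $x,z\ge0$. Both coordinates grow by at least order $h$ per step until the norm exceeds one. I will show by a direct, crude induction that after some finite $t_0$ (depending on $\eta$) all played and auxiliary states are pure, with $0<x$ and $z>0$, and that $x_t\to0$. Monotone decrease of the angle once on the sphere follows from the tangential computation~\eqref{eq:ogda-mm-tangent}.

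\textbf{Recurrence and asymptotics.} On the sphere, parametrize $\widehat\alpha_t$ by angle $\widehat\theta_t$ and $\alpha_{t+1}$ by $\theta_{t+1}$. One projection step with shift vector $h(1-\cos\theta,\sin\theta)$, where $\theta$ is the opponent's angle, changes the angle by
\[
-\,h\sin\theta\,\sin\widehat\theta+h(1-\cos\theta)\cos\widehat\theta+\O(h^2\theta^2),
\]
which is $-h\theta\widehat\theta+h\theta^2/2+\dots$. Because the angle changes by $\O(\theta^2)$ per step, the played angle is $\theta_{t+1}=\widehat\theta_t+\O(\widehat\theta_t^2)$.

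Composing the two projections exactly will give
\[
\widehat\theta_{t+1}=\widehat\theta_t-\tfrac{h}{2}\widehat\theta_t^2+\O(\widehat\theta_t^3),
\]
equivalently the stated recurrence for $\widehat x_t$. Then $1/\widehat\theta_{t+1}-1/\widehat\theta_t\to h/2=\eta/4$, and Stolz--Ces\`aro gives $\widehat\theta_t\sim4/(\eta t)$. The same holds for $\theta_t$.

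For a pure pair, $\norm{\alpha-P}_F^2=1-\cos\theta$, so $\dist_F^2=2(1-\cos\theta)\sim\theta^2$, giving $\dist_F\sim4/(\eta t)$. The gap formula with $x=\sin\theta$ and $1-z=1-\cos\theta$ gives
\[
G=\tfrac12\bigl(\sqrt{(1-\cos\theta)^2+\sin^2\theta}-\sin\theta\bigr)\sim\theta^3/16=4/(\eta^3t^3).
\]

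\textbf{Positivity and uniform lower bound.} Every iterate has $x_t>0$: the increment $h(1-z)\ge0$ and radial scaling preserve positivity once $x_1=\eta/2>0$. Hence every finite iterate is at positive distance and has positive gap. Combining positivity at each $t\le t_1$ with the asymptotic bound for $t>t_1$ gives the constant $c_\eta$.

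\textbf{Main obstacle.} The hardest step is the exact second-order expansion through both projections, with the optimistic lag, to obtain the coefficient $h/2$ rather than $h$ or $0$. The first-order terms $-h\theta\widehat\theta$ and $+h\theta^2/2$ nearly cancel and the lag enters at the same order. I would handle this by writing both normalizations explicitly via $N=\sqrt{1+2(\cdot)+(\cdot)^2}$ and Taylor-expanding in $\widehat\theta_t$ with a rigorous $\O(\widehat\theta^3)$ remainder, valid once $\widehat\theta_t$ is below a threshold. The rough finite-phase control is the secondary difficulty.
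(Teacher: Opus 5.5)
Your reduction to the disk recurrence \eqref{eq:ogda-mm-disk-updates}, the closed forms for the gap and the distance, the second-order expansion with coefficient $h/2$, and the Stolz--Ces\`aro step all match the paper. Tracking the angle instead of $\widehat x_t$ makes no difference.

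The gap is the step that brings you into the regime where that expansion holds. You need $\widehat\theta_t\to0$ and $\theta_t=\O(\widehat\theta_t)$; the latter is exactly what makes $\theta_{t+1}=\widehat\theta_t+\O(\widehat\theta_t^2)$ true. Your justification, monotone decrease of the angle ``from \eqref{eq:ogda-mm-tangent}'', does not apply, because that computation evaluates $g$ at the same point to which it is added. In OGDA the shift added to $\widehat r_t$ is $hg(r_t)$ or $hg(r_{t+1})$. By your own formula its tangential component is $h\bigl(\cos\widehat\theta-\cos(\theta-\widehat\theta)\bigr)$, which is negative only when $0<\theta<2\widehat\theta$. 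So monotonicity presupposes control of the lag between played and auxiliary angles, which you have not established. Even granting monotonicity, two things remain. You must exclude a positive limit angle. You must also treat the first step after boundary entry separately: the played state $r_T$ may still be interior, and then $g(r_T)$ is not of the form $(1-\cos\theta,\sin\theta)$.

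The paper closes this with a global argument that uses no angles. The map $K=-g$ is skew with $K(e)=0$ and $\langle K(r),r-e\rangle=0$. The optimistic energy inequality \cite[Lemma~1]{WeiEtAl2021} therefore makes $E_t=\norm{\widehat r_t-e}_2^2+\tfrac1{16}\norm{\widehat r_t-r_t}_2^2$ nonincreasing with summable decrements. Hence $\widehat r_t-r_t\to0$, every cluster point is a fixed point of $r\mapsto\Pi_{\mathcal B_2}(r+hg(r))$, and the only such point is $e$. Comparability is then proved separately through $q_{t+1}\le1+hMq_t$ for $q_t=x_t/\widehat x_{t-1}$.

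Your angle route can instead be repaired by an explicit induction on $0\le\theta_t\le\widehat\theta_{t-1}\le2\widehat\theta_t$:
\begin{itemize}
\item this hypothesis makes both tangential factors nonpositive;
\item it bounds each one-step decrease by $h(1-\cos\widehat\theta_t)\le h\widehat\theta_t^2/2$, so $\widehat\theta_{t+1}\ge\widehat\theta_t/2$ and the hypothesis propagates;
\item it rules out a positive limit, because the decrements would stay bounded below.
\end{itemize}
You would still need to verify the base case at boundary entry, which requires quantitative tracking of the interior phase.

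Finally, the remainder you wrote as $\O(h^2\theta^2)$ must be $\O(\theta^3)$, since an error of order $\theta^2$ would compete with the coefficient $h/2$. It is $\O(\theta^3)$, because the normal part of the shift is $\O(h\theta)$ and the tangential part is $\O(h\theta^2)$.
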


\begin{proof}
We identify the equilibrium, derive the exact symmetric trajectory,
and then calculate its asymptotic rate. Recall that $h=\eta/2$,
so $0<h\le1/16$.

\paragraph{Step 1: Identify the equilibrium.}
Maximizing~\eqref{eq:ogda-mm-payoff} over Bob's Bloch ball gives
\begin{equation}
\max_{\beta}f(\alpha,\beta)
=\frac14\left(\sqrt{x^2+(1-z)^2}-x\right)\ge0.
\label{eq:ogda-mm-alice-value}
\end{equation}
Equality requires $z=1$ and $x\ge0$. Feasibility with $z=1$
forces $\alpha=P$, so Alice's unique minimax strategy is $P$.
Since $f(P,\beta)=0$ for every $\beta$, the game value is zero.
Skew-symmetry, $f(\beta,\alpha)=-f(\alpha,\beta)$, similarly gives
\begin{equation}
\min_{\alpha}f(\alpha,\beta)
=\frac14\left(y-\sqrt{y^2+(1-w)^2}\right)\le0.
\label{eq:ogda-mm-bob-value}
\end{equation}
Equality forces $\beta=P$, so Bob's unique maximin strategy is
also $P$. Both equilibrium states are pure. Furthermore,
$A(P)=B(P)=0$, so both slack matrices vanish and strict
complementarity fails.

\paragraph{Step 2: Reduce the updates to a symmetric trajectory.}
We first establish the projection formula for an arbitrary Hermitian
input. Write $H=\tfrac12(sI+b_1X+b_2Y+b_3Z)$, where
$s\in\mathbb R$ and $b=(b_1,b_2,b_3)\in\mathbb R^3$.
For any $a=(a_1,a_2,a_3)$ with $\norm{a}_2\le1$, orthogonality
of the Pauli matrices gives
\begin{equation*}
\norm{\rho(a_1,a_2,a_3)-H}_F^2
=\frac12(1-s)^2+\frac12\norm{a-b}_2^2.
\end{equation*}
The first term is independent of $a$. Minimizing the second term
over the unit ball therefore gives
\begin{equation*}
\Pi_{\DD_2}(H)
=\rho\left(\frac{b_1}{L},\frac{b_2}{L},\frac{b_3}{L}\right),
\qquad L=\max\{1,\norm{b}_2\}.
\end{equation*}
This identity applies to every Hermitian input, regardless of its
trace or eigenvalues.

For $\rho=\rho(x,0,z)$ and
$\widehat\rho=\rho(\widehat x,0,\widehat z)$, the gradient formula
gives the exact identity
\begin{equation*}
\widehat\rho-\eta A(\rho)
=\frac12\left((1-hx)I
+\big(\widehat x+h(1-z)\big)X
+\big(\widehat z+hx\big)Z\right).
\end{equation*}
Since $B(\rho)=-A(\rho)$, the two players have identical
projection inputs whenever their current and auxiliary states agree.
The preceding projection formula also preserves the zero
$Y$-coordinate. Induction from the maximally mixed initialization
therefore yields
\begin{equation}
\alpha_t=\beta_t=\rho(x_t,0,z_t),
\qquad
\widehat\alpha_t=\widehat\beta_t
=\rho(\widehat x_t,0,\widehat z_t).
\label{eq:ogda-mm-symmetry}
\end{equation}
Let $\mathcal B_2=\{(x,z)\in\mathbb R^2:x^2+z^2\le1\}$,
$r_t=(x_t,z_t)$, $\widehat r_t=(\widehat x_t,\widehat z_t)$,
and $g(x,z)=(1-z,x)$. With
$\Pi_{\mathcal B_2}(r)=r/\max\{1,\norm{r}_2\}$, both updates
are exactly
\begin{equation}
\begin{aligned}
r_{t+1}
=\Pi_{\mathcal B_2}\big(\widehat r_t+h g(r_t)\big), \qquad
\widehat r_{t+1}
=\Pi_{\mathcal B_2}\big(\widehat r_t+h g(r_{t+1})\big),
\end{aligned}
\label{eq:ogda-mm-disk-updates}
\end{equation}
with $r_0=\widehat r_0=(0,0).$ These recurrences follow from projection onto the full qubit state spaces $\DD_2$.

\paragraph{Step 3: Prove finite boundary entry and positivity.}
The first updates are $r_1=(h,0)$ and
$\widehat r_1=(h,h^2)$; neither projection is active because
$h\le1/16$. Feasibility gives $1-z_t\ge0$, and radial
projection preserves nonnegative coordinates. Induction shows
that all coordinates remain nonnegative and that
$x_t,\widehat x_t>0$ for every $t\ge1$. Moreover,
\begin{equation}
\norm{g(r_t)}_2^2=x_t^2+(1-z_t)^2
\le2(1-z_t)\le2.
\label{eq:ogda-mm-increment-bound}
\end{equation}
Thus every projection denominator in
\eqref{eq:ogda-mm-disk-updates} lies between $1$ and
$M:=1+\sqrt2h$.

Suppose that no auxiliary vector ever reached the unit circle.
All auxiliary projections would then be inactive, so
$\widehat x_t\ge h$ for every $t\ge1$. The played update
would give $x_{t+1}\ge\widehat x_t/M\ge h/M$, and hence
\begin{equation*}
\widehat z_{t+1}
=\widehat z_t+h x_{t+1}
\ge\widehat z_t+\frac{h^2}{M}
\qquad(t\ge1).
\end{equation*}
This contradicts $\widehat z_t\le1$. A finite auxiliary boundary
entry time $T$ therefore exists. Once an auxiliary vector has
norm one, adding either nonnegative increment produces a vector
of norm at least one. Both projections then return unit vectors.
Consequently, $\norm{\widehat r_t}_2=1$ for $t\ge T$ and
$\norm{r_t}_2=1$ for $t\ge T+1$.

\paragraph{Step 4: Establish convergence to the equilibrium.}
Set $e=(0,1)$ and $K=-g$, so $K(x,z)=(z-1,-x)$.
The operator $K$ is $1$-Lipschitz and satisfies
$\langle K(r)-K(s),r-s\rangle=0$ for all $r,s$.
Also, $K(e)=0$ and $\langle K(r),r-e\rangle=0$.
Define $E_t=\norm{\widehat r_t-e}_2^2
+\tfrac1{16}\norm{\widehat r_t-r_t}_2^2$.
The Euclidean optimistic mirror-descent inequality
\cite[Lemma~1]{WeiEtAl2021}, with $h\le1/16<1/8$, gives
\begin{equation}
E_{t+1}\le E_t-\frac{15}{16}
\left(\norm{\widehat r_{t+1}-r_{t+1}}_2^2
      +\norm{r_{t+1}-\widehat r_t}_2^2\right).
\label{eq:ogda-mm-energy}
\end{equation}
Here the operator term vanishes because
$\langle K(r_{t+1}),r_{t+1}-e\rangle=0$.
Since $E_t\ge0$, the two squared differences in
\eqref{eq:ogda-mm-energy} are summable. In particular,
$\widehat r_t-r_t\to0$ and $r_{t+1}-\widehat r_t\to0$,
which also imply $r_{t+1}-r_t\to0$.

Let $r$ be a cluster point of $(r_t)$, and choose
$t_j\to\infty$ with $r_{t_j}\to r$. The preceding differences
give $\widehat r_{t_j}\to r$ and $r_{t_j+1}\to r$.
Passing to the limit in the played update yields
$r=\Pi_{\mathcal B_2}(r+h g(r))$.
There is no such fixed point in the interior of $\mathcal B_2$:
an interior projection output equals its input, so an interior
fixed point would satisfy $g(r)=0$. This forces $r=e$, which
lies on the boundary, a contradiction.

At a boundary fixed point, the radial projection formula gives
$g(r)=\lambda r$ for some $\lambda\ge0$. Writing $r=(x,z)$
and taking the inner product with $r-e$ gives
\begin{equation*}
0=\langle g(r),r-e\rangle
=\lambda\langle r,r-e\rangle
=\lambda(1-z).
\end{equation*}
If $\lambda=0$, then $g(r)=0$ and $r=e$. If $\lambda>0$,
then $z=1$, and feasibility again forces $r=e$.
Thus every cluster point is $e$. Compactness now implies
$r_t,\widehat r_t\to e$.

\paragraph{Step 5: Derive the quadratic scalar recurrence.}
We first compare the horizontal coordinates without assuming
their rate of decay. The auxiliary numerator is at least
$\widehat x_{t-1}$ and its denominator is at most $M$, so
$\widehat x_t\ge\widehat x_{t-1}/M$ for every $t\ge1$.
For $t\ge T+1$, the played vector is on the upper unit semicircle.
Hence $1-z_t=x_t^2/(1+z_t)\le x_t$, and the played update gives
\begin{equation}
\frac{x_{t+1}}{\widehat x_t}
\le1+h\frac{x_t}{\widehat x_t}
\le1+hM\frac{x_t}{\widehat x_{t-1}}
\qquad(t\ge T+1).
\label{eq:ogda-mm-comparison}
\end{equation}
To make this bound explicit, set
$q_t=x_t/\widehat x_{t-1}$ for $t\ge T+1$ and $a=hM<1$.
Then $q_{t+1}\le1+a q_t$, so iteration gives
\begin{equation*}
q_{T+1+n}
\le a^n q_{T+1}+\sum_{j=0}^{n-1}a^j
\le q_{T+1}+\frac1{1-a}
\qquad(n\ge0),
\end{equation*}
where the sum is zero when $n=0$. Thus
$x_{t+1}=\O(\widehat x_t)$ and, using
$\widehat x_{t-1}\le M\widehat x_t$, also
$x_t=\O(\widehat x_t)$.

Let $u=\widehat x_t$. After boundary entry, the played state
$\alpha_t=\rho(x_t,0,z_t)$ is pure and hence has zero determinant:
$\det(\alpha_t)=(1-x_t^2-z_t^2)/4=0$.
Therefore $x_t^2+z_t^2=1$.
Since $z_t\ge0$, we can factor $1-z_t^2$ and divide by
$1+z_t\ge1$ to obtain
\begin{equation*}
1-z_t=\frac{x_t^2}{1+z_t}\le x_t^2.
\end{equation*}
The previously established estimate $x_t=\O(u)$ now implies
$1-z_t=\O(u^2)$.
Moreover, $g(r_t)=(1-z_t,x_t)$ satisfies
\begin{equation*}
\|g(r_t)\|_2^2
=(1-z_t)^2+x_t^2
=(1-z_t)^2+1-z_t^2
=2(1-z_t)
=\O(u^2).
\end{equation*}
Since $u>0$, it follows that $\|g(r_t)\|_2=\O(u)$. Let
$L_t=\norm{\widehat r_t+h g(r_t)}_2$ be the played projection
denominator after boundary entry. Since
$\norm{\widehat r_t}_2=1$ and both vectors have nonnegative
coordinates, $1\le L_t\le1+h\norm{g(r_t)}_2=1+\O(u)$.
Consequently,
\begin{equation*}
x_{t+1}-u
=\frac{u+h(1-z_t)}{L_t}-u
=\frac{h(1-z_t)-u(L_t-1)}{L_t}
=\O(u^2).
\end{equation*}
We have therefore proved
\begin{equation}
x_{t+1}=\widehat x_t+\O(\widehat x_t^2).
\label{eq:ogda-mm-played-comparison}
\end{equation}
All implicit constants below may depend on the fixed step size,
but not on $t$.

Now put $v=x_{t+1}=u+\O(u^2)$. For sufficiently large $t$,
the auxiliary and played vectors are on the upper unit semicircle,
so their vertical coordinates are $\sqrt{1-u^2}$ and
$\sqrt{1-v^2}$, respectively. The auxiliary update is exactly
\begin{equation}
\widehat x_{t+1}
=\frac{u+h(1-\sqrt{1-v^2})}
{\sqrt{\big(u+h(1-\sqrt{1-v^2})\big)^2
       +\big(\sqrt{1-u^2}+hv\big)^2}}.
\label{eq:ogda-mm-aux-exact}
\end{equation}
Write $\delta(v)=1-\sqrt{1-v^2}$ and let $N$ denote this
denominator. Since $u\to0$ and $v=\O(u)$, Taylor expansion gives
$\delta(v)=v^2/2+\O(v^4)$ and
$\sqrt{1-u^2}=1-u^2/2+\O(u^4)$. Expanding the two squares,
and using $u^2+(\sqrt{1-u^2})^2=1$, gives
\begin{align*}
N^2
=1+2h\big(u\delta(v)+\sqrt{1-u^2}\,v\big)
       +h^2\big(\delta(v)^2+v^2\big),
\end{align*}
and consequently,
\begin{align*}
N^2-(1+hv)^2
=2h u\delta(v)
  +2h\big(\sqrt{1-u^2}-1\big)v
  +h^2\delta(v)^2.
\end{align*}
The three terms on the last line are respectively
$\O(u^3)$, $\O(u^3)$, and $\O(u^4)$.
Moreover, $N\ge1$ and $v\ge0$, so
\begin{equation*}
N-(1+hv)
=\frac{N^2-(1+hv)^2}{N+1+hv}
=\O(u^3).
\end{equation*}
It follows that
$N^{-1}=1-hv+h^2v^2+\O(u^3)$, while the numerator is
$u+(h/2)v^2+\O(u^4)$. Multiplying these expansions yields
\begin{align*}
\widehat x_{t+1}
=\left(u+\frac h2v^2+\O(u^4)\right)
  \left(1-hv+h^2v^2+\O(u^3)\right)
=u-huv+\frac h2v^2+\O(u^3).
\end{align*}
Here the omitted products have order at least $u^3$ because
$v=\O(u)$. Finally, $v=u+\O(u^2)$ gives
$uv=u^2+\O(u^3)$ and $v^2=u^2+\O(u^3)$, so
\begin{equation}
\widehat x_{t+1}
=u-\frac h2u^2+\O(u^3)
=\widehat x_t-\frac h2\widehat x_t^2
 +\O(\widehat x_t^3).
\label{eq:ogda-mm-scalar}
\end{equation}
To extract the rate, write $u_t=\widehat x_t>0$. Since $u_t\to 0$,
the recurrence gives
\begin{align*}
\frac1{u_{t+1}}-\frac1{u_t}
=\frac{u_t-u_{t+1}}{u_tu_{t+1}}
=\frac{\frac h2u_t^2+\O(u_t^3)}
        {u_t^2\big(1-\frac h2u_t+\O(u_t^2)\big)}
=\frac h2+\O(u_t)
\longrightarrow\frac h2.
\end{align*}
For a fixed sufficiently large $t_0$, summing these increments gives
\begin{equation*}
\frac1{t u_t}
=\frac1{t u_{t_0}}
+\frac1t\sum_{s=t_0}^{t-1}
 \left(\frac1{u_{s+1}}-\frac1{u_s}\right)
\longrightarrow\frac h2.
\end{equation*}
The limit follows because the averages of a convergent sequence
have the same limit. Thus $t\widehat x_t\to2/h$.
Equation~\eqref{eq:ogda-mm-played-comparison} also gives
$x_{t+1}/\widehat x_t\to1$, and therefore
\begin{equation*}
(t+1)x_{t+1}
=\left(1+\frac1t\right)
  \big(t\widehat x_t\big)
  \frac{x_{t+1}}{\widehat x_t}
\longrightarrow\frac2h.
\end{equation*}
Equivalently, $t x_t\to2/h$.

\paragraph{Step 6: Compute the two errors.}
By~\eqref{eq:ogda-mm-symmetry} and the Frobenius geometry of the
Bloch representation, the joint distance is
$d_t:=\dist_F(\joint_t,\jointstar)
=\sqrt{x_t^2+(1-z_t)^2}$.
After boundary entry, $1-z_t=x_t^2/(1+z_t)=\O(x_t^2)$.
Hence $d_t\sim x_t\sim2/(ht)=4/(\eta t)$.

The best-response values~\eqref{eq:ogda-mm-alice-value} and
\eqref{eq:ogda-mm-bob-value} give
$G(\joint_t)=(d_t-x_t)/2$ at every iteration.
Once the states are pure, $d_t^2=2(1-z_t)$ and
$x_t=d_t\sqrt{1-d_t^2/4}$. Therefore
\begin{equation}
\begin{aligned}
G(\joint_t)
=\frac{d_t}{2}\left(1-\sqrt{1-d_t^2/4}\right)
=\frac{d_t^3}{8\left(1+\sqrt{1-d_t^2/4}\right)}
\sim\frac{d_t^3}{16}.
\end{aligned}
\label{eq:ogda-mm-gap-distance}
\end{equation}
Substituting $d_t\sim4/(\eta t)$ proves
$G(\joint_t)\sim4/(\eta^3t^3)$.

At $t=0$, the distance is $1$ and the gap is $1/2$.
For every $t\ge1$, positivity gives $x_t>0$, and feasibility
then forces $z_t<1$. Thus $d_t>x_t>0$ and
$G(\joint_t)=(d_t-x_t)/2>0$.
Finally, choose an integer $t_1\ge1$ such that
$(t+1)d_t\ge2/\eta$ for all $t\ge t_1$, which is possible
by the distance asymptotic. The constant
\begin{equation*}
c_\eta
=\min\left\{\frac2\eta,
             \min_{0\le t<t_1}(t+1)d_t\right\}>0
\end{equation*}
then satisfies $d_t\ge c_\eta/(t+1)$ for every $t\ge0$.
\end{proof}

\begin{corollary}[Lower bounds at every iteration from the maximally mixed start]
\label{cor:ogda-mm-finite-time}
For the run in Theorem~\ref{thm:ogda-mm-rates}, fix
$0<\eta\le1/8$. There exists a constant $c_\eta>0$, depending only
on $\eta$, such that every integer $t\ge0$ satisfies
\begin{align}
\dist_F(\joint_t,\jointstar)
\ge \frac{c_\eta}{t+1} \qquad \text{and} \qquad
G(\joint_t)
\ge \frac{c_\eta^3}{16(t+1)^3}.
\label{eq:ogda-mm-gap-lower}
\end{align}
The constant $c_\eta$ is independent of the iteration index and the target
accuracy.
\end{corollary}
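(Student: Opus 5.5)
The distance bound is already contained in Theorem~\ref{thm:ogda-mm-rates}, so I will take the constant $c_\eta$ constructed at the end of that proof. There, $c_\eta=\min\{2/\eta,\min_{0\le t<t_1}(t+1)d_t\}$, which depends only on the fixed run and hence only on $\eta$, since the game and initialization are fixed. This gives $\dist_F(\joint_t,\jointstar)=d_t\ge c_\eta/(t+1)$ for every $t\ge0$. The remaining work is to convert this into a gap bound at every iteration, not only asymptotically.

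For the gap I will use the exact identity from Step~6, $G(\joint_t)=(d_t-x_t)/2$, together with a lower bound on $d_t-x_t$ in terms of $d_t$ that holds at every iteration. For feasible pairs on the symmetric trajectory we have $d_t^2=x_t^2+(1-z_t)^2$, and feasibility gives $x_t^2\le 1-z_t^2$. So
\begin{equation*}
d_t-x_t=\frac{d_t^2-x_t^2}{d_t+x_t}=\frac{(1-z_t)^2}{d_t+x_t}\ge\frac{(1-z_t)^2}{2d_t}.
\end{equation*}
Also $d_t^2\le(1-z_t^2)+(1-z_t)^2=2(1-z_t)$, so $(1-z_t)^2\ge d_t^4/4$. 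Combining these gives $G(\joint_t)\ge d_t^3/16$ for every feasible state in the $(x,z)$ half-disk, pure or mixed. It holds at $t=0$ too, where $d_0=1$ and $G=1/2\ge1/16$.

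Substituting $d_t\ge c_\eta/(t+1)$ then yields $G(\joint_t)\ge c_\eta^3/(16(t+1)^3)$, since $G$ is monotone in $d_t$ through this bound. The only point needing care is the inequality $d_t^2\le 2(1-z_t)$ for mixed states, which comes from $x^2+z^2\le1$, and the use of $x_t\le d_t$ in the denominator, which is immediate. If the constant $16$ failed in some case, I would shrink $c_\eta$, since the statement only requires some $c_\eta>0$ depending on $\eta$.
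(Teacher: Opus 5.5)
Your proposal is correct and follows the paper's proof essentially verbatim: you reuse the constant $c_\eta$ from Theorem~\ref{thm:ogda-mm-rates} and rewrite $G(\joint_t)=(d_t-x_t)/2=(1-z_t)^2/\bigl(2(d_t+x_t)\bigr)$, then bound the numerator below by $d_t^4/4$ (via $d_t^2\le2(1-z_t)$) and the denominator above by $4d_t$ (via $x_t\le d_t$). The division step also needs $d_t+x_t>0$, which you leave implicit; it holds because $x_t\ge0$ and $d_t>0$ along the trajectory, as the paper notes.
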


\begin{proof}
The distance bound is the final assertion of
Theorem~\ref{thm:ogda-mm-rates}. To obtain the gap bound, write
$d_t=\dist_F(\joint_t,\jointstar)$. The symmetric trajectory
\eqref{eq:ogda-mm-symmetry} and the best-response formulas give
$d_t^2=x_t^2+(1-z_t)^2$ and $G(\joint_t)=(d_t-x_t)/2$ at every
iteration. Feasibility gives $x_t^2+z_t^2\le1$, and hence
$d_t^2\le2(1-z_t)$. Also, $d_t>0$ and $0\le x_t\le d_t$ along
the trajectory. Therefore
\begin{equation*}
G(\joint_t)
=\frac{d_t-x_t}{2}
=\frac{(1-z_t)^2}{2(d_t+x_t)}
\ge\frac{d_t^3}{16}.
\end{equation*}
Here the numerator is at least $d_t^4/4$, while the denominator
is at most $4d_t$. Substituting
$d_t\ge c_\eta/(t+1)$ proves~\eqref{eq:ogda-mm-gap-lower}.
This argument also applies before the iterates become pure.
\end{proof}

\begin{corollary}[Iteration complexity from the maximally mixed start]
\label{cor:ogda-mm-complexity}
For the run in Theorem~\ref{thm:ogda-mm-rates}, the hitting times
defined in~\eqref{eq:hitting-times} satisfy
\begin{equation}
\tau_{\rm dist}(\varepsilon;\joint_0)
\sim\frac{4}{\eta\varepsilon},
\qquad
\tau_{\rm gap}(\varepsilon;\joint_0)
\sim\frac{4^{1/3}}{\eta\varepsilon^{1/3}}
\quad\text{as }\varepsilon\downarrow0.
\label{eq:ogda-mm-complexity}
\end{equation}
\end{corollary}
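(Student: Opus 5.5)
We would derive both hitting-time asymptotics from the exact asymptotics of Theorem~\ref{thm:ogda-mm-rates}, once we know that the error sequences are eventually monotone. Write $d_t=\dist_F(\joint_t,\jointstar)$ and $G_t=G(\joint_t)$. Theorem~\ref{thm:ogda-mm-rates} gives $t d_t\to 4/\eta$ and $t^3G_t\to4/\eta^3$. It also gives $d_t>0$ and $G_t>0$ for every finite $t$, so both hitting times are finite for each $\varepsilon>0$ and tend to $+\infty$ as $\varepsilon\downarrow0$.

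\textbf{Upper bound on the hitting times.} Fix $\delta\in(0,1)$. For all large $t$, we have $d_t\le(1+\delta)4/(\eta t)$. Hence any $t\ge(1+\delta)4/(\eta\varepsilon)$ that is large enough satisfies $d_t\le\varepsilon$. This gives $\limsup_{\varepsilon\downarrow 0}\eta\varepsilon\,\tau_{\rm dist}(\varepsilon)/4\le1+\delta$. The same argument applied to $G_t\le(1+\delta)4/(\eta^3t^3)$ shows that $t\ge(1+\delta)^{1/3}4^{1/3}/(\eta\varepsilon^{1/3})$ suffices for $G_t\le\varepsilon$. Since $\delta$ is arbitrary, both upper asymptotics follow.

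\textbf{Lower bound on the hitting times.} Here we do not need monotonicity, because $\tau$ is a first hitting time. Choose $t_\delta$ such that $d_t\ge(1-\delta)4/(\eta t)$ for all $t\ge t_\delta$. For $\varepsilon$ small enough, the finitely many positive values $d_0,\dots,d_{t_\delta-1}$ all exceed $\varepsilon$, so $\tau_{\rm dist}(\varepsilon)\ge t_\delta$. At $\tau=\tau_{\rm dist}(\varepsilon)$ we then have $\varepsilon\ge d_\tau\ge(1-\delta)4/(\eta\tau)$, that is, $\tau\ge(1-\delta)4/(\eta\varepsilon)$. For the gap, positivity of $G_t$ at every iteration plays the same role. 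At $\tau=\tau_{\rm gap}(\varepsilon)$ we have $\varepsilon\ge(1-\delta)4/(\eta^3\tau^3)$, hence $\tau\ge(1-\delta)^{1/3}4^{1/3}/(\eta\varepsilon^{1/3})$.

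Letting $\delta\downarrow0$ combines the two bounds and gives~\eqref{eq:ogda-mm-complexity}. The only delicate point is that $\tau$ does not escape to infinity too early or stay bounded. Both are settled by the strict positivity of every finite iterate, proved in Theorem~\ref{thm:ogda-mm-rates}, so we expect no substantial obstacle beyond bookkeeping with the $\delta$-neighbourhoods.
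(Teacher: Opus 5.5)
Your argument is correct and is essentially the paper's proof: you perturb the leading terms of Theorem~\ref{thm:ogda-mm-rates} by a factor $1\pm\delta$, use positivity of every finite iterate to force the first hitting time into the asymptotic range, and then invert the two bounds. The opening appeal to eventual monotonicity is unnecessary. Neither of your bounds uses it, because $\tau$ is a first hitting time, which is exactly the point the paper makes.
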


\begin{proof}
Theorem~\ref{thm:ogda-mm-rates} bounds each error above and below
by arbitrarily small relative perturbations of its leading term
at every sufficiently large index. Every fixed finite prefix has
a positive minimum error, so the first successful index belongs
to this asymptotic range when $\varepsilon$ is sufficiently small.
Inverting the upper and lower bounds proves the two equivalences.
No monotonicity of the errors is required.
\end{proof}

Thus the $\Theta(\varepsilon^{-1})$ distance bound holds with the
game, step size, and maximally mixed initialization all fixed.
It rules out a geometric distance estimate for this single
trajectory. The squared distance has order $1/t^{2}$, while the
duality gap has order $1/t^{3}$; the corresponding gap complexity
is $\Theta(\varepsilon^{-1/3})$.

\section{Last-Iterate Lower Bounds for OMMWU}
\label{sec:ommwu}

We now give a fixed game and a fixed initialization for which the
OMMWU trajectory itself converges polynomially, even though the
equilibrium is unique and strictly complementary.

\subsection{Game construction}
\label{subsec:ommwu-construction}

Consider the observable
\begin{equation}
U_{\rm M}=\frac13\left[-Z\otimes I+I\otimes Z
+\frac12X\otimes(I-Z)\right].
\label{eq:observable}
\end{equation}
For an arbitrary pair, set $x=\Tr[X\alpha]$, $z=\Tr[Z\alpha]$,
and $y=\Tr[Z\beta]$. Then
\begin{align}
f(\alpha,\beta)&=\frac13\left[-z+y+\frac{x}{2}(1-y)\right],
\label{eq:payoff}\\
A(\beta)&=\frac13\left[-Z+yI+\frac{1-y}{2}X\right],
\label{eq:alice-gradient}\\
B(\alpha)&=\frac13\left[\left(-z+\frac{x}{2}\right)I
+\left(1-\frac{x}{2}\right)Z\right].
\label{eq:bob-gradient}
\end{align}
The run starts at $\joint_0=(I/2,I/2)$, with any fixed $\eta>0$.
Write $x_t=\Tr[X\alpha_t]$, $z_t=\Tr[Z\alpha_t]$,
$y_t=\Tr[Z\beta_t]$, $d_t=1-y_t$, and $\kappa=\eta/3$.
Thus $x_0=z_0=y_0=0$ and $d_0=1$.

\begin{lemma}[Normalization and strict complementarity]
\label{lem:equilibrium}
The observable in~\eqref{eq:observable} satisfies
$\norm{U_{\rm M}}_{\mathrm{op}}=(1+\sqrt2)/3<1$. The game has the unique Nash
equilibrium $\jointstar=(P,P)$ and value zero. It is strictly
complementary: both equilibrium slack matrices have rank one and
positive eigenvalue $2/3$.
\end{lemma}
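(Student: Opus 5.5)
We verify the three claims of Lemma~\ref{lem:equilibrium} in turn: the operator norm, the equilibrium, and strict complementarity.

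\textbf{Operator norm.} The observable is block diagonal in Bob's $Z$ basis, since $I\otimes Z$ and $I-Z$ are diagonal in the second factor. On the block $\beta$-basis state $|0\rangle$ ($Z=1$, $I-Z=0$) the restriction is $\frac13(-Z+I)$, with eigenvalues $0$ and $2/3$. On $|1\rangle$ ($Z=-1$, $I-Z=2I$) it is $\frac13(-Z-I+X)$. Its eigenvalues are $\frac13(-1\pm\sqrt2)$, because $-Z+X$ has eigenvalues $\pm\sqrt2$. The largest modulus is therefore $(1+\sqrt2)/3<1$.

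\textbf{Equilibrium.} We work with the Bloch payoff~\eqref{eq:payoff}.

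\begin{itemize}
\item \emph{Alice's best response.} For fixed $\beta$ with $d=1-y\in[0,2]$, Alice minimizes $-z+\frac{d}{2}x$ over the disk $x^2+z^2\le1$. The minimum is $-\sqrt{1+d^2/4}$, attained at the unique point $(x,z)=(-d/2,1)/\sqrt{1+d^2/4}$.
\item \emph{Bob's best response.} For fixed $\alpha$, Bob maximizes $y(1-x/2)$. Since $1-x/2\ge1/2>0$, the unique best response is $y=1$, i.e.\ $\beta=P$; the same argument as in Lemma~\ref{lem:ommp-average-gap} rules out off-diagonal entries.
\item \emph{Consistency.} Against $\beta=P$ we have $d=0$, so Alice's unique best response is $(x,z)=(0,1)$, i.e.\ $\alpha=P$.
\end{itemize}

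It remains to show uniqueness. At any equilibrium Bob must best-respond, so $\beta=P$, and then $\alpha=P$ by the consistency step. Hence $(P,P)$ is the unique equilibrium, and the value is $f(P,P)=\frac13(-1+1+0)=0$.

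\textbf{Strict complementarity.} At $\beta^*=P$ we have $y=1$, so
\[
A(P)=\tfrac13(-Z+I),\qquad W_A=A(P)-0\cdot I=\tfrac23 Q .
\]
At $\alpha^*=P$ we have $z=1$ and $x=0$, so
\[
B(P)=\tfrac13(-I+Z),\qquad W_B=-B(P)=\tfrac23 Q .
\]
Both slacks have rank one with eigenvalue $2/3$. Since $\rank(P)+\rank(Q)=2$, strict complementarity holds for both players.

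No step is a real obstacle. The only points needing care are the sign convention in the slack definitions ($W_A=A(\beta^*)-v_*I$ and $W_B=v_*I-B(\alpha^*)$ with $v_*=0$) and the strict positivity $1-x/2>0$, which is what makes Bob's best response unique.
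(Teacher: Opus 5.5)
Your proposal is correct and is essentially the paper's proof: the norm comes from the blocks $I-Z$ and $-I-Z+X$ of $3U_{\rm M}$ on Bob's $Z$-eigenspaces, uniqueness from the strictly positive coefficient $(1-x/2)/3$ of $y$ (forcing $\beta=P$) followed by Alice's unique best response $P$ to $P$, and strict complementarity from $W_A=W_B=(I-Z)/3=\tfrac23 Q$. Your computation of Alice's best response to a general $\beta$ is not needed for this lemma, and strictly it should be carried out over the full Bloch ball rather than the disk. That is harmless, since the minimizer lies on the unit circle and therefore has zero $Y$-coordinate, and the resulting formula agrees with the one the paper later uses for the duality gap.
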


\begin{proof}
On the subspaces corresponding to Bob's $Z$-eigenvalues $+1$ and $-1$,
the operator $3U_{\rm M}$ acts on Alice's register as $I-Z$ and $-I-Z+X$,
respectively. Its eigenvalues are therefore $0$, $2$, and
$-1\pm\sqrt2$, proving the norm formula.

For every $\alpha$, the coefficient of $y$ in~\eqref{eq:payoff} is
$(1-x/2)/3\geq1/6$. Bob consequently has the unique best response
$\beta=P$: the maximum requires $y=1$, and a density matrix with
$\Tr[Z\beta]=1$ must equal $P$. Against $P$, Alice minimizes
$f(\alpha,P)=(1-z)/3$, whose unique minimizer is $\alpha=P$.
Thus $(P,P)$ is the unique equilibrium, with value zero.

At $(P,P)$, the gradients in
\eqref{eq:alice-gradient}--\eqref{eq:bob-gradient} give $v_*=0$ and
$W_A=W_B=(I-Z)/3$. Both slacks have eigenvalues $0$ and $2/3$ and
kernel equal to the range of $P$. Each equilibrium state has rank one,
so both strict-complementarity rank equalities hold.
\end{proof}

The distinction between strict complementarity and equilibrium
isolation already appears in
\cite[Example~3.6]{ITTV25}, where a continuum of pure equilibria
satisfies strict complementarity.
Lemma~\ref{lem:equilibrium} establishes both uniqueness and strict complementarity
for the present zero-sum game.
Theorem~\ref{thm:rates} nevertheless shows polynomial convergence of OMMWU.
Thus, even these two equilibrium properties together do not ensure
geometric convergence of the learning dynamics.

\subsubsection{Why the construction delays convergence}
\label{subsec:ommwu-mechanism}

The construction makes Bob approach $P$ exponentially fast while
leaving a persistent off-diagonal term in Alice's accumulated update.
This term causes the projector onto Alice's leading eigenspace to approach \(P\) only polynomially. 
We explain both parts of this
mechanism and the hyperbolic functions that describe the iterates.

\paragraph{The coupling leaves a nonzero off-diagonal term.}
The uncoupled terms $-Z\otimes I$ and $I\otimes Z$ in $U_\mathrm{M}$ contribute
$-\Tr[Z\alpha]/3$ and $\Tr[Z\beta]/3$ to the payoff, respectively.
Since Alice minimizes the payoff and Bob maximizes it, both terms
favor $P=(I+Z)/2$, the state with a population in $Z$ equal to one.
The coupling term $\tfrac12X\otimes(I-Z)$ contributes $(d_t/6)X$ to Alice's gradient, where
$d_t=\Tr[(I-Z)\beta_t]=1-y_t\ge0$.
This contribution vanishes exactly when Bob plays $P$.
Before then, Alice's descent step subtracts a positive multiple
of $X$ from the matrix inside her exponential update.

To describe how these contributions accumulate, we represent each
iterate as $\Lambda(H)=e^H/\Tr[e^H]$ and call $H$ a
\emph{score matrix}. This representation is unchanged by adding
a scalar multiple of $I$ (i.e., $\Lambda(H+cI)=\Lambda(H)$).
Moreover, $\log\Lambda(H)=H-\log\Tr[e^H]I$, so an update of the
form $\Lambda(\log\Lambda(H)+G)$ equals $\Lambda(H+G)$.
Thus, gradient increments add directly to the score matrix,
and scalar identity terms can be discarded.

With $\kappa=\eta/3$ and a time $s$, removing scalar identity terms from
Alice's descent increment $-\eta A(\beta_s)$ and Bob's ascent
increment $\eta B(\alpha_s)$ leaves, respectively,
\begin{equation*}
\kappa Z-\frac{\kappa d_s}{2}X\qquad \mathrm{and}
\qquad
\kappa\left(1-\frac{x_s}{2}\right)Z.
\end{equation*}
Alice's score therefore accumulates both diagonal and off-diagonal
contributions, while Bob's score remains proportional to $Z$.

Both initial scores can be taken to be zero because
$\Lambda(0)=I/2$. At time $t-1$, each auxiliary score contains
the increments evaluated at $s=1,\ldots,t-1$.
To form the played iterate at time $t$, the algorithm adds one
further increment evaluated at $t-1$; this additional term is
the optimistic correction.
Consequently, as justified in the proof of
Theorem~\ref{thm:rates}, we have
$\alpha_t=\Lambda(H_t)$ and $\beta_t=\Lambda(q_tZ)$ for $t\ge1$,
where
\begin{equation}
\begin{aligned}
p_t=\frac{\kappa}{2}
\left(\sum_{s=1}^{t-1}d_s+d_{t-1}\right),\qquad
q_t=\kappa
\left[\sum_{s=1}^{t-1}\left(1-\frac{x_s}{2}\right)
      +\left(1-\frac{x_{t-1}}{2}\right)\right],
\end{aligned}
\label{eq:score-coefficients}
\end{equation}
and
\begin{equation}
H_t=\kappa tZ-p_tX
=\begin{pmatrix}
\kappa t&-p_t\\
-p_t&-\kappa t
\end{pmatrix},
\qquad
R_t=\sqrt{\kappa^2t^2+p_t^2}.
\label{eq:score-matrix}
\end{equation}
The eigenvalues of $H_t$ are $\pm R_t$.
Empty sums are zero, so $d_0=1$ and $x_0=0$ give
$p_1=\kappa/2$ and $q_1=\kappa$.
The coefficient $\kappa t$ consists of $t-1$ accumulated copies
of $\kappa Z$ and one additional copy from the optimistic step.
The coefficient $-p_t$ records Bob's past deviations from $P$,
together with the optimistic correction.
Thus, even when Bob's current deviation becomes small,
Alice's score retains the off-diagonal contributions accumulated
earlier.

\paragraph{Why hyperbolic functions appear.}
For a real linear combination $H=aZ+bX$, the Pauli identities give
$H^2=(a^2+b^2)I$. Let $R=\sqrt{a^2+b^2}>0$. The even powers of
$H$ are $R^{2j}I$, and the odd powers are $R^{2j}H$. The standard exponential identity for Pauli matrices then gives
\begin{equation}
e^H
=\left(\sum_{j=0}^{\infty}\frac{R^{2j}}{(2j)!}\right)I
 +\left(\sum_{j=0}^{\infty}\frac{R^{2j+1}}{(2j+1)!}\right)\frac{H}{R}
=\cosh(R)I+\frac{\sinh(R)}{R}H.
\label{eq:pauli-exponential}
\end{equation}
See \cite[Section 3.4.2]{Tisza2009} for related discussion. Here $\cosh R=(e^R+e^{-R})/2$ and
$\sinh R=(e^R-e^{-R})/2$. These functions combine the exponentials
of the two real eigenvalues $+R$ and $-R$. Since $\Tr[H]=0$, the
normalizing trace is $2\cosh R$. Dividing by it introduces
$\tanh R=\sinh R/\cosh R$ and gives
$\Lambda(H)=\tfrac12(I+\tanh(R)H/R)$. When $R=0$, we have $a=b=0$ and hence $H=0$, so the definition
of the normalized exponential gives $\Lambda(0)=I/\Tr[I]=I/2$.
This also agrees with the limit of the expression of $\Lambda(H)$: as $R\to0$,
the scalar factor $\tanh R/R$ tends to $1$, while $H\to0$.
Consequently, $\tanh(R)H/R\to0$ and $\Lambda(H)\to I/2$.
For our iterates, it follows that
\begin{equation}
\alpha_t=\frac12\left(I+\frac{\tanh R_t}{R_t}
                       (\kappa tZ-p_tX)\right),
\qquad
\beta_t=\frac12(I+\tanh(q_t)Z),
\label{eq:states-explicit}
\end{equation}
and hence
\begin{equation}
x_t=-\frac{p_t}{R_t}\tanh R_t,\qquad
z_t=\frac{\kappa t}{R_t}\tanh R_t,\qquad
y_t=\tanh q_t.
\label{eq:coordinates}
\end{equation}
Thus the hyperbolic functions follow directly from the matrix
exponential and its trace normalization.

\paragraph{Fast concentration and slow rotation.}
The coefficients above satisfy $p_t\ge0$, so $x_t\le0$ and
$q_t\ge\kappa t$. Consequently,
$d_t=1-\tanh q_t\le2e^{-2\kappa t}$ for $t\ge1$.
Bob's deviations are positive and summable. Equation~\eqref{eq:score-coefficients}
therefore gives
$p_t\to p_\infty=(\kappa/2)\sum_{s\ge1}d_s>0$:
the sum retains a nonzero contribution, whereas the additional
optimistic term $d_{t-1}$ tends to zero. Positivity follows already
from $d_1=1-\tanh\kappa>0$. The proof below also bounds the
convergence rate of $p_t$.

The score matrix $H_t$ has eigenvalues $\pm R_t$. Its \emph{leading
eigenvector} means an eigenvector for $+R_t$; its rank-one projector
is $P_t^+=(I+H_t/R_t)/2$. Exponentiation preserves these eigenvectors,
and the smaller eigenvalue of $\alpha_t$ is
$\lambda_t=(1+e^{2R_t})^{-1}\le e^{-2\kappa t}$. Thus
$\alpha_t=(1-\lambda_t)P_t^++\lambda_t(I-P_t^+)$ approaches the
moving pure state $P_t^+$ exponentially fast.

The direction of that pure state still changes slowly. Its Bloch
vector is $(-p_t,0,\kappa t)/R_t$, whereas $P$ has Bloch vector
$(0,0,1)$. Let $\theta_t\in(0,\pi/2)$ be the angle between them.
Then
\begin{equation}
\tan\theta_t=\frac{p_t}{\kappa t},\qquad
\theta_t\sim\frac{p_\infty}{\kappa t},\qquad
\|P_t^+-P\|_F=\sqrt{1-\frac{\kappa t}{R_t}}
\sim\frac{p_\infty}{\sqrt2\,\kappa t}.
\label{eq:ommwu-angle-mechanism}
\end{equation}
The angle is a Bloch-vector angle and a corresponding unit eigenvector
is $(\cos(\theta_t/2),-\sin(\theta_t/2))^{\mathsf T}$.
Alice's distance to $P$ consequently has order $1/t$. Her
population in $Q$ is of order $\theta_t^2$, which leads to a gap
of order $1/t^{2}$ once Bob's exponentially small contribution is
included. The relative entropy has order $1/t$ because its
leading term is
$R_t-\kappa t=p_t^2/(R_t+\kappa t)$.
Theorem~\ref{thm:rates} verifies these statements and their exact
constants for the played iterates.

\paragraph{A simple illustration.}
Fix the off-diagonal coefficient at one and consider
$K_s=\left(\begin{smallmatrix}s&-1\\-1&-s\end{smallmatrix}\right)$
for $s>0$. This family isolates the matrix effect in $H_t$.
The smaller eigenvalue of $\Lambda(K_s)$ is at most $e^{-2s}$,
but its off-diagonal entry is
$-\tanh(\sqrt{s^2+1})/(2\sqrt{s^2+1})\sim-1/(2s)$.
For large $s$, doubling the diagonal coefficient approximately
halves this entry. The normalized exponential is already close to
a pure state, whose direction still differs from the range of $P$.
Figure~\ref{fig:ommwu-rotation} depicts the same distinction along
the actual trajectory.

% Figure 2
\begin{figure}[htbp]
\centering
\begin{tikzpicture}[x=3.5cm,y=3.5cm,>=Stealth,font=\small]
  \coordinate (O) at (0,0);
  \coordinate (N) at ({-sin(32)},{cos(32)});
  \coordinate (A) at ({-0.78*sin(32)},{0.78*cos(32)});

  \draw[->,gray!70] (-1.13,0) -- (0.15,0)
    node[right,text=black] {$x$};
  \draw[->,gray!70] (0,-0.03) -- (0,1.19)
    node[above,text=black] {$z$};

  \draw[gray!60,thick]
    (0,1) arc[start angle=90,end angle=180,radius=1];
  \draw[densely dashed,gray!70] (O) -- (N);
  \draw[->,constructionblue,thick] (O) -- (A);
  \draw[constructionorange,very thick] (A) -- (N);
  \draw[->,constructionblue] (0,0.35)
    arc[start angle=90,end angle=122,radius=0.35];
  \node[constructionblue] at (-0.105,0.46) {$\theta_t$};
  \node[constructionblue] at (-0.505,0.36) {$\tanh{R_t}$};

  \fill (0,1) circle[radius=0.012]
    node[right] {$P$};
  \fill[constructionorange] (N) circle[radius=0.013]
    node[above left,text=black] {$P_t^+$};
  \fill[constructionblue] (A) circle[radius=0.013]
    node[below left,text=black] {$\alpha_t$};
\end{tikzpicture}
\caption{OMMWU in the $(x,z)$ cross-section of the Bloch ball.
Alice's state $\alpha_t$ has Bloch radius $\tanh R_t$ and lies on
the radius through the pure state $P_t^+$ associated with the
leading eigenvector of $H_t$. The orange segment represents the
radial concentration $1-\tanh R_t=\O(e^{-2\kappa t})$.
The angle between this radius and the equilibrium direction satisfies
$\theta_t\sim p_\infty/(\kappa t)$. Thus Alice's state approaches
the moving pure state $P_t^+$ exponentially fast, while its direction
approaches $P$ at rate $1/t$. The radial separation is enlarged
to distinguish the two errors.}
\label{fig:ommwu-rotation}
\end{figure}
\FloatBarrier

\subsection{Proof of lower bounds}
In this section, we prove the lower bound of iteration complexity for OMMWU on the game instance we constructed in Section \ref{subsec:ommwu-construction}.
\subsubsection{Exact iterates and asymptotic rates}
\label{subsec:rates}

\begin{theorem}[Polynomial last-iterate convergence]
\label{thm:rates}
Run~\eqref{eq:played-update}--\eqref{eq:aux-update} on $U_{\rm M}$
from $\widehat\joint_0=\joint_0=(I/2,I/2)$, with any fixed
$\eta>0$. With $\kappa=\eta/3$, there exists a constant
$p_\infty\in(0,\infty)$, depending only on $\eta$, such that
\begin{align}
\dist_F(\joint_t,\jointstar)
&\sim\frac{p_\infty}{\sqrt2\,\kappa} \cdot \frac{1}{t},
\label{eq:distance-limit}\\
S(\jointstar\|\joint_t)
&\sim\frac{p_\infty^2}{2\kappa} \cdot \frac{1}{t},
\label{eq:entropy-limit}\\
G(\joint_t)
&\sim\frac{p_\infty^2}{6\kappa^2} \cdot \frac{1}{t^2}.
\label{eq:gap-limit}
\end{align}
In particular, the distance and quantum relative entropy decay as
$\Theta(1/t)$, and the duality gap decays as $\Theta(1/t^{2})$.
Every asymptotic statement holds with the game, step size, and
initialization fixed.
\end{theorem}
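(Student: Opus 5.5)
The plan is to make the heuristic of Section~\ref{subsec:ommwu-mechanism} rigorous. First I establish the closed-form representation \eqref{eq:score-coefficients}--\eqref{eq:coordinates}. Then I show that Bob's deviations are exponentially small and that $p_t$ converges to a positive limit. Finally, I express each of the three error measures exactly in terms of $R_t$, $p_t$, $q_t$ and expand.

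\emph{Step 1 (exact unrolling).} I prove by induction on $t$ that $\widehat\alpha_t=\Lambda(\widehat H_t)$ and $\widehat\beta_t=\Lambda(\widehat q_tZ)$, where
\[
\widehat H_t=\sum_{s=1}^{t}\Bigl(\kappa Z-\tfrac{\kappa d_s}{2}X\Bigr),
\qquad
\widehat q_t=\kappa\sum_{s=1}^{t}\Bigl(1-\tfrac{x_s}{2}\Bigr).
\]
The induction uses $\log\Lambda(H)=H-\log\Tr[e^H]I$ and $\Lambda(H+cI)=\Lambda(H)$. Discarding the identity parts of $-\eta A(\beta_s)$ and $\eta B(\alpha_s)$ leaves exactly the increments displayed in Section~\ref{subsec:ommwu-mechanism}. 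The played update \eqref{eq:played-update} adds one further increment evaluated at time $t-1$ to $\widehat H_{t-1}$, which gives $\alpha_t=\Lambda(H_t)$ and $\beta_t=\Lambda(q_tZ)$ with the coefficients in \eqref{eq:score-coefficients}. The Pauli exponential \eqref{eq:pauli-exponential} then yields \eqref{eq:states-explicit} and \eqref{eq:coordinates}.

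\emph{Step 2 (Bob and $p_t$).} Since every $d_s=1-\tanh q_s\ge0$, we have $p_t\ge0$, and \eqref{eq:coordinates} gives $x_t\le0$. Hence each term $1-x_s/2$ is at least $1$, so $q_t\ge\kappa t$ and
\[
d_t\le 2e^{-2\kappa t}\qquad(t\ge1).
\]
Thus $\sum_s d_s<\infty$, and $p_t\to p_\infty=(\kappa/2)\sum_{s\ge1}d_s$. The limit is positive because $d_1=1-\tanh\kappa>0$, and the convergence error is $\lvert p_t-p_\infty\rvert=\O(e^{-2\kappa t})$. The limit $p_\infty$ depends only on $\eta$, because the trajectory is deterministic. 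In particular $p_t=\Theta(1)$ and $R_t=\kappa t+p_t^2/(R_t+\kappa t)=\kappa t+\O(1/t)$.

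\emph{Step 3 (the three errors).} Each error splits into an Alice part and a Bob part, and in each case Bob's part is exponentially small and negligible.
\begin{itemize}
\item \emph{Distance.} We have $\norm{\alpha_t-P}_F^2=\bigl(x_t^2+(1-z_t)^2\bigr)/2$. Here $x_t\sim -p_\infty/(\kappa t)$, while
\[
1-z_t=\Bigl(1-\tfrac{\kappa t}{R_t}\Bigr)+\tfrac{\kappa t}{R_t}\bigl(1-\tanh R_t\bigr)=\O(t^{-2}).
\]
Bob's part is $\norm{\beta_t-P}_F=\O(d_t)$. Together these give \eqref{eq:distance-limit}.
\item \emph{Relative entropy.} Since $P$ is pure, $S(P\|\alpha_t)=-\langle0|\log\alpha_t|0\rangle=-\kappa t+\log(2\cosh R_t)$. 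This equals
\[
(R_t-\kappa t)+\log\bigl(1+e^{-2R_t}\bigr)=\frac{p_t^2}{R_t+\kappa t}+\O(e^{-2\kappa t})\sim\frac{p_\infty^2}{2\kappa t}.
\]
Similarly, $S(P\|\beta_t)=\log(1+e^{-2q_t})$ is exponentially small. This proves \eqref{eq:entropy-limit}.
\item \emph{Gap.} From \eqref{eq:payoff}, the coefficient of $y$ is positive, so $\max_\beta f(\alpha_t,\beta)=(1-z_t)/3$. For Alice's best response, minimizing $-z+x\,d_t/2$ over the Bloch disk gives $-\sqrt{1+d_t^2/4}$. Hence
\[
G(\joint_t)=\frac{1-z_t}{3}+\frac{\sqrt{1+d_t^2/4}-y_t}{3}.
\]
The second term is $\O(d_t)$, which is exponentially small. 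Using $1-z_t=p_t^2/\bigl(R_t(R_t+\kappa t)\bigr)+\O(e^{-2\kappa t})\sim p_\infty^2/(2\kappa^2t^2)$ gives \eqref{eq:gap-limit}.
\end{itemize}

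I expect the main obstacle to be Step~1, namely bookkeeping the optimistic correction correctly. The played iterate and the auxiliary iterate share the reference score $\widehat H_{t-1}$ but receive increments at different times. A single off-by-one error changes the constant $p_\infty$ and the coefficient $\kappa t$. I will therefore verify the base cases $p_1=\kappa/2$ and $q_1=\kappa$ directly, and then state the induction hypothesis for the auxiliary scores only, deriving the played scores from it. The remaining expansions are routine once it is checked that all error terms are uniformly $\O(e^{-2\kappa t})$ or $\O(t^{-2})$ relative to the leading terms.
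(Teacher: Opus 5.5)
Your proposal is correct and follows the paper's proof essentially step for step. You unroll the auxiliary scores and add the optimistic correction to get $\alpha_t=\Lambda(\kappa tZ-p_tX)$ and $\beta_t=\Lambda(q_tZ)$. You then use $x_t\le0$ to obtain $q_t\ge\kappa t$, $d_t\le2e^{-2\kappa t}$, and $p_\infty\ge\kappa d_1/2>0$. Finally, you evaluate the three errors through the same exact identities: the split of $1-z_t$, the formula $S(P\|\alpha_t)=(R_t-\kappa t)+\log(1+e^{-2R_t})$, and the closed-form gap with the $\sqrt{1+d_t^2/4}$ term.
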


% \begin{theorem}[Polynomial last-iterate convergence]
% \label{thm:rates}
% Run~\eqref{eq:played-update}--\eqref{eq:aux-update} on $U_{\rm M}$
% from $\joint_0=(I/2,I/2)$, with any fixed $\eta>0$.
% There exists a constant
% $p_\infty\in(0,\infty)$, depending on $\eta$, such that
% \begin{align}
% \lim_{t\to\infty}t\,\dist_F(\joint_t,\jointstar)
%  &=\frac{p_\infty}{\sqrt2\,\kappa},
% \label{eq:distance-limit}\\
% \lim_{t\to\infty}t^2G(\joint_t)
%  &=\frac{p_\infty^2}{6\kappa^2},
% \label{eq:gap-limit}\\
% \lim_{t\to\infty}t\,S(\jointstar\|\joint_t)
%  &=\frac{p_\infty^2}{2\kappa}.
% \label{eq:entropy-limit}
% \end{align}
% In particular, the three errors are respectively
% $\Theta(1/t)$, $\Theta(1/t^{2})$, and $\Theta(1/t)$. Every
% asymptotic statement holds with the game and the step size fixed.
% \end{theorem}

\begin{proof}
We first derive the iterates exactly, then compute each error.

\paragraph{Step 1: Unroll the updates, including the optimistic correction.}
The identities for $\Lambda$ from Section~\ref{sec:preliminaries},
applied inductively to~\eqref{eq:aux-update}, give
\begin{equation}
\widehat\alpha_t=\Lambda\left(-\eta\sum_{s=1}^t A(\beta_s)\right),
\qquad
\widehat\beta_t=\Lambda\left(\eta\sum_{s=1}^t B(\alpha_s)\right).
\label{eq:aux-unrolled}
\end{equation}
The formula also holds at $t=0$ with empty sums, since
$\Lambda(0)=I/2$. Substituting into~\eqref{eq:played-update}, for
$t\geq1$ we obtain
\begin{align}
\alpha_t&=\Lambda\left(-\eta\Big[
             \sum_{s=1}^{t-1}A(\beta_s)+A(\beta_{t-1})\Big]\right),
\label{eq:alice-unrolled}\\
\beta_t&=\Lambda\left(\eta\Big[
             \sum_{s=1}^{t-1}B(\alpha_s)+B(\alpha_{t-1})\Big]\right).
\label{eq:bob-unrolled}
\end{align}
The last term in each bracket is the optimistic correction. These
identities use no commutativity assumption; in particular, they do not
replace the exponential of a sum by a product of exponentials.

Substituting \eqref{eq:alice-gradient}--\eqref{eq:bob-gradient}
and discarding scalar identity terms gives
\begin{equation}
\alpha_t=\Lambda(H_t)=\Lambda(\kappa tZ-p_tX),\qquad
\beta_t=\Lambda(q_tZ).
\label{eq:exact-iterates}
\end{equation}
The coefficients are exactly those in \eqref{eq:score-coefficients}.
In particular, $p_1=\kappa/2$ and $q_1=\kappa$. The state space has
not been restricted to real or diagonal matrices: the displayed forms
follow from the updates themselves.

\paragraph{Step 2: Prove that the off-diagonal coefficient has a positive limit.}
Applying the exponential identity~\eqref{eq:pauli-exponential} to
the exact scores~\eqref{eq:exact-iterates} proves the state and
coordinate formulas~\eqref{eq:states-explicit}--\eqref{eq:coordinates}.
We now use them to control $p_t$.

Every density matrix satisfies $y_s\leq1$, so $d_s\geq0$ and hence
$p_t\geq0$. Equation~\eqref{eq:coordinates} implies $x_t\leq0$.
Together with $x_0=0$, equation~\eqref{eq:score-coefficients} now yields
$q_t\geq\kappa t$. Thus
\begin{equation}
0<d_t=\frac{2}{e^{2q_t}+1}\leq2e^{-2\kappa t}
\quad\text{for every }t\geq1.
\label{eq:bob-exponential}
\end{equation}
The strict inequality follows because all matrices and coefficients
are finite at each finite iteration.

The series $\sum_{s\geq1}d_s$ consequently converges, and the
limiting coefficient is
\begin{equation}
p_\infty=\frac{\kappa}{2}\sum_{s=1}^{\infty}d_s,
\qquad
\frac{\kappa}{2}(1-\tanh\kappa)
\leq p_\infty\leq\frac{\kappa}{e^{2\kappa}-1}.
\label{eq:positive-limit}
\end{equation}
The lower bound uses $d_1=1-\tanh\kappa>0$, so $p_\infty$ is
strictly positive. Moreover,
\begin{equation}
p_t-p_\infty
=\frac{\kappa}{2}\left(d_{t-1}-\sum_{s=t}^{\infty}d_s\right)
=\O(e^{-2\kappa t})
\quad\text{as }t\to\infty.
\label{eq:p-convergence}
\end{equation}
The constant in the last bound may depend on $\kappa$. In particular,
the extra optimistic term tends to zero, while the accumulated
off-diagonal coefficient has the nonzero limit~\eqref{eq:positive-limit}.

\paragraph{Step 3: Compute the Frobenius distance.}
Since $p_t\to p_\infty$ and $R_t\geq\kappa t$, we have
$\tanh R_t=1+\O(e^{-2\kappa t})$ and
\begin{equation}
R_t=\kappa t+\frac{p_\infty^2}{2\kappa t}+o(1/t).
\label{eq:R-expansion}
\end{equation}
It follows from~\eqref{eq:coordinates} that
\begin{equation}
x_t=-\frac{p_\infty}{\kappa t}+o(1/t),
\qquad
1-z_t=\frac{p_\infty^2}{2\kappa^2t^2}+o(1/t^{2}).
\label{eq:coordinate-asymptotics}
\end{equation}
One can obtain the second expansion directly, without subtracting
approximate expressions, from
\begin{equation}
1-z_t
=\frac{p_t^2}{R_t(R_t+\kappa t)}
 +\frac{\kappa t}{R_t}(1-\tanh R_t).
\label{eq:z-error-exact}
\end{equation}
The matrices $X$ and $Z$ are orthogonal in the Hilbert--Schmidt inner
product and both have squared Frobenius norm $2$. Hence
\begin{equation}
\dist_F(\joint_t,\jointstar)^2
=\frac{x_t^2+(1-z_t)^2+d_t^2}{2}.
\label{eq:distance-exact}
\end{equation}
Combining~\eqref{eq:bob-exponential},
\eqref{eq:coordinate-asymptotics}, and~\eqref{eq:distance-exact}
proves~\eqref{eq:distance-limit}.

\paragraph{Step 4: Compute the duality gap.}
Bob's best response remains $P$, so
$\max_{\beta'}f(\alpha_t,\beta')=(1-z_t)/3$. Against $\beta_t$,
Alice minimizes the expectation of $-Z+(d_t/2)X$, whose eigenvalues
are $\pm\sqrt{1+d_t^2/4}$. The minimum expectation over all density
matrices is its smallest eigenvalue. Therefore
\begin{equation}
\min_{\alpha'}f(\alpha',\beta_t)
=\frac{y_t-\sqrt{1+d_t^2/4}}{3},\qquad
G(\joint_t)=\frac13\left(
1-z_t+d_t+\sqrt{1+d_t^2/4}-1\right).
\label{eq:gap-exact}
\end{equation}
The term $d_t$ is exponentially small, and
\begin{equation*}
0\le\sqrt{1+d_t^2/4}-1
=\frac{d_t^2}{4(\sqrt{1+d_t^2/4}+1)}\le\frac{d_t^2}{8}.
\end{equation*}
Equation~\eqref{eq:coordinate-asymptotics} now
proves~\eqref{eq:gap-limit}.

\paragraph{Step 5: Compute the quantum relative entropy.}
Since $P$ is pure, $\Tr[P\log P]=0$. Also
$\Tr[PZ]=1$ and $\Tr[PX]=0$. Taking the logarithm in
\eqref{eq:exact-iterates} and using $\Tr[e^{H_t}]=2\cosh R_t$ gives
\begin{align}
S(P\|\alpha_t)
 &=\log(2\cosh R_t)-\kappa t
   =\frac{p_t^2}{R_t+\kappa t}+\log(1+e^{-2R_t}),
\label{eq:alice-entropy}\\
S(P\|\beta_t)
 &=\log(2\cosh q_t)-q_t
   =\log(1+e^{-2q_t}).
\label{eq:bob-entropy}
\end{align}
The logarithmic remainders are at most $e^{-2\kappa t}$ each.
Since $p_t\to p_\infty$ and $(R_t+\kappa t)/t\to2\kappa$,
summing~\eqref{eq:alice-entropy} and~\eqref{eq:bob-entropy}
proves~\eqref{eq:entropy-limit}.
\end{proof}

\subsubsection{Explicit lower bounds and iteration complexity}
\label{subsec:complexity}

The preceding limits imply asymptotic lower bounds. To obtain bounds
valid at every positive iteration, define
\begin{equation}
p_- =\frac{\kappa}{2}(1-\tanh\kappa),\qquad
p_+ =\frac{\kappa}{2}+\frac{\kappa}{e^{2\kappa}-1},\qquad
C_R=\sqrt{\kappa^2+p_+^2}.
\label{eq:finite-constants}
\end{equation}
The next corollary uses these constants to control the first hitting
times without assuming that the errors are monotone.

\begin{corollary}[Lower bounds at every iteration]
\label{cor:finite-time}
For the run in Theorem~\ref{thm:rates}, the constants in
\eqref{eq:finite-constants} satisfy $p_->0$, and every integer $t\ge1$
satisfies
\begin{align}
\dist_F(\joint_t,\jointstar)
 &\geq\frac{p_-\tanh\kappa}{\sqrt2\,C_R}\cdot\frac1t,
\label{eq:distance-lower}\\
S(\jointstar\|\joint_t)
 &\geq\frac{p_-^2}{C_R+\kappa}\cdot\frac1t,
\label{eq:entropy-lower}\\
G(\joint_t)
 &\geq\frac{p_-^2}{3C_R(C_R+\kappa)}\cdot\frac1{t^2}.
\label{eq:gap-lower}
\end{align}
\end{corollary}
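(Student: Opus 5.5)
The plan is to use the exact formulas from the proof of Theorem~\ref{thm:rates} together with uniform two-sided bounds on $p_t$ and $R_t$, valid for every $t\ge1$. First, $p_t\ge p_-$ for $t\ge1$: from \eqref{eq:score-coefficients}, $p_t\ge(\kappa/2)d_1$ when $t\ge2$, and $p_1=\kappa/2\ge p_-$. Here $d_1=1-\tanh q_1=1-\tanh\kappa$, since $q_1=\kappa$. Next, $p_t\le p_+$: by \eqref{eq:bob-exponential}, $\sum_{s\ge1}d_s\le 2e^{-2\kappa}/(1-e^{-2\kappa})=2/(e^{2\kappa}-1)$, and the extra term $(\kappa/2)d_{t-1}\le\kappa/2$ because $d\le1$ (at $s=0$ we have $d_0=1$). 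Consequently $\kappa t\le R_t\le\sqrt{\kappa^2t^2+p_+^2}\le C_R t$ for $t\ge1$, and $\tanh R_t\ge\tanh(\kappa t)\ge\tanh\kappa$. Clearly $p_->0$.

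For the distance, \eqref{eq:distance-exact} gives $\dist_F\ge|x_t|/\sqrt2$. By \eqref{eq:coordinates}, $|x_t|=p_t\tanh R_t/R_t\ge p_-\tanh\kappa/(C_Rt)$, which is \eqref{eq:distance-lower}.

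For the entropy, drop Bob's nonnegative term and the nonnegative remainder $\log(1+e^{-2R_t})$ in \eqref{eq:alice-entropy}. What remains is $S\ge p_t^2/(R_t+\kappa t)\ge p_-^2/((C_R+\kappa)t)$, which is \eqref{eq:entropy-lower}.

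For the gap, \eqref{eq:gap-exact} gives $G\ge(1-z_t)/3$, since $d_t\ge0$ and the square-root term is nonnegative. In \eqref{eq:z-error-exact} the second term is nonnegative, so $1-z_t\ge p_t^2/(R_t(R_t+\kappa t))\ge p_-^2/(C_R(C_R+\kappa)t^2)$, which is \eqref{eq:gap-lower}.

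The only step needing care is the bound $p_t\le p_+$, specifically handling the optimistic term $d_{t-1}$ at $t=1$ (where $d_0=1$), and checking that the geometric-series bound $d_s\le2e^{-2\kappa s}$ from \eqref{eq:bob-exponential} indeed gives $(\kappa/2)\sum_{s\ge1}d_s\le\kappa/(e^{2\kappa}-1)$. The rest is direct substitution into the exact identities.
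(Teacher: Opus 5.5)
Your proposal is correct and follows the paper's proof essentially line by line. You use the same two-sided bounds $p_-\le p_t\le p_+$ and $\kappa t\le R_t\le C_R t$, followed by direct substitution into \eqref{eq:coordinates}, \eqref{eq:distance-exact}, \eqref{eq:gap-exact}, \eqref{eq:z-error-exact}, and \eqref{eq:alice-entropy}--\eqref{eq:bob-entropy}. The one fact you leave unjustified, $d_s\le1$ for $s\ge1$, follows from $y_s=\tanh q_s\ge0$ since $q_s\ge\kappa s$, exactly as the paper notes.
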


\begin{proof}
For $t=1$, $p_1=\kappa/2\geq p_-$. For $t\geq2$, the sum in
\eqref{eq:score-coefficients} contains $d_1$, so $p_t\geq\kappa d_1/2=p_-$.
Furthermore, $0\leq d_s\leq1$ for $s\geq0$, by initialization and
$q_s\geq\kappa s$. Together with~\eqref{eq:bob-exponential}, this
gives
\begin{equation}
p_t\leq\frac{\kappa}{2}
 \left(1+\sum_{s=1}^{\infty}d_s\right)\leq p_+,
\qquad
\kappa t\leq R_t\leq C_Rt.
\label{eq:finite-p-bounds}
\end{equation}
Equations~\eqref{eq:coordinates} and~\eqref{eq:distance-exact}
therefore imply
\begin{equation*}
\dist_F(\joint_t,\jointstar)\ge\frac{|x_t|}{\sqrt2}
\ge\frac{p_-\tanh\kappa}{\sqrt2\,C_Rt}.
\end{equation*}
All terms in~\eqref{eq:gap-exact} are nonnegative, and
\eqref{eq:z-error-exact} gives
\begin{equation*}
1-z_t\ge\frac{p_t^2}{R_t(R_t+\kappa t)}
\ge\frac{p_-^2}{C_R(C_R+\kappa)t^2}.
\end{equation*}
This proves~\eqref{eq:gap-lower}.
Finally, dropping the nonnegative logarithmic terms in
\eqref{eq:alice-entropy}--\eqref{eq:bob-entropy} yields
\begin{equation*}
S(\jointstar\|\joint_t)\ge\frac{p_t^2}{R_t+\kappa t}
\ge\frac{p_-^2}{(C_R+\kappa)t},
\end{equation*}
which completes the proof.
\end{proof}

\begin{corollary}[Failure of logarithmic iteration complexity]
\label{cor:complexity}
For every fixed $\eta>0$, the hitting times of the trajectory in
Theorem~\ref{thm:rates} satisfy, as $\varepsilon\downarrow0$,
\begin{equation}
\tau_{\rm dist}(\varepsilon;\joint_0)=\Theta(\varepsilon^{-1}),\qquad
\tau_{\rm gap}(\varepsilon;\joint_0)=\Theta(\varepsilon^{-1/2}),\qquad
\tau_S(\varepsilon;\joint_0)=\Theta(\varepsilon^{-1}).
\label{eq:complexity}
\end{equation}
None of the three errors $E_t$ in Theorem~\ref{thm:rates} admits
a geometric estimate $E_t\leq C(1-\gamma)^t$ for all
sufficiently large $t$, with $C>0$ and $\gamma\in(0,1)$, even when
these constants depend on the game and the step size.
\end{corollary}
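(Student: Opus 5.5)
The plan is to convert the two-sided asymptotics of Theorem~\ref{thm:rates} and the every-iteration lower bounds of Corollary~\ref{cor:finite-time} into bounds on the hitting times. Write $E_t$ for any of the three errors and $E_t\sim c/t^k$ for its leading term from \eqref{eq:distance-limit}--\eqref{eq:gap-limit}, with $k=1$ for the distance and relative entropy and $k=2$ for the gap. The argument does not use monotonicity of $E_t$.

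\textbf{Lower bound on $\tau$.} Corollary~\ref{cor:finite-time} gives $E_t\ge c_-/t^k$ for every $t\ge1$. At $t=0$ the errors are positive. Indeed, $\dist_F(\joint_0,\jointstar)=1/\sqrt2$, $S(\jointstar\|\joint_0)=2\log2$, and $G(\joint_0)>0$ by \eqref{eq:gap-exact} with $d_0=1$. So once $\varepsilon<E_0$, any index $t$ with $E_t\le\varepsilon$ satisfies $t\ge1$ and $c_-/t^k\le\varepsilon$. Hence $\tau\ge(c_-/\varepsilon)^{1/k}$.

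\textbf{Upper bound on $\tau$.} The asymptotic relation gives some $t_0$ such that $E_t\le 2c/t^k$ for all $t\ge t_0$. Take $t=\max\{t_0,\lceil(2c/\varepsilon)^{1/k}\rceil\}$. Then $E_t\le\varepsilon$, so $\tau\le t=O(\varepsilon^{-1/k})$. Combining the two bounds gives $\Theta(\varepsilon^{-1})$ for the distance and entropy, and $\Theta(\varepsilon^{-1/2})$ for the gap, as in \eqref{eq:complexity}.

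\textbf{No geometric estimate.} Suppose $E_t\le C(1-\gamma)^t$ for all large $t$. Combined with $E_t\ge c_-/t^k$, this gives $c_-\le C\,t^k(1-\gamma)^t$. The right-hand side tends to $0$ as $t\to\infty$, which contradicts $c_->0$. This holds for any constants $C,\gamma$, including ones that depend on the game and step size.

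The only delicate point is that $E_t$ need not be monotone. This is exactly why the lower bound on $\tau$ uses the every-iteration bound of Corollary~\ref{cor:finite-time} rather than the asymptotic one, and why the upper bound only needs to exhibit one index with small error.
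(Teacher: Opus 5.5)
Your proposal is correct and follows essentially the same route as the paper. The lower bounds on $\tau$ come from the every-iteration bounds of Corollary~\ref{cor:finite-time} together with positivity of the errors at $t=0$, the matching upper bounds come from the eventual bounds implied by the limits in Theorem~\ref{thm:rates}, and the geometric estimate is excluded by $t^k(1-\gamma)^t\to0$. One harmless slip: the initial joint distance is $1$, since each player contributes $\norm{I/2-P}_F^2=1/2$; the value $1/\sqrt2$ is a single player's contribution, and the argument only uses positivity.
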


\begin{proof}
The all-iteration bounds of Corollary~\ref{cor:finite-time} give the
corresponding lower bounds on the first iteration at which each error
is at most $\varepsilon$. All three errors are positive at $t=0$ as well:
the distance is $1$, the gap is $(1+\sqrt5/2)/3$, and the relative
entropy is $2\log2$. Thus including $t=0$ does not affect the
small-$\varepsilon$ bounds. The positive finite limits in
Theorem~\ref{thm:rates} give matching eventual upper bounds of the
form $E_t\leq C / t^{a}$, with $a=1,2,1$, respectively, and hence the
matching upper bounds on the hitting times. Finally,
$t^a(1-\gamma)^t\to0$ for every fixed $a>0$ and
$\gamma\in(0,1)$, contradicting the positive polynomial lower bounds
if a geometric estimate were to hold.
\end{proof}

\begin{remark}[The role of noncommutativity and the scope of the result]
The smaller eigenvalue of Alice's state is
$\lambda_{\min}(\alpha_t)=1/(1+e^{2R_t})\leq e^{-2\kappa t}$.
Nevertheless, in the fixed basis of~\eqref{eq:pauli},
$(\alpha_t)_{01}=-p_t\tanh R_t/(2R_t)
\sim-p_\infty/(2\kappa t)$. This entry measures part of the deviation
from $P$ and proves that eigenvalue concentration alone cannot yield
geometric convergence of the state. In fact, the two conditional
operators $I-Z$ and $-I-Z+X$ used in
Lemma~\ref{lem:equilibrium} have nonzero commutator $[-Z,X]$, so no
fixed basis on Alice's register diagonalizes them simultaneously.

The game is independent of both $\eta$ and the target accuracy, and
the argument holds for every fixed positive step size, including
arbitrarily small ones. The result concerns the unregularized dynamics
\eqref{eq:played-update}--\eqref{eq:aux-update} and the Nash target.
It gives $\Theta(\varepsilon^{-1/2})$ iterations for the duality gap,
even though distance and relative entropy require
$\Theta(\varepsilon^{-1})$ iterations. Consequently, this last-iterate
calculation does not establish an $\Omega(\varepsilon^{-1})$ gap
lower bound or a lower bound for the averaged output in
\cite{VasconcelosEtAl2025}.
\end{remark}

\FloatBarrier
\section{Experiments}
\label{sec:experiments}

\subsection{Numerical illustration of Theorem~\ref{thm:ogda-trajectory-lower-bound}}
\label{subsec:ogda-experiment}

To examine the delay in Theorem~\ref{thm:ogda-trajectory-lower-bound},
we run OGDA~\eqref{eq:ogda-updates} on the fixed game
$U_{\rm G}=Q\otimes Q$ with step size $\eta=1/8$ and
$r\in\{10^{-1},10^{-2},10^{-3},10^{-4}\}$.
Both the played and auxiliary pairs start at $(\rho_r,P)$, with
$\rho_r$ as in~\eqref{eq:ogda-initial-family}. Each run performs
$\lceil6/(\eta\sqrt r)\rceil$ updates without an accuracy-based
stopping test. We record the played iterates and compute
$G(\joint_t)=a_t$ and
$\dist_F(\joint_t,\ZZ^*)=\norm{\alpha_t-P}_F$.

\begin{figure}[!htb]
\centering
\includegraphics[width=0.9\linewidth]{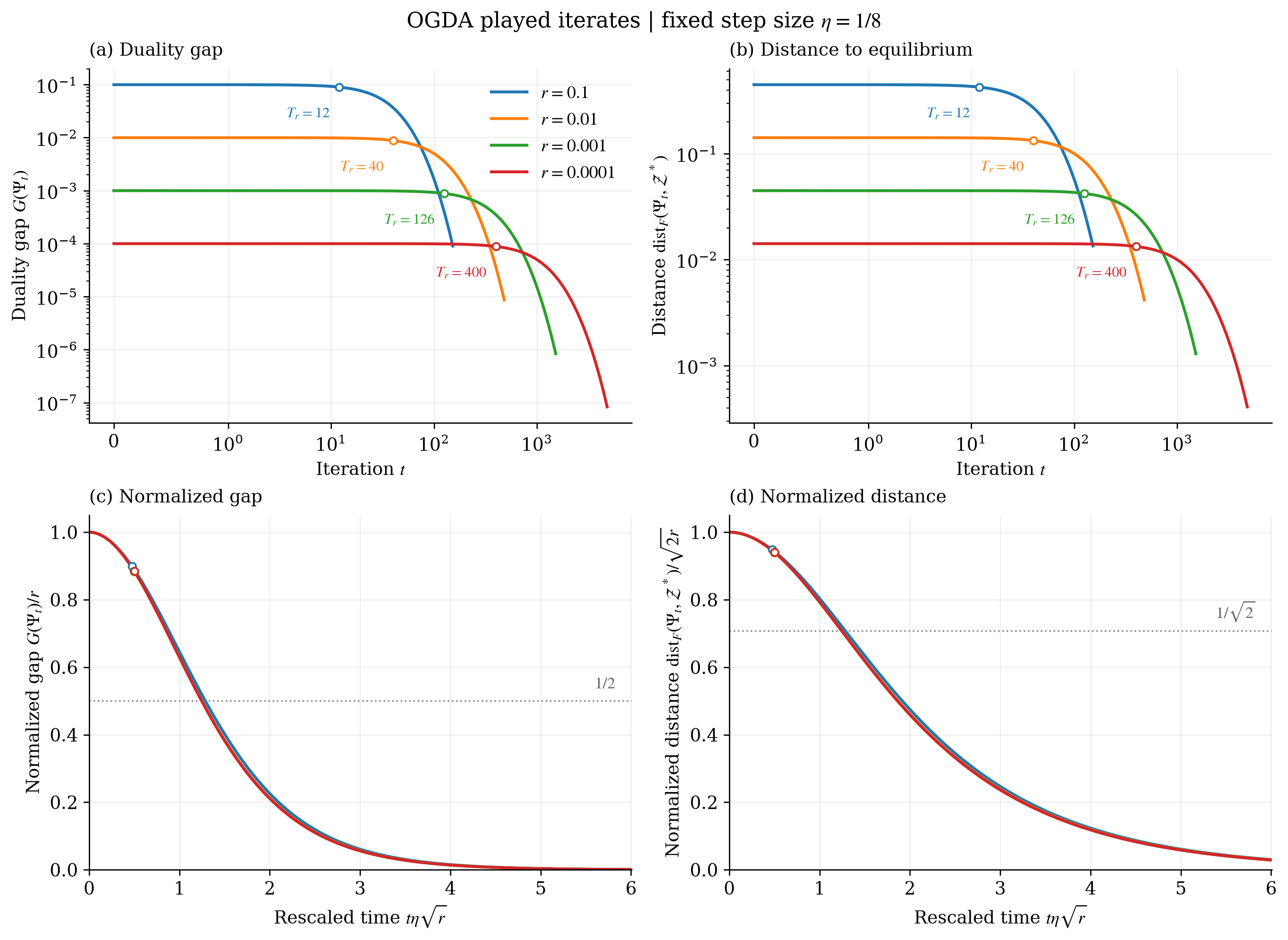}
\caption{OGDA on $U_{\rm G}=Q\otimes Q$ from $(\rho_r,P)$ with
$\eta=1/8$. Top: gap and distance versus iteration. Bottom: errors
normalized by their initial values versus $t\eta\sqrt r$.
Open circles mark $T_r$; dotted lines mark the normalized thresholds
$1/2$ and $1/\sqrt2$ from~\eqref{eq:ogda-delay-bound}. The top panels use logarithmic vertical axes and symlog horizontal axes to include \(t=0\).}
\label{fig:ogda-delay-experiment}
\end{figure}
\FloatBarrier

Figure~\ref{fig:ogda-delay-experiment}(a)--(b) shows increasingly
long initial plateaus as $r$ decreases, even though the initial
gap $r$ and distance $\sqrt{2r}$ become smaller.
For the four values of $r$ in decreasing order, the theorem horizons
are $T_r=12,40,126,400$. In every run, the gap remains above $r/2$
and the distance above $\sqrt r$ for all $0\le t\le T_r$,
in agreement with~\eqref{eq:ogda-delay-bound}.
The first iterations at which the gap is at most $r/2$ are
$33,100,316,998$, respectively, so the theorem's horizons are
conservative lower bounds on the observed delay.

After normalization and time rescaling, the curves nearly coincide
in Figure~\ref{fig:ogda-delay-experiment}(c)--(d).
The rescaled half-gap hitting times
$\eta\sqrt r\,\tau_{\rm gap}(r/2;(\rho_r,P))$ are approximately
$1.30,1.25,1.25,1.25$. These observations support a constant-factor
reduction time of order $1/(\eta\sqrt r)$ over the tested range,
consistent with Theorem~\ref{thm:ogda-trajectory-lower-bound}.
The later decrease is also compatible with the geometric bound
for each fixed initialization in
Proposition~\ref{prop:ogda-fixed-start-geometric}: the experiment
illustrates a delay that grows across initial states, without
asserting a polynomial asymptotic rate for any one of these runs.

\paragraph{Control from the maximally mixed state.}
To examine the effect of maximally mixed initialization, we run OGDA on the same game with the same step size, starting both pairs at \((I/2,I/2)\).
We keep the game $U_{\rm G}=Q\otimes Q$, step size $\eta=1/8$,
and OGDA updates unchanged, and add a run initialized at
$\widehat\joint_0=\joint_0=(I/2,I/2)$.
The control run performs $4800$ updates, including after either
error reaches zero. Figure~\ref{fig:ogda-initialization-control}
compares its first $32$ iterations with the same four coherent
starts $(\rho_r,P)$ used above. Each gap and distance is divided
by its own initial value, so the plots compare relative progress
on a common iteration axis.

\begin{figure}[!htb]
\centering
\includegraphics[width=\linewidth]{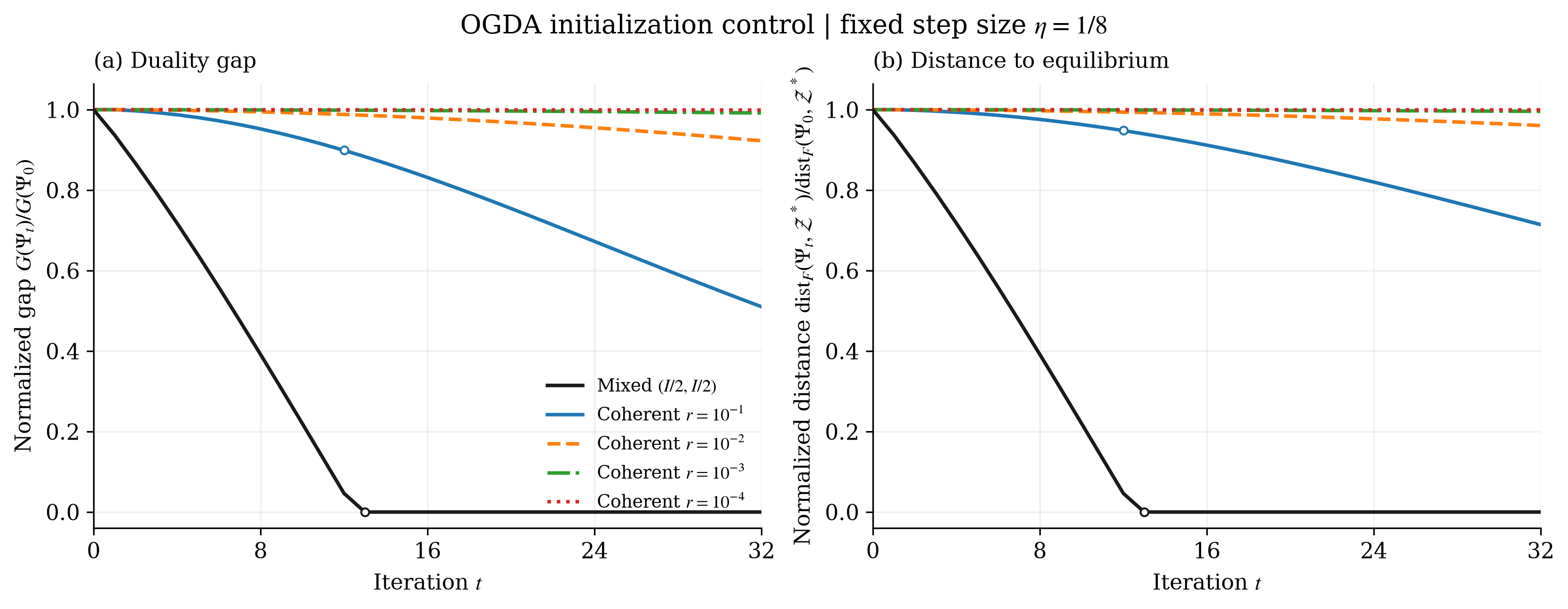}
\caption{Initialization control for OGDA on $U_{\rm G}=Q\otimes Q$
with $\eta=1/8$. Normalized gap (left) and distance (right) from
the maximally mixed state (black) and coherent starts (colored).
Open circles mark the mixed run's first zero at $t=13$ and the
coherent horizon $T_{10^{-1}}=12$; the other horizons exceed
the displayed window.}
\label{fig:ogda-initialization-control}
\end{figure}
\FloatBarrier

The mixed run first reaches $\alpha_{13}=P$, with both errors
equal to zero, and stays in the equilibrium set for the remainder
of the recorded trajectory. This agrees with
Proposition~\ref{prop:ogda-mixed-initialization}, which guarantees
termination by $\lceil2/\eta\rceil=16$.
Along this diagonal trajectory,
$\dist_F(\joint_t,\ZZ^*)=\sqrt2\,G(\joint_t)$, so the two
normalized black curves coincide. All four coherent runs retain
more than half their initial gap throughout the displayed window;
for smaller $r$, both normalized errors remain close to one.
The comparison illustrates the initialization dependence in
Theorem~\ref{thm:ogda-trajectory-lower-bound}: the growing delay
along the coherent family coexists with finite termination from
the maximally mixed state on the same game.

\subsection{Numerical illustration of Theorem~\ref{thm:ogda-mm-rates}}
\label{subsec:ogda-mm-experiment}

To illustrate Theorem~\ref{thm:ogda-mm-rates}, we run
OGDA~\eqref{eq:ogda-updates} on the fixed game~\eqref{eq:ogda-mm-game},
initializing both the played and auxiliary pairs at $(I/2,I/2)$.
We fix $\eta=1/8$ and perform $10^5$ updates without an
accuracy-based stopping test. At each played iterate, we measure
$\dist_F(\joint_t,\jointstar)$ and $G(\joint_t)$ relative to the
unique equilibrium $\jointstar=(P,P)$.
Both projections are implemented through the exact
Bloch-coordinate recurrence~\eqref{eq:ogda-mm-disk-updates},
including the initial phase inside the Bloch ball.
Writing $d_t=\sqrt{x_t^2+(1-z_t)^2}$ as above, we evaluate
the gap as $(1-z_t)^2/[2(d_t+x_t)]$ to avoid cancellation
in $(d_t-x_t)/2$. On the upper boundary, we also use
$1-z_t=x_t^2/(1+z_t)$. These are exact algebraic identities;
the asymptotic formulas are used only for comparison.

\begin{figure}[!htb]
\centering
\includegraphics[width=\linewidth]{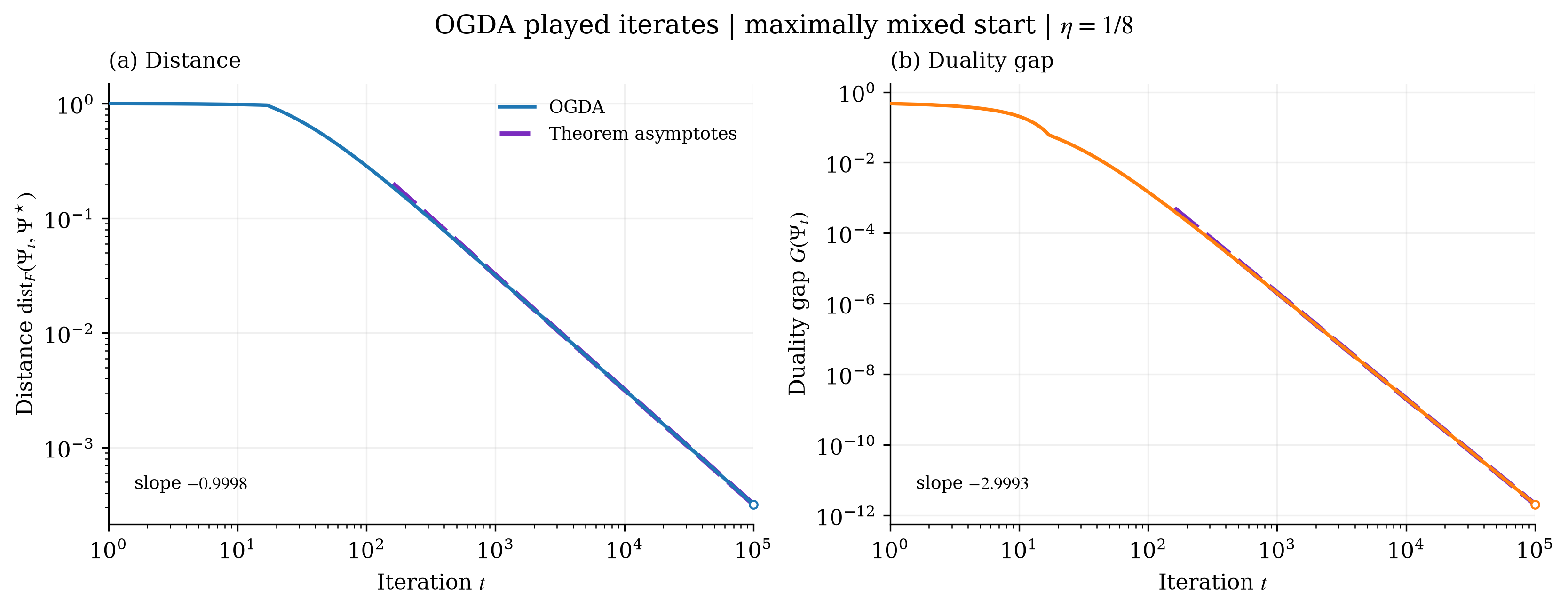}
\caption{OGDA from $(I/2,I/2)$ on $U_{\rm G}^{\rm mix}$ with
$\eta=1/8$. Played-iterate distance (left) and gap (right) on logarithmic
axes, with dashed asymptotes $32/t$ and $2048/t^3$ from
Theorem~\ref{thm:ogda-mm-rates}. Annotated slopes are fitted over
$10^4\le t\le10^5$. Open circles mark the final recorded iterate.}
\label{fig:ogda-mm-polynomial-experiment}
\end{figure}

After an initial transient, the distance and gap curves in
Figure~\ref{fig:ogda-mm-polynomial-experiment}(a)--(b) follow
their predicted asymptotes. Least-squares fits in logarithmic
coordinates at $400$ logarithmically spaced integer times between
$10^4$ and $10^5$ give slopes $-0.9998$ and $-2.9993$,
respectively, consistent with the $1/t$ distance and $1/t^{3}$
gap decay. The dashed curves use the theorem's constants directly,
independently of these fits.

\begingroup
\setlength{\emergencystretch}{1em}
The final errors also agree with the leading constants.
At \(t=10^5\), the rescaled errors are
\[
t\,\operatorname{dist}_F(\Psi_t,\Psi^\star)\approx31.99873646,
\qquad
t^3G(\Psi_t)\approx2047.75742248.
\]Their relative discrepancies from the predicted limits \(32\) and \(2048\), computed before rounding, are approximately \(3.95\times10^{-5}\) and \(1.18\times10^{-4}\), respectively.
 Both errors remain positive throughout
the recorded trajectory. These observations support polynomial
convergence for a single fixed game and initialization, with
the duality gap decaying faster than the distance.
\par\endgroup
\FloatBarrier

\subsection{Numerical illustration of Theorem~\ref{thm:rates}}
\label{subsec:ommwu-experiment}

To illustrate the rates in Theorem~\ref{thm:rates}, we run
unregularized OMMWU~\eqref{eq:played-update}--\eqref{eq:aux-update}
on the fixed game~\eqref{eq:observable}, initializing both the played
and auxiliary pairs at $(I/2,I/2)$. We fix $\eta=1/8$ and perform
$10^5$ updates without an accuracy-based stopping test.
At each played iterate, we measure
$\dist_F(\joint_t,\jointstar)$, $G(\joint_t)$, and
$S(\jointstar\|\joint_t)$ relative to the unique equilibrium
$\jointstar=(P,P)$. Both updates are implemented in score
coordinates, and the exact error formulas above are evaluated
in algebraically equivalent forms that avoid numerical cancellation
as the states approach rank one.

\begin{figure}[!htb]
\centering
\includegraphics[width=\linewidth]{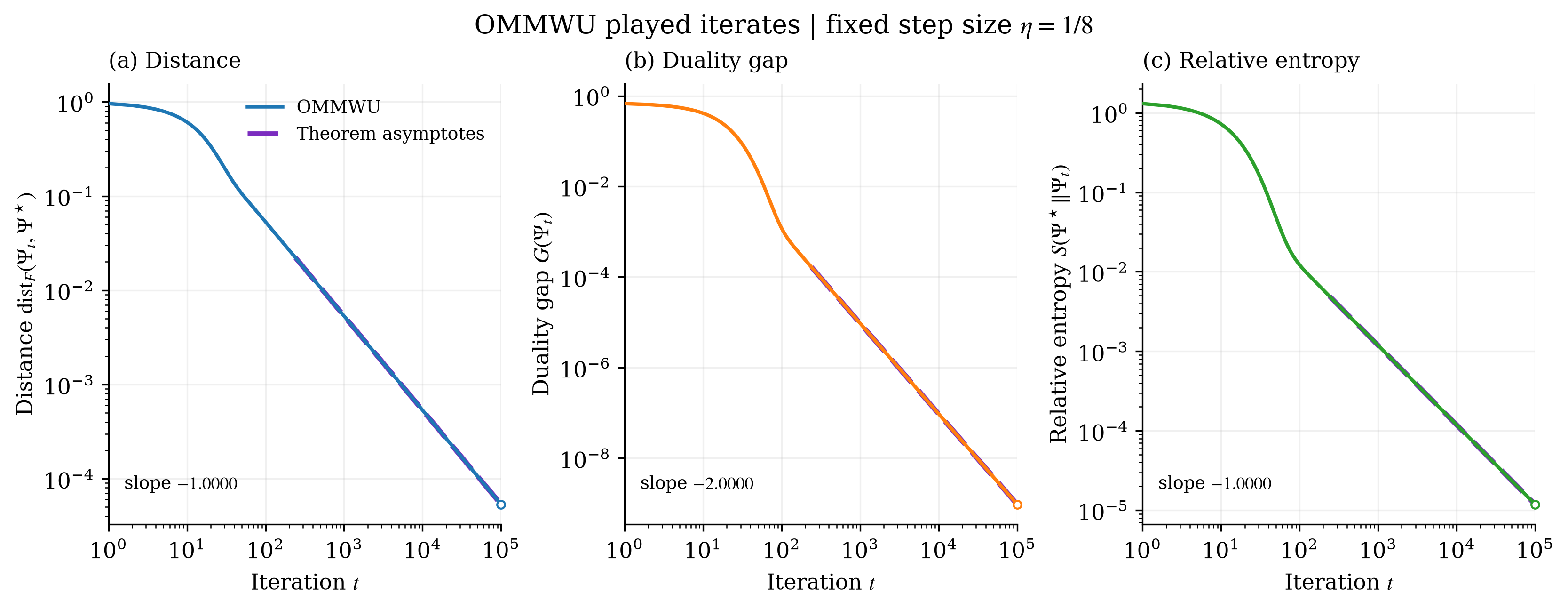}
\caption{OMMWU from $(I/2,I/2)$ on $U_{\rm M}$ with $\eta=1/8$.
Solid curves show the played-iterate errors on logarithmic axes;
dashed curves show the asymptotes from Theorem~\ref{thm:rates}.
Annotated slopes are fitted over $10^4\le t\le10^5$.}
\label{fig:ommwu-polynomial-experiment}
\end{figure}
\FloatBarrier

After an initial transient, all three curves in
Figure~\ref{fig:ommwu-polynomial-experiment} follow straight lines
on the logarithmic axes. Least-squares fits in logarithmic coordinates
at $400$ logarithmically spaced integer times between $10^4$ and
$10^5$ give slopes $-1$, $-2$, and $-1$, respectively,
to four decimal places. These agree with the $1/t$ distance,
$1/t^{2}$ gap, and $1/t$ relative-entropy decay in
Theorem~\ref{thm:rates}.

The reference curves also test the leading constants.
Truncating the series in~\eqref{eq:positive-limit} at the recorded
horizon gives $p_\infty\approx0.31455$. With \(\kappa=1/24\), the predicted limits of \(t\,\operatorname{dist}_F(\Psi_t,\Psi^*)\), \(t^2G(\Psi_t)\), and \(t\,S(\Psi^*\|\Psi_t)\) are approximately \(5.3381\), \(9.4985\), and \(1.1873\), respectively.
These series-based values determine the dashed curves independently
of the slope fits. At $t=10^5$, the rescaled errors
$t\,\dist_F(\joint_t,\jointstar)$, $t^2G(\joint_t)$, and
$t\,S(\jointstar\|\joint_t)$ agree with the unrounded reference
values to within $5\times10^{-9}$ in relative error.
Thus the experiment supports both the exponents and the leading
constants predicted for this single fixed trajectory, with distance
and relative entropy decaying more slowly than the duality gap.

\FloatBarrier

\section{Conclusion and Discussion}
\label{sec:conclusion-discussion}

We established explicit lower bounds for optimistic matrix mirror-prox
in quantum zero-sum games with one qubit per player. For the uniform
average that includes the maximally mixed initial iterate, our
construction gives an $\Omega(1/\varepsilon)$ iteration lower bound
for attaining duality gap at most $\varepsilon$, independently of the
regularizer and step size. For OGDA, we exhibited arbitrarily long
delays across initializations and a fixed trajectory from maximally
mixed states with distance $\Theta(1/t)$ and gap $\Theta(1/t^{3})$.
For OMMWU, we constructed a game with a unique, strictly complementary
equilibrium for which distance and quantum relative entropy decay as
$\Theta(1/t)$, while the gap decays as $\Theta(1/t^{2})$.
These results show that uniqueness and strict complementarity do not
ensure geometric convergence of unregularized OMMWU, and distinguish
the rates associated with different error measures and output rules.

All our constructions share a boundary feature: every Nash equilibrium
has at least one singular strategy. In particular, none admits a
\emph{full-rank Nash equilibrium}, meaning an equilibrium
$\jointstar=(\alpha^*,\beta^*)$ with
$\alpha^*\succ0$ and $\beta^*\succ0$.
Our lower bounds therefore do not rule out faster last-iterate
convergence in games admitting such an equilibrium, possibly under
additional regularity assumptions. Equivalently, this condition asks
that both players' sets of equilibrium strategies contain full-rank
states, which can be paired to form an equilibrium in a zero-sum game.
Understanding how convergence rates depend on equilibrium rank and
the smallest positive eigenvalues is a natural direction for further
study.

This limitation requires distinguishing equilibrium strategies from
best responses to a fixed opponent. Because the payoff is linear in
each player's state, a full-rank exact best response exists only when
the corresponding payoff operator is a scalar multiple of the
identity; in that case, every state is a best response.
Consequently, at any non-equilibrium profile, at least one player's
best-response set consists entirely of singular states. This property
holds along the non-equilibrium trajectories in our constructions,
but does not by itself characterize slow convergence.
Indeed, in Theorem~\ref{thm:ogda-mm-rates}, we have $A(P)=B(P)=0$, so both
best-response sets at $(P,P)$ equal $\DD_2$ and contain full-rank
states. Nevertheless, the unique equilibrium is the singular pair
$(P,P)$, and convergence is polynomial. Thus the existence of
full-rank elements in both best-response sets at equilibrium is
insufficient to exclude our lower bounds. The full-rank equilibrium
condition above is stronger.

Finally, faster last-iterate convergence should be distinguished from
faster convergence of a uniform average. Our average-iterate
construction isolates the effect of retaining the initial error with
weight $1/T$, which persists even if all subsequent iterates are
already equilibria. Alternative averaging rules and structural
conditions for fast last-iterate convergence therefore address
different limitations of the methods studied here.
\printbibliography
\end{document}

%% file: mancro.tex
\usepackage[utf8]{inputenc}
\usepackage[T1]{fontenc}
\usepackage[margin=1in]{geometry}
\usepackage{mathpazo}
\usepackage{amsmath,amsfonts,amssymb,amsthm,mathtools}
\usepackage{microtype}
\usepackage{xcolor}

\usepackage{graphicx}
\usepackage{tikz}
\usetikzlibrary{arrows.meta,calc}
\usepackage[font=small,labelfont=bf]{caption}
\usepackage{placeins}
\usepackage{enumitem}
\usepackage{etoolbox}
\definecolor{constructionblue}{RGB}{31,87,130}
\definecolor{constructionorange}{RGB}{178,94,35}

\definecolor{niceRed}{RGB}{190,38,38}
\definecolor{niceYellow}{HTML}{f5b400}
\definecolor{blueGrotto}{HTML}{059DC0}
\definecolor{royalBlue}{HTML}{057DCD}
\definecolor{navyBlue}{HTML}{0B579C}
\definecolor{yaleBlue}{HTML}{00356b}
\definecolor{limeGreen}{HTML}{81B622}
\definecolor{nicePurple}{HTML}{9c27b0}
\definecolor{lightRoyalBlue}{HTML}{def2ff}
\definecolor{gold}{HTML}{ffa300}

\usepackage{hyperref}
\hypersetup{
  colorlinks=true,
  urlcolor=blueGrotto,
  linkcolor=royalBlue,
  citecolor=navyBlue
}

\makeatletter
\renewcommand\paragraph{\@startsection{paragraph}{4}{\z@}%
  {\smallskipamount}%
  {-1em}%
  {\normalfont\normalsize\bfseries}}
\makeatother

\AtBeginDocument{%
  \AfterEndEnvironment{theorem}{\noindent\ignorespaces}%
  \AfterEndEnvironment{lemma}{\noindent\ignorespaces}%
  \AfterEndEnvironment{proposition}{\noindent\ignorespaces}%
  \AfterEndEnvironment{corollary}{\noindent\ignorespaces}%
  \AfterEndEnvironment{remark}{\noindent\ignorespaces}%
  \AfterEndEnvironment{proof}{\noindent\ignorespaces}%
}

\usepackage{soul}
\setul{2pt}{0.4pt}
\usepackage[framemethod=TikZ]{mdframed}
\mdfsetup{
  backgroundcolor=royalBlue!0,
  roundcorner=4pt,
  skipabove=5pt,
  skipbelow=5pt,
  linewidth=1pt
}

\usepackage[
  backref=true,
  backend=biber,
  natbib=true,
  style=alphabetic,
  sorting=alphabeticlabel,
  sortcites=true,
  minbibnames=3,
  maxbibnames=999,
  mincitenames=4,
  maxcitenames=4,
  minalphanames=4,
  maxalphanames=4,
  doi=false,
  isbn=false
]{biblatex}
\DeclareSortingTemplate{alphabeticlabel}{
  \sort[final]{\field{labelalpha}}
  \sort{\field{year}}
  \sort{\field{title}}
}
\AtEveryBibitem{\clearname{editor}\clearlist{location}}

\AtBeginRefsection{\GenRefcontextData{sorting=ynt}}
\AtEveryCite{\localrefcontext[sorting=ynt]}

\usepackage[capitalise,noabbrev,nameinlink,sort]{cleveref}
\crefname{section}{Section}{Sections}
\crefname{theorem}{Theorem}{Theorems}
\crefname{lemma}{Lemma}{Lemmas}
\crefname{proposition}{Proposition}{Propositions}
\crefname{corollary}{Corollary}{Corollaries}
\crefname{remark}{Remark}{Remarks}
\crefname{equation}{Equation}{Equations}
\crefname{appendix}{Section}{Sections}
\crefname{figure}{Figure}{Figures}